\documentclass[11pt,a4paper]{article}

\usepackage{amsmath,amssymb,amsthm}
\usepackage{amsfonts}
\usepackage{mathtools}
\usepackage{booktabs}
\usepackage{multirow}
\usepackage{enumitem}
\usepackage{array}
\usepackage[margin=2.5cm]{geometry}
\usepackage[colorlinks=true,linkcolor=blue,citecolor=red,urlcolor=blue]{hyperref}
\allowdisplaybreaks

\newtheorem{theorem}{Theorem}[section]
\newtheorem{corollary}[theorem]{Corollary}
\newtheorem{lemma}[theorem]{Lemma}
\newtheorem{proposition}[theorem]{Proposition}
\theoremstyle{definition}
\newtheorem{definition}[theorem]{Definition}
\newtheorem{remark}[theorem]{Remark}
\newtheorem{example}[theorem]{Example}

\newcommand{\Z}{\mathbb{Z}}
\newcommand{\C}{{\mathcal C}}
\newcommand{\cH}{{\mathcal H}}
\newcommand{\cA}{{\mathcal A}}
\newcommand{\cB}{{\mathcal B}}
\newcommand{\cM}{{\mathcal M}}
\newcommand{\cR}{{\mathcal R}}

\newcommand{\cV}{{\mathcal V}}
\newcommand{\zero}{{\mathbf{0}}}
\newcommand{\one}{{\mathbf{1}}}
\newcommand{\two}{{\mathbf{2}}}
\newcommand{\four}{{\mathbf{4}}}
\newcommand{\six}{{\mathbf{6}}}
\newcommand{\uu}{\mathbf{u}}
\newcommand{\vv}{\mathbf{v}}
\newcommand{\ww}{\mathbf{w}}
\newcommand{\xx}{\mathbf{x}}
\newcommand{\yy}{\mathbf{y}}
\newcommand{\zz}{\mathbf{z}}
\newcommand{\wt}{{\rm wt}}
\newcommand{\rank}{\operatorname{rank}}
\newcommand{\kernel}{\operatorname{ker}}
\newcommand{\ord}{\operatorname{o}}
\newcommand{\K}{\operatorname{K}}
\newcommand{\Span}{\operatorname{span}}
\newcommand{\sym}{\operatorname{sym}}
\newcommand{\lab}{\operatorname{lab}}

\begin{document}

\title{\bf The kernel-block rank profile and a complete classification of
$\mathbb{Z}_2\mathbb{Z}_4\mathbb{Z}_8$-linear Hadamard codes\thanks{This work has been granted by the Juan de la Cierva 2024 grant (JDC2024-053082-I), funded by Ministerio de Ciencia, Innovación y Universidades (MICIU/AEI/10.13039/501100011033) and co-funded by the European Social Fund Plus (FSE+).}}

\author{
Dipak K. Bhunia\\
\small Departament de Matem\`atiques,\\ \small Universitat Polit\`ecnica de Catalunya,\\
\small Barcelona, Spain
}
\date{}

\maketitle

\begin{abstract}
The $\mathbb{Z}_2\mathbb{Z}_4\mathbb{Z}_8$-additive codes are subgroups of
$\mathbb{Z}_2^{\alpha_1}\times\mathbb{Z}_4^{\alpha_2}\times\mathbb{Z}_8^{\alpha_3}$, and a
$\mathbb{Z}_2\mathbb{Z}_4\mathbb{Z}_8$-linear Hadamard code is a Hadamard code which is the Gray map image of such a code.  A recursive construction of $\mathbb{Z}_2\mathbb{Z}_4\mathbb{Z}_8$-additive Hadamard codes $\mathcal H^{t_1,t_2,t_3}$ of type $(\alpha_1,\alpha_2,\alpha_3;t_1,t_2,t_3)$, with $\alpha_1\neq0$, $\alpha_2\neq0$, $\alpha_3\neq0$, $t_1\geq1$, $t_2\geq0$, and $t_3\geq1$, is known, as are the linearity, the dimension of the kernel, and the rank of the corresponding
$\mathbb{Z}_2\mathbb{Z}_4\mathbb{Z}_8$-linear Hadamard codes $H^{t_1,t_2,t_3}$ of length $2^t$, where $t+1=3t_1+2t_2+t_3$. Yet these invariants do not completely classify the family. Two infinite families of pairs of codes of distinct types share the length, the rank and
the dimension of the kernel, and for the previously classified lengths
$2^t$, with $3\leq t\leq11$, members of such pairs were separated only by computer equivalence tests.

In this paper, we introduce an equivalence invariant that resolves these
cases.  The kernel of a binary code containing the zero word partitions the
set of binary coordinates into blocks, two coordinates lying in the same block
when every kernel word takes the same value in both of them; the
\emph{kernel-block rank profile} is then the multiset of the dimensions of the
linear span punctured on those blocks.  Unlike the rank and the dimension of
the kernel, which are global, this invariant records how much of the span
survives on each of the canonical blocks into which the kernel divides the
coordinate set.  We compute it for the whole family: the kernel partition has
$2^{t_1+t_2+t_3-1}$ blocks, all of size $2^{2t_1+t_2}$, and the profile takes
at most the two values $t_2+\binom{t_1+2}{2}$ and $t_2+2+\binom{t_1+1}{2}$,
whose difference is exactly $t_1-1$; it is constant precisely when $t_1=1$.
Hence, the profile recovers $t_1$, and the length together with the dimension of the
kernel then recovers $t_2$ and $t_3$.  Two codes of the family with the same
length are therefore equivalent if and only if their types coincide, and the
number of pairwise nonequivalent such codes of length $2^t$ is
$\lfloor(t^2+6)/12\rfloor$ for every $t\geq3$. 
\end{abstract}

\medskip
\noindent\textbf{Keywords:} Hadamard code, Gray map, $\Z_2\Z_4\Z_8$-linear code, $\Z_2\Z_4\Z_8$-additive code, rank, kernel, kernel-block rank profile,
classification.

\smallskip
\noindent\textbf{2020 Mathematics Subject Classification:} 94B25, 94B60.

%%%%%%%%%%%%%%%%%%%%%%%%%%%%%%%%%%%%%%%%%%%%%%%%%%%%%%%%%%%%%%%%%%%%%%%%%%%%%
\section{Introduction}\label{sec:intro}
%%%%%%%%%%%%%%%%%%%%%%%%%%%%%%%%%%%%%%%%%%%%%%%%%%%%%%%%%%%%%%%%%%%%%%%%%%%%%

Let $\Z_{2^s}$ be the ring of integers modulo $2^s$ with $s\geq1$.  The set of
$n$-tuples over $\Z_{2^s}$ is denoted by $\Z_{2^s}^n$, and in this paper the
elements of $\Z_{2^s}^n$ are also called vectors.  A code over $\Z_2$ of
length $N$ is a nonempty subset of $\Z_2^N$, and it is linear if it is a
subspace of $\Z_2^N$.  Similarly, a nonempty subset of $\Z_{2^s}^n$ is a
$\Z_{2^s}$-additive code if it is a subgroup of the additive group of
$\Z_{2^s}^n$.  A $\Z_2\Z_4\Z_8$-additive code is a subgroup of
$\Z_2^{\alpha_1}\times\Z_4^{\alpha_2}\times\Z_8^{\alpha_3}$.  Note that a
$\Z_2\Z_4\Z_8$-additive code is a linear code over $\Z_2$ when
$\alpha_2=\alpha_3=0$, a $\Z_4$-additive or $\Z_8$-additive code when
$\alpha_1=\alpha_3=0$ or $\alpha_1=\alpha_2=0$, respectively, and a
$\Z_2\Z_4$-additive code when $\alpha_3=0$.  The order of a vector
$\uu\in\Z_{2^s}^n$, denoted by $\ord(\uu)$, is the smallest positive integer
$m$ such that $m\uu=(0,\dots,0)$.  Also, the order of a vector
$\uu\in\Z_2^{\alpha_1}\times\Z_4^{\alpha_2}\times\Z_8^{\alpha_3}$, denoted by
$\ord(\uu)$, is the smallest positive integer $m$ such that
$m\uu=(0,\dots,0\mid 0,\dots,0\mid 0,\dots,0)$.  Throughout the paper, vectors
are written in boldface, and $\zero$ and $\one$ denote the all-zero and the
all-one vector, the length being always clear from the context.

Two binary codes $C_1$ and $C_2$ of length $N$ are said to be equivalent if
there are a vector $\uu\in\Z_2^N$ and a permutation of coordinates $\pi$ such
that $C_2=\{\uu+\pi(\xx):\xx\in C_1\}$.  The Hamming weight of a vector
$\uu\in\Z_2^N$, denoted by $\wt_H(\uu)$, is the number of its nonzero
coordinates, and the Hamming distance $d_H(\uu,\vv)$ of two vectors is the
number of coordinates in which they differ; hence,
$d_H(\uu,\vv)=\wt_H(\uu-\vv)$.  The minimum distance of a code $C$ over $\Z_2$
is $d(C)=\min\{d_H(\uu,\vv):\uu,\vv\in C,\ \uu\not=\vv\}$.

In \cite{Sole}, a Gray map from $\Z_4$ to $\Z_2^2$ is defined as
$\phi(0)=(0,0)$, $\phi(1)=(0,1)$, $\phi(2)=(1,1)$ and $\phi(3)=(1,0)$.  There
exist different generalizations of this Gray map, which go from $\Z_{2^s}$ to
$\Z_2^{2^{s-1}}$ \cite{Carlet,Codes2k,dougherty,Nechaev,Krotov:2007}.  In this
paper, we focus on Carlet's Gray map \cite{Carlet}, from $\Z_{2^s}$ to
$\Z_2^{2^{s-1}}$, which is a particular case of the one given in
\cite{Krotov:2007,ShiKrotov2019} satisfying
$\sum\lambda_i\phi_s(2^i)=\phi_s(\sum\lambda_i2^i)$ \cite{KernelZ2s}.
Specifically,
\begin{equation}\label{eq:Carlet}
\phi_s(u)=(u_{s-1},u_{s-1},\dots,u_{s-1})+(u_0,\dots,u_{s-2})Y_{s-1},
\end{equation}
where $u\in\Z_{2^s}$; $[u_0,u_1,\dots,u_{s-1}]_2$ is the binary expansion of
$u$, that is, $u=\sum_{i=0}^{s-1}u_i2^i$ with $u_i\in\{0,1\}$; and $Y_{s-1}$
is a matrix of size $(s-1)\times2^{s-1}$ whose columns are all the vectors of
$\Z_2^{s-1}$, ordered in ascending order by reading them as binary expansions
of the elements of $\Z_{2^{s-1}}$.  Note that $\phi_1$ is the identity map and
that $\phi_2$ is the Gray map $\phi$ given above.  We define
$\Phi_s:\Z_{2^s}^n\rightarrow\Z_2^{n2^{s-1}}$ as the component-wise extended
map of $\phi_s$, and a Gray map $\Phi$ from
$\Z_2^{\alpha_1}\times\Z_4^{\alpha_2}\times\Z_8^{\alpha_3}$ to $\Z_2^N$, where
$N=\alpha_1+2\alpha_2+4\alpha_3$, by
$$
\Phi(\uu_1\mid \uu_2\mid \uu_3)=(\uu_1,\Phi_2(\uu_2),\Phi_3(\uu_3)),
$$
for any $\uu_i\in\Z_{2^i}^{\alpha_i}$, where $1\leq i\leq3$.

Let $\C\subseteq\Z_{2^s}^n$ be a $\Z_{2^s}$-additive code of length $n$.  We
say that the Gray map image of $\C$, say $C=\Phi_s(\C)$, is a
$\Z_{2^s}$-linear code of length $n2^{s-1}$.  Since $\C$ is a subgroup of
$\Z_{2^s}^n$, it is isomorphic to
$\Z_{2^s}^{t_1}\times\Z_{2^{s-1}}^{t_2}\times\dots\times\Z_2^{t_s}$, and we
say that $\C$, or equivalently $C=\Phi_s(\C)$, is of type $(n;t_1,\dots,t_s)$.
Similarly, if
$\C\subseteq\Z_2^{\alpha_1}\times\Z_4^{\alpha_2}\times\Z_8^{\alpha_3}$ is a
$\Z_2\Z_4\Z_8$-additive code, we say that its Gray map image $C=\Phi(\C)$ is a
$\Z_2\Z_4\Z_8$-linear code of length $\alpha_1+2\alpha_2+4\alpha_3$.  Since
$\C$ can be seen as a subgroup of $\Z_8^{\alpha_1+\alpha_2+\alpha_3}$, it is
isomorphic to $\Z_8^{t_1}\times\Z_4^{t_2}\times\Z_2^{t_3}$, and we say that
$\C$, or equivalently $C=\Phi(\C)$, is of type
$(\alpha_1,\alpha_2,\alpha_3;t_1,t_2,t_3)$.  Note that a $\Z_2\Z_4$-linear
code $\C$ \cite{ccsg,BookZ2Z4} can be seen as a $\Z_2\Z_4\Z_8$-linear code of
type $(\alpha_1,\alpha_2,0;0,t_2,t_3)$.  In this case, we also say that the
type of $\C$ is directly $(\alpha_1,\alpha_2;t_2,t_3)$.  Unlike linear codes
over finite fields, linear codes over rings do not have a basis, but there
exists a generator matrix for these codes having minimum number of rows.  If
$\C$ is a $\Z_2\Z_4\Z_8$-additive code of type
$(\alpha_1,\alpha_2,\alpha_3;t_1,t_2,t_3)$, then $|\C|=8^{t_1}4^{t_2}2^{t_3}$
and there exists a generator matrix with $t_1+t_2+t_3$ rows.

A binary code of length $N$, $2N$ codewords and minimum distance $N/2$ is
called a Hadamard code.  Hadamard codes can be constructed from Hadamard
matrices \cite{Key,WMcwill}.  Note that linear Hadamard codes are in fact
first order Reed--Muller codes, or equivalently, the dual of extended Hamming
codes \cite{WMcwill}.  It is also important to note that Hadamard codes are
two weight codes, which have been widely studied in
\cite{ShiTwoHomWeight,TwoWeightSole}.  Most Hadamard codes are, however,
nonlinear, and their classification is still an open problem.  One fruitful
way of attacking it is to realise some of them as Gray map images of additive
codes over rings or over mixed alphabets and then to decide which of the
resulting codes are equivalent, each such result giving a partial
classification of nonlinear Hadamard codes.  The $\Z_4$ representation of the
Kerdock, Preparata and Goethals codes \cite{Sole} was the starting point of
this point of view, and codes over $\Z_{p^s}$ go back to Blake \cite{Blake}
and Shankar \cite{Shankar}.

From a more practical point of view, since Hadamard codes are optimal and have
a high correction capability, they appear in different aspects related to the
transmission of information, such as in digital communication with satellites
\cite{H07}, in CDMA phones to modulate the transmission of information and
minimize interference with other transmissions \cite{SBT98} and, in general,
in different OCDMA multiple access systems to allow access to multiple users
asynchronously and simultaneously \cite{HYT04}.  Other applications are found
in cryptography \cite{Nyb91} or in information hiding (steganography and
watermarking) \cite{YLL03}.  See \cite{H07} for more applications in other
fields.

Two structural properties of codes over $\Z_2$ are the rank and the dimension
of the kernel.  The rank of a code $C$ over $\Z_2$ is simply the dimension of
the linear span, $\langle C\rangle$, of $C$.  The kernel of a code $C$ over
$\Z_2$ is defined as $\K(C)=\{\xx\in\Z_2^N:\xx+C=C\}$ \cite{BGH83}.  If the
all-zero vector belongs to $C$, then $\K(C)$ is a linear subcode of $C$.  Note
also that if $C$ is linear, then $\K(C)=C=\langle C\rangle$.  We denote the
rank of $C$ as $\rank(C)$ and the dimension of the kernel as $\kernel(C)$.
Both are equivalence invariants, so two codes with different pairs
$(r,k)=(\rank(C),\kernel(C))$ are nonequivalent.  The converse is false in
general, and it is precisely the failure of the converse that makes
classification results difficult.  The purpose of the present paper is
to repair that failure, for one concrete family of Hadamard codes, by means of
a new invariant.

The $\Z_{2^s}$-additive codes such that after the Gray map $\Phi_s$ give
Hadamard codes are called $\Z_{2^s}$-additive Hadamard codes and the
corresponding images are called $\Z_{2^s}$-linear Hadamard codes.  Similarly,
the $\Z_2\Z_4\Z_8$-additive codes such that after the Gray map $\Phi$ give
Hadamard codes are called $\Z_2\Z_4\Z_8$-additive Hadamard codes and the
corresponding images are called $\Z_2\Z_4\Z_8$-linear Hadamard codes.  It is
known that $\Z_4$-linear Hadamard codes, that is, $\Z_2\Z_4$-linear Hadamard
codes with $\alpha_1=0$, and $\Z_2\Z_4$-linear Hadamard codes with
$\alpha_1\not=0$ can be classified by using either the rank or the dimension
of the kernel \cite{Kro:2001:Z4_Had_Perf,PRV06}.  Moreover, in \cite{KV2015},
it is shown that each $\Z_4$-linear Hadamard code is equivalent to a
$\Z_2\Z_4$-linear Hadamard code with $\alpha_1\not=0$.  Later, in
\cite{KernelZ2s,HadamardZps,EquivZ2s,ZpsEquivalance}, a recursive construction
for $\Z_{p^s}$-linear Hadamard codes, with $p$ prime, is described, the
linearity is established, and a partial classification by using the dimension
of the kernel is obtained, giving the exact amount of nonequivalent such codes
for some parameters.  In \cite{fernandez2019mathbb}, a complete classification
of $\Z_8$-linear Hadamard codes by using the rank and dimension of the kernel
is provided, giving the exact amount of nonequivalent such codes.  For
$s\geq4$, however, no formula for the rank is known, and therefore that
invariant is not even available there.  Finally, a recursive construction of
$\Z_2\Z_4\Z_8$-additive Hadamard codes $\cH^{t_1,t_2,t_3}$, with all the
$\alpha_i$ nonzero, was given in \cite{Z2Z4Z8Construction}, their linearity
and kernel were determined in \cite{Z2Z4Z8Linearity} and their rank in
\cite{Z2Z4Z8Rank}. We write $H^{t_1,t_2,t_3}=\Phi(\cH^{t_1,t_2,t_3})$ for the
corresponding binary codes. For any $t\geq2$, the full classification of
$\Z_p\Z_{p^2}$-linear generalized Hadamard codes of length $p^t$, with $\alpha_1\not=0$,
$\alpha_2\not=0$, and $p\geq3$ prime, is given in
\cite{ZpZp2Construction,ZpZp2Classification}, by using just the dimension of
the kernel.

That last family is the subject of the present paper.  The pair $(r,k)$ does
not classify it: two infinite families of pairs of distinct types share the
length, the rank and the dimension of the kernel, namely
$$
\bigl(H^{1,2,t-6},H^{2,0,t-5}\bigr),\ t\geq7,
\qquad\text{and}\qquad
\bigl(H^{2,2,t-9},H^{3,0,t-8}\bigr),\ t\geq10.
$$
Until now, each individual pair had to be separated by a computer equivalence
test, and this had been carried out only for the lengths $2^t$ with
$3\leq t\leq11$ \cite{Z2Z4Z8Linearity}.  Proving the nonequivalences for every
length was stated as further research in \cite[\S 6]{Z2Z4Z8Linearity}.

The rank and the dimension of the kernel are global invariants, and
neither of them records where in the coordinate set the corresponding
vectors live. The kernel, by contrast, carries positional information
intrinsically.  Two binary coordinates are declared equivalent when every
kernel word takes the same value in both of them, and the resulting partition
of the coordinate set into \emph{kernel blocks} is canonically attached to the
code.  On each block, we puncture the linear span and record the dimension that
survives, and the multiset
$$
\cR(C)=\{\!\{\dim\langle C\rangle|_B:\ B\text{ a kernel block of }C\}\!\}
$$
is then an equivalence invariant, which we call the \emph{kernel-block rank
profile}.

Our main result is the explicit computation of $\cR(H^{t_1,t_2,t_3})$, in
Theorem \ref{thm:profile}.  The kernel partition has $2^{t_1+t_2+t_3-1}$
blocks, all of the same size $2^{2t_1+t_2}$, and the profile takes at most the
two values
$$
\rho_0=t_2+\binom{t_1+2}{2}
\qquad\text{and}\qquad
\rho_1=t_2+2+\binom{t_1+1}{2},
$$
whose difference is exactly $t_1-1$.  The profile therefore determines $t_1$,
and the length together with the dimension of the kernel then determines $t_2$
and $t_3$.  This gives the complete classification inside the family, Theorem
\ref{thm:complete-internal-classification}: two codes of the family with the
same length are equivalent if and only if their types coincide.  Counting the
admissible types, the number of pairwise nonequivalent such codes of length
$2^t$ is $\lfloor(t^2+6)/12\rfloor$ for every $t\geq3$, and both families of
collisions are separated uniformly, at every length and without any computer
computation.

The classification obtained here is internal: it compares the codes
$H^{t_1,t_2,t_3}$ with one another.  Whether such a code can be equivalent to
a $\Z_4$-linear, a $\Z_2\Z_4$-linear or a $\Z_{2^s}$-linear Hadamard code of
the same length is not addressed here; it requires the profiles of those
families, and will be taken up separately.

The paper is organised as follows.  Section \ref{sec:preliminaries} fixes the
vocabulary of generator rows, additive coordinates and binary coordinates, and
recalls what we use from \cite{Z2Z4Z8Construction,Z2Z4Z8Linearity,Z2Z4Z8Rank}.
Section \ref{sec:profile} introduces the kernel-block rank profile, proves its
invariance under code equivalence, and computes it for the whole family.
Section \ref{sec:internal-classification} draws the classification inside the
family and counts the equivalence classes, and Section \ref{sec:conclusions}
summarises the results and indicates further research.
%%%%%%%%%%%%%%%%%%%%%%%%%%%%%%%%%%%%%%%%%%%%%%%%%%%%%%%%%%%%%%%%%%%%%%%%%%%%%
\section{Preliminaries}\label{sec:preliminaries}
%%%%%%%%%%%%%%%%%%%%%%%%%%%%%%%%%%%%%%%%%%%%%%%%%%%%%%%%%%%%%%%%%%%%%%%%%%%%%

This section fixes the notation and collects the results that are used later.
Apart from Subsection \ref{subsec:puncturing} and Lemma \ref{lem:mobius},
which are proved here, everything in this section is taken from
\cite{Z2Z4Z8Construction,Z2Z4Z8Linearity,Z2Z4Z8Rank}.

%%%%%%%%%%%%%%%%%%%%%%%%%%%%%%%%%%%%%%%%%%%%%%%%%%%%%%%%%%%%%%%%%%%%%%%%%%%%%
\subsection{Puncturing and equivalence}\label{subsec:puncturing}
%%%%%%%%%%%%%%%%%%%%%%%%%%%%%%%%%%%%%%%%%%%%%%%%%%%%%%%%%%%%%%%%%%%%%%%%%%%%%

Throughout, a binary code $C\subseteq\Z_2^N$ is not assumed to be a vector
space, and we recall from Section \ref{sec:intro} that $C$ is linear exactly
when $\K(C)=C=\langle C\rangle$, that is, exactly when
$\rank(C)=\kernel(C)$.

We will constantly restrict codes and linear spaces to selected coordinate
positions.  For $\xx\in\Z_2^N$ and $B\subseteq\{1,\dots,N\}$, we write
$\xx|_B$ for the vector obtained from $\xx$ by deleting the coordinates whose
index lies outside $B$, and, for a set $D\subseteq\Z_2^N$, we put
$D|_B=\{\xx|_B:\xx\in D\}$; this operation is called puncturing outside $B$.
Since deleting coordinates is a $\Z_2$-linear map, $D|_B$ is a linear space whenever
$D$ is one, and
\begin{equation}\label{eq:span-puncture}
\langle D\rangle|_B=\langle D|_B\rangle
\qquad\text{for every } D\subseteq\Z_2^N.
\end{equation} The integer $\dim\langle D\rangle|_B$ measures how much of the rank of
$D$ remains visible on the positions of $B$, and it is the quantity on which
the invariant of Section \ref{sec:profile} is built.

The next elementary lemma records what a code equivalence does to the span and
to the kernel.  We include its proof because the identities themselves, and not only the equality of the two dimensions, are what we will need in Section \ref{sec:profile}.

\begin{lemma}\label{lem:rank-kernel-invariance}
Let $C,D\subseteq\Z_2^N$ be binary codes with $\zero\in C$ and $\zero\in D$,
and suppose that $D=\zz+\pi(C)$ for some coordinate permutation $\pi$ and some
$\zz\in\Z_2^N$.  Then, $\langle D\rangle=\pi(\langle C\rangle)$ and
$\K(D)=\pi(\K(C))$.  In particular, $\rank(D)=\rank(C)$ and
$\kernel(D)=\kernel(C)$.
\end{lemma}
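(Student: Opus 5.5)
The plan is to first show that $\zz\in D$, and then let linearity of $\pi$ do the rest. Since $\zero\in C$ and $\pi(\zero)=\zero$, we have $\zz=\zz+\pi(\zero)\in D$. Because $\zero\in D$ as well, there is $\xx_0\in C$ with $\zz+\pi(\xx_0)=\zero$, that is, $\zz=\pi(\xx_0)$ over $\Z_2$. Hence $\zz\in\pi(C)$. We use throughout that a coordinate permutation $\pi$ is a $\Z_2$-linear bijection of $\Z_2^N$, so $\pi(\langle C\rangle)=\langle\pi(C)\rangle$ and $\pi(\xx+\yy)=\pi(\xx)+\pi(\yy)$.

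For the span, we argue both inclusions. Every element of $D$ has the form $\pi(\xx_0)+\pi(\xx)=\pi(\xx_0+\xx)$ with $\xx_0,\xx\in C$, and this lies in $\pi(\langle C\rangle)$. Therefore $\langle D\rangle\subseteq\pi(\langle C\rangle)$. Conversely, $\pi(C)=\zz+D$, and both $\zz$ and $D$ lie in $\langle D\rangle$, so $\pi(C)\subseteq\langle D\rangle$. Taking spans gives $\pi(\langle C\rangle)\subseteq\langle D\rangle$.

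For the kernel, we use that translation does not change the kernel: for any set $E$ and vector $\zz$, $\xx+(\zz+E)=\zz+E$ holds if and only if $\xx+E=E$. Thus $\K(D)=\K(\pi(C))$. Next, applying the bijection $\pi$ shows that $\xx+\pi(C)=\pi(C)$ if and only if $\pi^{-1}(\xx)+C=C$. Hence $\K(\pi(C))=\pi(\K(C))$.

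The dimension equalities then follow because $\pi$ is injective and linear, so it preserves the dimension of every subspace. No step here is a real obstacle. The only point needing care is showing $\zz\in\pi(C)$, which is where the hypotheses $\zero\in C$ and $\zero\in D$ are used; without them the span identity can fail, since a translate of a code need not have the same span.
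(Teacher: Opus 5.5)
Your proof is correct and follows the paper's argument. You use the two zero hypotheses to place $\zz$ in both $D$ and $\pi(C)$, and then obtain the two span inclusions exactly as the paper does; for the kernel, you split the computation into translation invariance of $\K$ and equivariance under $\pi$ and get equality directly through biconditionals, where the paper proves one inclusion and appeals to symmetry, which is only a cosmetic difference (and it makes visible that the kernel identity does not need $\zero\in C$ or $\zero\in D$).
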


\begin{proof}
Since $\zero\in D$ and $D=\zz+\pi(C)$, there is $\uu\in C$ with
$\zz=\pi(\uu)$.  Every element of $D$ is $\zz+\pi(\vv)$ with $\vv\in C$, and
$$
\zz+\pi(\vv)=\pi(\uu)+\pi(\vv)=\pi(\uu+\vv)\in\pi(\langle C\rangle),
$$
so $\langle D\rangle\subseteq\pi(\langle C\rangle)$.  Conversely, $\zz$ and
$\zz+\pi(\vv)$ both belong to $D$, and their sum is $\pi(\vv)$, so
$\pi(C)\subseteq\langle D\rangle$ and hence
$\pi(\langle C\rangle)\subseteq\langle D\rangle$.  This proves the first
equality, and $\rank(D)=\rank(C)$, a permutation of coordinates being a
linear isomorphism of $\Z_2^N$.

For the kernel, let $\xx\in\K(C)$.  Then,
$$
\pi(\xx)+D=\pi(\xx)+\zz+\pi(C)=\zz+\pi(\xx+C)=\zz+\pi(C)=D,
$$
so $\pi(\K(C))\subseteq\K(D)$.  Since $C=\pi^{-1}(\zz)+\pi^{-1}(D)$ is a
relation of the same shape, the inclusion just proved, applied to the pair
$(D,C)$, gives $\K(D)\subseteq\pi(\K(C))$.  Taking dimensions gives
$\kernel(D)=\kernel(C)$.
\end{proof}

%%%%%%%%%%%%%%%%%%%%%%%%%%%%%%%%%%%%%%%%%%%%%%%%%%%%%%%%%%%%%%%%%%%%%%%%%%%%%
\subsection{Generator rows, additive coordinates and binary coordinates}
\label{subsec:rows-coordinates}
%%%%%%%%%%%%%%%%%%%%%%%%%%%%%%%%%%%%%%%%%%%%%%%%%%%%%%%%%%%%%%%%%%%%%%%%%%%%%

Five different objects occur in the computations below, and it is important
to keep them apart: \emph{generator rows},
\emph{additive coordinates}, \emph{message coefficients}, \emph{binary message
digits} and \emph{binary coordinates}.  We recall the vocabulary of
\cite{Z2Z4Z8Rank}.

Let $A=(A_1\mid A_2\mid A_3)$ be a generator matrix of a
$\Z_2\Z_4\Z_8$-additive code $\C$ of type
$(\alpha_1,\alpha_2,\alpha_3;t_1,t_2,t_3)$.  The rows of $A$ are the
\emph{generator rows}, and the positions of $\C$ are its \emph{additive
coordinates}.  An additive coordinate lies in the $\Z_2$ part, in the $\Z_4$
part or in the $\Z_8$ part according to whether the column of $A$ at that
position belongs to $A_1$, to $A_2$ or to $A_3$.  That column lists the values
taken there by the generator rows, and we identify the additive coordinate
with it whenever we speak of its entries.  Choosing generators
$\ww_1,\dots,\ww_{t_1}$, $\vv_1,\dots,\vv_{t_2}$ and $\zz_1,\dots,\zz_{t_3}$
of order $8$, $4$ and $2$ respectively, every additive codeword $\uu\in\C$ is
written in exactly one way as
\begin{equation}\label{eq:additive-message}
\uu=\sum_{i=1}^{t_1}x_i\ww_i+\sum_{j=1}^{t_2}y_j\vv_j
      +\sum_{k=1}^{t_3}\epsilon_k\zz_k,
\qquad x_i\in\Z_8,\quad y_j\in\Z_4,\quad \epsilon_k\in\Z_2.
\end{equation}
The tuple
$\xi=(x_1,\dots,x_{t_1},y_1,\dots,y_{t_2},\epsilon_1,\dots,\epsilon_{t_3})$ is
the \emph{message} of $\uu$ and its entries are the \emph{message
coefficients}; the set of all messages is
$\cM=\Z_8^{t_1}\times\Z_4^{t_2}\times\Z_2^{t_3}$, and
\eqref{eq:additive-message} is a bijection $\cM\rightarrow\C$.  Replacing each
message coefficient by its binary digits, we write
\begin{equation}\label{eq:message-digits}
x_i=a_i+2b_i+4c_i\ \ \text{in }\Z_8,
\qquad
y_j=d_j+2e_j\ \ \text{in }\Z_4,
\end{equation}
with $a_i,b_i,c_i,d_j,e_j\in\{0,1\}$, and from that point on, we regard the
$a_i,b_i,c_i$, the $d_j,e_j$ and the $\epsilon_k$ as independent variables
over $\Z_2$; they are the \emph{binary message digits}.  The same convention
applies to every ring element occurring below.  For $w\in\Z_{2^s}$, we always
write $w=\sum_{i=0}^{s-1}w_i2^i$ with $w_i\in\{0,1\}$, we call $w_0$ the
\emph{bottom digit} and $w_{s-1}$ the \emph{top digit} of $w$ and, when $s=3$,
we call $w_1$ the \emph{middle digit}.  The digits are extracted inside
$\Z_{2^s}$, where they are the ring elements $0$ and $1$, and are viewed in
$\Z_2$ from then on, because by \eqref{eq:Carlet} the Gray image of $w$ is
obtained from them by linear algebra over $\Z_2$, and because every dimension
we compute is the dimension of a $\Z_2$-span. This identification is only set-theoretic: the digits of a sum of two ring elements are not, in general, the sums of the corresponding digits, because of the carries.

Finally, a position of $\Phi(\C)$ is called a \emph{binary coordinate}: one
additive coordinate in the $\Z_2$ part produces one binary coordinate, one in
the $\Z_4$ part produces two and one in the $\Z_8$ part produces four.  The
two kinds of position are always distinguished in the same way: an additive
coordinate is named together with its part, as in ``a coordinate in the $\Z_4$
part'', whereas ``binary coordinate'', or simply ``coordinate'', always means
a position of the binary code.

For a binary coordinate $\ell$ of $\Phi(\C)$, let $f_\ell:\cM\rightarrow\Z_2$
be the map sending a message $\xi$ to the value of $\Phi(\uu(\xi))$ at that
position, where $\uu(\xi)$ is given by \eqref{eq:additive-message}. We call
$f_\ell$ the \emph{binary coordinate function} of $\ell$.  Since digit
extraction identifies $\cM$, as a set, with $\Z_2^{3t_1+2t_2+t_3}$, each
$f_\ell$ is a Boolean function of the message digits, nonlinear in general.
Two things vary independently here: the message is the variable, and the
coordinate is what selects the function.  The next lemma is what turns every
rank in this paper into the dimension of a space of Boolean functions.

\begin{lemma}[{\cite[Lemma 3.3]{Z2Z4Z8Rank}}]\label{lem:evaluation-principle}
Let $\Omega$ be a finite set, let $f_1,\dots,f_N:\Omega\rightarrow\Z_2$. Fix
an ordering $\omega_1,\dots,\omega_{|\Omega|}$ of $\Omega$, and let
$C_\Omega\subseteq\Z_2^N$ be the set of rows of the $|\Omega|\times N$ binary
matrix $\bigl(f_j(\omega_i)\bigr)_{i,j}$.  Then, for every
$B\subseteq\{1,\dots,N\}$,
\begin{equation}\label{eq:restricted-evaluation-rank}
\dim\langle C_\Omega\rangle|_B=\dim\Span\{f_j:\ j\in B\}.
\end{equation}
In particular, taking $B=\{1,\dots,N\}$ gives
$\rank(C_\Omega)=\dim\Span\{f_1,\dots,f_N\}$.
\end{lemma}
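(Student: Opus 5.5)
The plan is to identify the span with a space of Boolean functions through the linear map that evaluates them. Define $E:\Z_2^{\Omega}\rightarrow\Z_2^N$ by sending a function $g:\Omega\rightarrow\Z_2$ to nothing directly; instead, consider the linear map $T:\Z_2^{|\Omega|}\rightarrow\Z_2^N$, $\lambda\mapsto\lambda M$, where $M=\bigl(f_j(\omega_i)\bigr)_{i,j}$. The image of $T$ is the row space of $M$, which is exactly $\langle C_\Omega\rangle$, since $C_\Omega$ is the set of rows of $M$ (repeated rows do not affect the span). Composing with puncturing, and using \eqref{eq:span-puncture}, I will get $\langle C_\Omega\rangle|_B$ equal to the row space of the submatrix $M_B$ formed by the columns indexed by $B$.

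The second step is to use the equality of row rank and column rank over $\Z_2$. This gives $\dim\langle C_\Omega\rangle|_B=\rank M_B=\dim$ of the column space of $M_B$. The column of $M$ indexed by $j$ is the vector $\bigl(f_j(\omega_1),\dots,f_j(\omega_{|\Omega|})\bigr)^{T}$.

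The third step is to identify this column with $f_j$. The map $\Z_2^{\Omega}\rightarrow\Z_2^{|\Omega|}$, $g\mapsto(g(\omega_1),\dots,g(\omega_{|\Omega|}))$, is a linear isomorphism of the space of all functions $\Omega\rightarrow\Z_2$ (pointwise operations) onto column vectors, because the $\omega_i$ enumerate $\Omega$ bijectively. It therefore carries $\Span\{f_j:j\in B\}$ isomorphically onto the column space of $M_B$, which yields \eqref{eq:restricted-evaluation-rank}. Taking $B=\{1,\dots,N\}$, where puncturing is the identity, gives the statement about $\rank(C_\Omega)$.

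No step is a genuine obstacle. The only point needing care is the first step: one must note that $\langle C_\Omega\rangle$ is the row space even though $C_\Omega$ is a set, possibly with coincident rows, and that puncturing commutes with taking spans, which is exactly \eqref{eq:span-puncture}.
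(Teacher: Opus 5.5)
Your proposal is correct and matches the paper's argument: the paper justifies the lemma in one line by noting that both sides equal the rank of the submatrix of $\bigl(f_j(\omega_i)\bigr)$ formed by the columns indexed by $B$, which is exactly what your row-space and column-space identification spells out. The aborted definition of the map $E$ at the start of your first step plays no role and should simply be deleted.
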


Indeed, both sides are the rank of the submatrix of $\bigl(f_j(\omega_i)\bigr)$
formed by the columns indexed by $B$.

Applied with $\Omega=\cM$, Lemma \ref{lem:evaluation-principle} says that the
punctured rank of $\Phi(\C)$ on a set $B$ of binary coordinates is the
dimension of the span of the coordinate functions attached to the positions of
$B$.  The matrix of values has $2^{\,t+1}$ rows and $2^{t}$ columns and cannot
be handled directly, whereas \eqref{eq:restricted-evaluation-rank} allows us to
compute the required punctured ranks from the corresponding coordinate
functions. We will use two further consequences of \eqref{eq:Carlet}. At a fixed
additive coordinate, the binary coordinate functions produced by the Gray map
span exactly the same space as the binary digits of that coordinate. Moreover, adding an element of order at most two before applying the Gray map is equivalent to adding its Gray image afterwards.

\begin{lemma}[{\cite[Lemma 3.1]{Z2Z4Z8Rank}}]\label{lem:gray-digits}
Let $\Omega$ be a finite set and let $u:\Omega\rightarrow\Z_{2^s}$ be any map.
For $0\leq i\leq s-1$, let $u_i:\Omega\rightarrow\Z_2$ send $\omega$ to the
$i$th binary digit of $u(\omega)$, and, for $1\leq\ell\leq 2^{s-1}$, let
$g_\ell:\Omega\rightarrow\Z_2$ send $\omega$ to the $\ell$th entry of
$\phi_s(u(\omega))$.  Then,
$$
\Span\{g_1,\dots,g_{2^{s-1}}\}=\Span\{u_0,u_1,\dots,u_{s-1}\}.
$$
\end{lemma}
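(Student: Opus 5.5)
The natural route is to compare the two spans inside the space of all maps $\Omega\rightarrow\Z_2$, showing each contains the other. Both sides are spans of Boolean functions on $\Omega$, and by \eqref{eq:Carlet} each $g_\ell$ is obtained from the digit functions pointwise by a fixed $\Z_2$-affine-free linear formula. Concretely, writing $Y_{s-1}$ with columns $y^{(\ell)}\in\Z_2^{s-1}$, formula \eqref{eq:Carlet} gives, for every $\omega\in\Omega$,
$$
g_\ell(\omega)=u_{s-1}(\omega)+\sum_{i=0}^{s-2}y^{(\ell)}_i\,u_i(\omega),
$$
so $g_\ell=u_{s-1}+\sum_{i}y^{(\ell)}_iu_i$ as functions. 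This immediately yields the inclusion $\Span\{g_\ell\}\subseteq\Span\{u_0,\dots,u_{s-1}\}$.

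For the reverse inclusion, I will exhibit each $u_i$ as a combination of the $g_\ell$, using the ascending ordering of the columns of $Y_{s-1}$. The first column is the zero vector, so $g_1=u_{s-1}$. Next, for $0\leq i\leq s-2$, the column equal to the $i$th unit vector $e_i$ (the binary expansion of $2^i$, i.e.\ column index $\ell=2^i+1$) gives $g_{2^i+1}=u_{s-1}+u_i$. Hence
$$
u_i=g_1+g_{2^i+1},
$$
and all digit functions lie in $\Span\{g_\ell\}$. This completes the argument. For $s=1$, the sum is empty, $Y_0$ is the empty matrix, and $g_1=u_0$ trivially.

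There is no real obstacle here; the only point needing care is bookkeeping. Specifically, one must confirm that \eqref{eq:Carlet} is evaluated pointwise in $\omega$, so that the identity is one of functions. One must also confirm that the digits are read in $\Z_2$, which is consistent with the convention fixed in Subsection~\ref{subsec:rows-coordinates}. With both checked, the linear relations lift verbatim to the function space.
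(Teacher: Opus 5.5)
Your proof is correct and is the argument the paper has in mind: the paper gives no proof of its own, only citing \cite[Lemma 3.1]{Z2Z4Z8Rank}, and it presents the lemma as a direct consequence of \eqref{eq:Carlet}. Writing each $g_\ell$ as $u_{s-1}$ plus a fixed $\Z_2$-combination of $u_0,\dots,u_{s-2}$, and then using the zero column and the unit-vector columns of $Y_{s-1}$ to recover every $u_i$, is exactly that consequence; the specific index $2^i+1$ depends on the bit-reading convention, but that is harmless, since you only need the zero vector and the unit vectors to occur among the columns, which they do.
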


\begin{lemma}[\cite{Carlet,HadamardZps}]\label{lem:gray-shift}
Let $s\geq1$ and let $\lambda,w\in\Z_{2^s}$.  Then,
$\phi_s(2^{s-1}\lambda+w)=\phi_s(2^{s-1}\lambda)+\phi_s(w)$.
\end{lemma}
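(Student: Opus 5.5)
We prove the identity $\phi_s(2^{s-1}\lambda+w)=\phi_s(2^{s-1}\lambda)+\phi_s(w)$ by reducing it to a statement about binary digits and then reading it off from the defining formula \eqref{eq:Carlet}. The plan is to compare the binary expansions of $w$ and of $w'=2^{s-1}\lambda+w$ in $\Z_{2^s}$, observe that they differ only in the top digit, and check that the Gray map is affine in the top digit.

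First we handle the trivial cases. If $s=1$, then $\phi_1$ is the identity on $\Z_2$, so the identity is just additivity in $\Z_2$. If $2^{s-1}\lambda=0$ in $\Z_{2^s}$, which happens exactly when $\lambda$ is even, both sides equal $\phi_s(w)$, since $\phi_s(0)=\zero$. So we may assume $s\geq2$ and $2^{s-1}\lambda=2^{s-1}$.

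Next we compare digits. Write $w=\sum_{i=0}^{s-1}w_i2^i$. Adding $2^{s-1}$ modulo $2^s$ produces no carry into the lower positions; any overflow from the top position is discarded. Hence $w'=2^{s-1}+w$ has digits $w'_i=w_i$ for $0\leq i\leq s-2$ and $w'_{s-1}=w_{s-1}+1$ in $\Z_2$. This is the only point where carries could intervene, and they do not, because $2^{s-1}$ touches only the top position.

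Finally we apply \eqref{eq:Carlet} to both elements:
$$\phi_s(w')=(w_{s-1}+1)\one+(w_0,\dots,w_{s-2})Y_{s-1}=\one+\phi_s(w).$$
Applying \eqref{eq:Carlet} to $2^{s-1}$, whose only nonzero digit is the top one, gives $\phi_s(2^{s-1})=\one+\zero\,Y_{s-1}=\one$. Therefore $\phi_s(2^{s-1}+w)=\phi_s(2^{s-1})+\phi_s(w)$, which is the claim. There is no real obstacle; the only step needing care is the no-carry observation in the second step.
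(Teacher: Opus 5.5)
Your proof is correct and complete. For $s\geq2$ only the parity of $\lambda$ matters, and adding $2^{s-1}$ in $\Z_{2^s}$ flips the top digit while leaving the others unchanged. Moreover, \eqref{eq:Carlet} depends on the top digit only through the summand $u_{s-1}\one$, and $\one=\phi_s(2^{s-1})$. The paper gives no proof of its own and simply quotes the lemma from \cite{Carlet,HadamardZps}, so there is no in-paper route to compare against; your direct digit check from the definition is the natural argument behind that citation.
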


The last tool of this subsection says that the span of a family of functions
depending polynomially on free binary parameters is the span of the
coefficient functions.  We use it to read a local span directly from a formula,
without choosing the parameters one at a time.

\begin{lemma}\label{lem:mobius}
Let $\Omega$ be a finite set, let $\theta=(\theta_1,\dots,\theta_n)$ be free
binary parameters and suppose that, for every $\theta\in\Z_2^n$, a function
$G_\theta:\Omega\rightarrow\Z_2$ is given by
\begin{equation}\label{eq:parameter-polynomial}
G_\theta=\sum_{I\subseteq\{1,\dots,n\}}\Bigl(\prod_{i\in I}\theta_i\Bigr)f_I,
\end{equation}
where the functions $f_I:\Omega\rightarrow\Z_2$ do not depend on $\theta$.
Then, $\Span\{G_\theta:\theta\in\Z_2^n\}=\Span\{f_I:I\subseteq\{1,\dots,n\}\}$.
\end{lemma}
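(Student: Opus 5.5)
Both inclusions are immediate, and I will prove them separately.

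For the inclusion $\subseteq$: for each fixed $\theta\in\Z_2^n$, formula \eqref{eq:parameter-polynomial} expresses $G_\theta$ as a $\Z_2$-linear combination of the $f_I$, with coefficients $\prod_{i\in I}\theta_i\in\Z_2$. Hence every $G_\theta$ lies in $\Span\{f_I\}$.

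For the reverse inclusion, I will express each $f_J$ as a linear combination of the $G_\theta$ by Möbius inversion on the Boolean lattice. Identify $\theta$ with its support $S=\{i:\theta_i=1\}$, and write $G_S$ for $G_\theta$. Then $\prod_{i\in I}\theta_i=1$ exactly when $I\subseteq S$, so
$$G_S=\sum_{I\subseteq S}f_I.$$
The plan is to verify, for each $J\subseteq\{1,\dots,n\}$, the identity
$$f_J=\sum_{S\subseteq J}G_S.$$
Substituting the first display gives
$$\sum_{S\subseteq J}G_S=\sum_{I\subseteq J}\bigl|\{S:I\subseteq S\subseteq J\}\bigr|\,f_I.$$
The cardinality $|\{S:I\subseteq S\subseteq J\}|$ equals $2^{|J\setminus I|}$. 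Modulo $2$ this is $1$ if $I=J$ and $0$ otherwise, so the sum reduces to $f_J$. Thus each $f_J$ lies in $\Span\{G_\theta\}$, and the two spans coincide.

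There is no real obstacle here. The only point needing care is that coefficients are computed modulo $2$, which is exactly why inclusion–exclusion needs no signs. The argument is pointwise in $\omega\in\Omega$, so it is valid for arbitrary functions on any finite set $\Omega$.
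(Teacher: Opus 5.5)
Your proof is correct and is essentially the paper's proof. Both get the first inclusion directly from the defining formula. Both then invert $G_S=\sum_{I\subseteq S}f_I$ over $\Z_2$ as $f_J=\sum_{S\subseteq J}G_S$, using that $2^{|J\setminus I|}$ is odd only when $I=J$.
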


\begin{proof}
One inclusion is immediate. By \eqref{eq:parameter-polynomial}, every
$G_\theta$ is a $\Z_2$-linear combination of the functions $f_I$, so
$\Span\{G_\theta\}\subseteq\Span\{f_I\}$.

For the reverse inclusion, let $\chi_J\in\Z_2^n$ be the indicator vector of a
subset $J\subseteq\{1,\dots,n\}$.  Substituting $\theta=\chi_J$ in
\eqref{eq:parameter-polynomial}, the product $\prod_{i\in I}\theta_i$ equals
$1$ exactly when $I\subseteq J$, so
\begin{equation}\label{eq:mobius-values}
G_{\chi_J}=\sum_{I\subseteq J}f_I.
\end{equation}
We claim that this relation can be inverted over $\Z_2$ as
$f_I=\sum_{J\subseteq I}G_{\chi_J}$.  Indeed, substituting
\eqref{eq:mobius-values} and exchanging the order of summation,
$$
\sum_{J\subseteq I}G_{\chi_J}
=\sum_{J\subseteq I}\ \sum_{L\subseteq J}f_L
=\sum_{L\subseteq I}\bigl|\{J:\ L\subseteq J\subseteq I\}\bigr|\,f_L
=\sum_{L\subseteq I}2^{\,|I|-|L|}f_L,
$$
since $\bigl|\{J:L\subseteq J\subseteq I\}\bigr|=2^{\,|I|-|L|}$.  Over $\Z_2$,
this coefficient vanishes unless $L=I$, so the sum reduces to $f_I$.
Hence, every $f_I$ is a $\Z_2$-linear combination of the functions $G_\theta$,
which gives $\Span\{f_I\}\subseteq\Span\{G_\theta\}$.
\end{proof}

\begin{example}\label{ex:mobius}
Take $n=2$.  Then, $G_\theta=f_\emptyset+\theta_1f_{\{1\}}
+\theta_2f_{\{2\}}+\theta_1\theta_2f_{\{1,2\}}$.  The four functions obtained
by letting $\theta$ run over $\Z_2^2$ are
$G_{(0,0)}=f_\emptyset$, $G_{(1,0)}=f_\emptyset+f_{\{1\}}$, $G_{(0,1)}=f_\emptyset+f_{\{2\}}$, and $G_{(1,1)}=f_\emptyset+f_{\{1\}}+f_{\{2\}}+f_{\{1,2\}}.$
Adding the first to the second gives $f_{\{1\}}$, adding it to the third gives
$f_{\{2\}}$, and adding all four gives $f_{\{1,2\}}$; so the four functions
$G_\theta$ span the same space as $f_\emptyset$, $f_{\{1\}}$, $f_{\{2\}}$ and
$f_{\{1,2\}}$, in agreement with Lemma \ref{lem:mobius}.
\end{example}

%%%%%%%%%%%%%%%%%%%%%%%%%%%%%%%%%%%%%%%%%%%%%%%%%%%%%%%%%%%%%%%%%%%%%%%%%%%%%
\subsection{The recursive family $H^{t_1,t_2,t_3}$}\label{subsec:family}
%%%%%%%%%%%%%%%%%%%%%%%%%%%%%%%%%%%%%%%%%%%%%%%%%%%%%%%%%%%%%%%%%%%%%%%%%%%%%

Extending the notation of Section \ref{sec:intro}, let
$\two,\mathbf{3},\dots,\mathbf{7}$ denote the vectors having the element
$2,3,\dots,7$ repeated in every coordinate, and let $t_1\geq1$, $t_2\geq0$ and
$t_3\geq1$ be integers.  The matrices $A^{t_1,t_2,t_3}$ of \cite{Z2Z4Z8Construction}, having $t_1$ rows of order $8$, $t_2$ rows of order $4$ and $t_3$ rows of order $2$, are
constructed recursively from
\begin{equation}\label{eq:A101}
A^{1,0,1}=\left(\begin{array}{cc|c|c}
1&1&2&4\\
0&1&1&1\\
\end{array}\right)
\end{equation}
by means of three steps.  If $A^{\ell-1,0,1}=(A_1\mid A_2\mid A_3)$ with
$\ell\geq2$ is already constructed, we form
\begin{equation}\label{eq:construction-order8}
A^{\ell,0,1}=\left(\begin{array}{cc|ccccc|ccccc}
A_1&A_1&M_1&A_2&A_2&A_2&A_2&M_2&A_3&A_3&\cdots&A_3\\
\zero&\one&\one&\zero&\one&\two&\mathbf{3}&\one&\zero&\one&\cdots&\mathbf{7}
\end{array}\right),
\end{equation}
where $M_1=\{\zz^T:\zz\in\{2\}\times\{0,2\}^{\ell-1}\}$ and
$M_2=\{\zz^T:\zz\in\{4\}\times\{0,2,4,6\}^{\ell-1}\}$; this step is applied
until $\ell=t_1$.  If $A^{t_1,\ell-1,1}=(A_1\mid A_2\mid A_3)$ with
$\ell\geq1$ is already constructed, we form
\begin{equation}\label{eq:construction-order4}
A^{t_1,\ell,1}=\left(\begin{array}{cc|ccccc|cccc}
A_1&A_1&M_1&A_2&A_2&A_2&A_2&A_3&A_3&A_3&A_3\\
\zero&\one&\one&\zero&\one&\two&\mathbf{3}&\zero&\two&\four&\six
\end{array}\right),
\end{equation}
where now $M_1=\{\zz^T:\zz\in\{2\}\times\{0,2\}^{t_1+\ell-1}\}$; this step is
applied until $\ell=t_2$.  Finally, if $A^{t_1,t_2,\ell-1}=(A_1\mid A_2\mid
A_3)$ with $\ell\geq2$ is already constructed, we form
\begin{equation}\label{eq:construction-order2}
A^{t_1,t_2,\ell}=\left(\begin{array}{cc|cc|cc}
A_1&A_1&A_2&A_2&A_3&A_3\\
\zero&\one&\zero&\two&\zero&\four
\end{array}\right),
\end{equation}
and this last step is applied until $\ell=t_3$. Thus, in this way, we obtain $A^{t_1,t_2,t_3}$. 

Summarizing, in order to reach $A^{t_1,t_2,t_3}$ from $A^{1,0,1}$ we first
adds $t_1-1$ rows of order $8$ by applying \eqref{eq:construction-order8}
exactly $t_1-1$ times, then add $t_2$ rows of order $4$ by applying
\eqref{eq:construction-order4} exactly $t_2$ times, and finally add $t_3-1$
rows of order $2$ by applying \eqref{eq:construction-order2} exactly $t_3-1$
times.  Note that the first row of $A^{t_1,t_2,t_3}$ is always
$(\one\mid\two\mid\four)$, and that it has order $2$; it is therefore one of
the $t_3$ rows of order two, and it will play a distinguished role
throughout.  We denote by $\cH^{t_1,t_2,t_3}$ the $\Z_2\Z_4\Z_8$-additive
code generated by $A^{t_1,t_2,t_3}$ and by
$H^{t_1,t_2,t_3}=\Phi(\cH^{t_1,t_2,t_3})$ the corresponding
$\Z_2\Z_4\Z_8$-linear code.

\begin{example}\label{ex:matricesA}
By using the constructions described in (\ref{eq:construction-order8}), (\ref{eq:construction-order4}), and (\ref{eq:construction-order2}), we obtain the following matrices $A^{2,0,1}$, $A^{1,1,1}$ and $A^{1,1,2}$, respectively, starting from $A^{1,0,1}$ given in (\ref{eq:A101}):
\begin{equation*}%\label{eq:A201}
A^{2,0,1}=\left(\begin{array}{cc|cc|cc}
11&11&22&2222&4444&44444444\\
01&01&02&1111&0246&11111111\\
00&11&11&0123&1111&01234567
\end{array}\right),
\end{equation*}
\begin{equation}\label{eq:A111}
A^{1,1,1}=
\left(\begin{array}{cc|cc|c}
 11 &11 &22&2222 &4444\\
 01&01 &02&1111 &1111\\
 00&11 &11&0123 &0246\\ 
\end{array}\right), 
\end{equation}
$$
A^{1,1,2}=
\left(\begin{array}{cc|cc|cc}
 1111 & 1111 &222222 &222222 &4444 &4444\\
 0101 & 0101 &021111 &021111 &1111 &1111\\
 0011 & 0011 &110123 &110123 &0246 &0246\\ 
 0000 &1111  &000000 &222222 &0000 &4444\\
\end{array}\right).
$$
\end{example}

For all the proofs, it is far more convenient to know directly which additive
coordinates occur than to follow the recursive construction step by step.  A vector $\zz\in\Z_{2^s}^\ell$ is \emph{primitive} if at least one of its entries is odd, and hence a unit.  If $\zz$ is primitive and $i_0=\min\{i:z_i\text{ is odd}\}$, then $\zz$ is called \emph{normalized primitive} if $z_{i_0}=1$. Normalization selects one vector from each orbit under multiplication by a unit.  It fixes only the entry at the position $i_0$, leaving the even entries before $i_0$ and all the entries after $i_0$ free.  In $\Z_4^2$, for instance, the normalized primitive vectors are $(1,0)$, $(1,1)$, $(1,2)$, $(1,3)$, $(0,1)$, and $(2,1)$.

\begin{proposition}[\cite{Z2Z4Z8Rank}]\label{prop:column-description}
Let $t_1\geq1$, $t_2\geq0$ and $t_3\geq1$.  Delete from every column of
$A^{t_1,t_2,1}$ its first entry, which is $1$, $2$ or $4$ according to
whether the additive coordinate lies in the $\Z_2$, the $\Z_4$ or the $\Z_8$
part.  Then, the sets of the remaining truncated columns are the following.
\begin{enumerate}[label=\textup{(\roman*)}]
\item The $\Z_2$ part consists of every vector of $\Z_2^{t_1+t_2}$, each one
occurring exactly once.
\item The $\Z_4$ part consists of every normalized primitive vector of
$\Z_4^{t_1+t_2}$, each one occurring exactly once.
\item The $\Z_8$ part consists of every vector
$(q_1,\dots,q_{t_1},2r_1,\dots,2r_{t_2})$, where $(q_1,\dots,q_{t_1})$ is
normalized primitive in $\Z_8^{t_1}$ and $(r_1,\dots,r_{t_2})\in\Z_4^{t_2}$ is
arbitrary, each one occurring exactly once.
\end{enumerate}
Moreover, the type of $\cH^{t_1,t_2,t_3}$ satisfies
$\alpha_1=2^{t_1+t_2+t_3-1}$, $\alpha_1+2\alpha_2=4^{t_1+t_2}2^{t_3-1}$ and
$\alpha_1+2\alpha_2+4\alpha_3=8^{t_1}4^{t_2}2^{t_3-1}$; in particular
$\alpha_1$, $\alpha_2$ and $\alpha_3$ are all nonzero, the binary length of
$H^{t_1,t_2,t_3}$ is $N=2^t$ with
\begin{equation}\label{eq:length-relation}
t+1=3t_1+2t_2+t_3,
\end{equation}
and $|H^{t_1,t_2,t_3}|=8^{t_1}4^{t_2}2^{t_3}=2^{t+1}=2N$.
\end{proposition}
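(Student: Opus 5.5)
The plan is to prove the three column descriptions by induction along the recursive construction, and then to read off the counts. Since the statement concerns $A^{t_1,t_2,1}$, only the steps \eqref{eq:construction-order8} and \eqref{eq:construction-order4} enter the induction. The step \eqref{eq:construction-order2} is used only afterwards, for the parameters $\alpha_i$ with $t_3>1$. In each step the new row is appended at the bottom, so every truncated column acquires one new last entry. The key observation is that a normalized primitive vector of length $n+1$ is one of two kinds:
\begin{itemize}
\item a vector $(\zz,\lambda)$ with $\zz$ normalized primitive of length $n$ and $\lambda$ arbitrary;
\item a vector $(\mathbf{e},1)$ with $\mathbf{e}$ having only even entries, so that $i_0$ is the new last position.
\end{itemize}
These two kinds are disjoint, and together they exhaust all normalized primitive vectors of length $n+1$.

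The base case is $A^{1,0,1}$ in \eqref{eq:A101}. Its truncated columns are $(0),(1)$ in the $\Z_2$ part, $(1)$ in the $\Z_4$ part and $(1)$ in the $\Z_8$ part. These are all of $\Z_2^1$, the normalized primitive vectors of $\Z_4^1$ and those of $\Z_8^1$, respectively.

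For the order-$8$ step \eqref{eq:construction-order8}, I would check the three parts in turn.
\begin{itemize}
\item In the $\Z_2$ part, the blocks $A_1,A_1$ with appended $0,1$ give every $(\mathbf{c},\lambda)$ with $\mathbf{c}\in\Z_2^{\ell-1}$ and $\lambda\in\Z_2$, that is, all of $\Z_2^{\ell}$ once each.
\item In the $\Z_4$ part, the four copies of $A_2$ with appended $0,1,2,3$ give the first kind of normalized primitive vector. The columns of $M_1$, after deleting the leading $2$, give $\{0,2\}^{\ell-1}\times\{1\}$, which is exactly the second kind.
\item In the $\Z_8$ part, the eight copies of $A_3$ with appended $0,\dots,7$ give the first kind. $M_2$ truncated gives $\{0,2,4,6\}^{\ell-1}\times\{1\}$, the second kind.
\end{itemize}
For the order-$4$ step \eqref{eq:construction-order4}, the $\Z_2$ and $\Z_4$ parts are handled identically, now in length $t_1+\ell$ with $M_1$ of the new size. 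In the $\Z_8$ part, the four copies of $A_3$ with appended $0,2,4,6$ extend $(q,2r_1,\dots,2r_{\ell-1})$ by $2r_\ell$ with $r_\ell\in\Z_4$ arbitrary, which is (iii). Because the blocks of each part are disjoint and each is a bijective image of a previous part, ``exactly once'' propagates through the induction.

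For the counts, the number of normalized primitive vectors in $\Z_{2^s}^n$ is the number of primitive ones divided by the $2^{s-1}$ units, namely $(2^{sn}-2^{(s-1)n})/2^{s-1}$. Hence, for $t_3=1$, with $n=t_1+t_2$:
\begin{itemize}
\item $\alpha_1=2^{n}$;
\item $\alpha_1+2\alpha_2=2^n+(4^n-2^n)=4^n$;
\item $4\alpha_3=(8^{t_1}-4^{t_1})4^{t_2}$, so $\alpha_1+2\alpha_2+4\alpha_3=8^{t_1}4^{t_2}$.
\end{itemize}
Each application of \eqref{eq:construction-order2} doubles every part, which gives the factor $2^{t_3-1}$ in all three identities. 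Since $t_1\ge1$, all three parts are nonempty. The length relation \eqref{eq:length-relation} follows by taking logarithms, and $|H^{t_1,t_2,t_3}|=|\cH^{t_1,t_2,t_3}|=8^{t_1}4^{t_2}2^{t_3}$ holds because $\Phi$ is injective.

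The only delicate point is bookkeeping. One must make sure that deleting the first entry is compatible with the recursion: the first row is $(\one\mid\two\mid\four)$ throughout, including in $M_1$ and $M_2$, whose leading entries are $2$ and $4$. One must also make sure that the two kinds of normalized primitive vector are matched exactly, since normalization is with respect to the first odd entry. I expect the verification that $M_1$ and $M_2$ supply precisely the vectors whose first odd entry sits in the new last position to be the main step requiring care.
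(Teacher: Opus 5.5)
Your proof is correct. The paper does not reprove this proposition; it quotes it from \cite{Z2Z4Z8Rank}. Your route is the natural one, and it uses the same normalized-primitive bookkeeping the paper itself relies on in the counts of the proof of Lemma~\ref{lem:block-count}. That route is induction along \eqref{eq:construction-order8} and \eqref{eq:construction-order4}, splitting the normalized primitive vectors of length $n+1$ according to whether the first odd entry sits in the new last position, followed by the orbit count and the doubling under \eqref{eq:construction-order2}.

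The one step you leave implicit is $|\cH^{t_1,t_2,t_3}|=8^{t_1}4^{t_2}2^{t_3}$. Injectivity of $\Phi$ only gives $|H^{t_1,t_2,t_3}|=|\cH^{t_1,t_2,t_3}|$. The group order additionally needs the generator rows to be independent, which the statement presupposes through the word ``type''.

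If you want the argument self-contained, this follows from your own column description. Take a vanishing combination $\epsilon\ww_0+\sum_i x_i\ww_i+\sum_j y_j\vv_j$ and evaluate it at the $\Z_8$ columns $(4,1,q_2,\dots,q_{t_1},2r_1,\dots,2r_{t_2})^T$, with the $q_i$ and $r_j$ free. This forces $x_i=0$ for $i\geq2$, $y_j=0$ in $\Z_4$, and $x_1=4\epsilon$. Evaluating next at the $\Z_2$ column $(1,\zero)^T$ forces $\epsilon=0$, hence $x_1=0$. Finally, restricting to the first copies shows that each duplication \eqref{eq:construction-order2} exactly doubles the order.
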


Note that $\zero\in H^{t_1,t_2,t_3}$, which is what the kernel and the profile
require. The matrix $A^{1,1,1}$ of \eqref{eq:A111} illustrates the three items.  After
the first entry of each column is deleted, the $\Z_2$ part gives all four
vectors of $\Z_2^2$, the $\Z_4$ part gives the six normalized primitive
vectors of $\Z_4^2$ listed above, and the $\Z_8$ part gives the four vectors
$(1,2r)$ with $r\in\Z_4$.  Thus, the type is $(4,6,4;1,1,1)$ and the binary
length is $32=2^5$.

\begin{theorem}[\cite{Z2Z4Z8Construction,Z2Z4Z8Linearity}]
\label{thm:recalled-linearity-kernel}
Let $t_1\geq1$, $t_2\geq0$ and $t_3\geq1$ be integers.  Then,
$H^{t_1,t_2,t_3}$ is a binary Hadamard code of length $2^t$ with
$t+1=3t_1+2t_2+t_3$.  Moreover, the following hold.
\begin{enumerate}[label=\textup{(\roman*)}]
\item The code $H^{t_1,t_2,t_3}$ is linear if and only if $(t_1,t_2)=(1,0)$;
equivalently, the linear members of the family are exactly the codes
$H^{1,0,t_3}$ with $t_3\geq1$.
\item If $H^{t_1,t_2,t_3}$ is nonlinear and $\cH^{t_1,t_2,t_3}_2$ denotes the
subcode of $\cH^{t_1,t_2,t_3}$ consisting of its elements of order at most
two, then $\K(H^{t_1,t_2,t_3})=\Phi(\cH^{t_1,t_2,t_3}_2)$ and
$\kernel(H^{t_1,t_2,t_3})=t_1+t_2+t_3$.
\item If $H^{t_1,t_2,t_3}$ is nonlinear, if $\uu_1,\dots,\uu_{t_1+t_2+t_3}$
are the rows of $A^{t_1,t_2,t_3}$, in any order, and if $o_i=\ord(\uu_i)$,
then $\{\Phi(\tfrac{o_i}{2}\uu_i):1\leq i\leq t_1+t_2+t_3\}$ is a basis of
$\K(H^{t_1,t_2,t_3})$.
\end{enumerate}
\end{theorem}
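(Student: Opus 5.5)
The plan is to follow the recursive construction \eqref{eq:construction-order8}--\eqref{eq:construction-order2} and to argue by induction on the number of steps, using the column description of Proposition \ref{prop:column-description} whenever a direct count is easier.

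\textbf{The Hadamard property.} Since $|H^{t_1,t_2,t_3}|=2N$ by Proposition \ref{prop:column-description}, it suffices to show that every codeword other than $\zero$ and $\one$ has weight $N/2$. The Gray map is distance preserving for the corresponding homogeneous-type weight, so it is enough to work with weights of Gray images. Over $\Z_8$ we read off from \eqref{eq:Carlet} that $\wt_H(\phi_3(u))$ equals $0$ for $u=0$, $4$ for $u=4$, and $2$ otherwise. Over $\Z_4$ the weights are $0,1,2,1$.

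For a message $\xi$, the entry of $\uu(\xi)$ at a column $(c_0,\mathbf c)$ is $c_0\cdot(\text{first coefficient})+\langle\xi',\mathbf c\rangle$. I would sum the Gray weights over the three column sets of Proposition \ref{prop:column-description}. The key fact is that, as $\mathbf c$ runs over all vectors, over the normalized primitive vectors, or over the $\Z_8$-part vectors, the inner products $\langle\xi',\mathbf c\rangle$ are equidistributed on the relevant cosets. This lets each part contribute a computable amount, and the totals add to $N/2$ unless $\uu(\xi)\in\{\zero,\two\text{-}\four\text{ pattern giving }\one\}$.

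An equivalent and probably cleaner route is induction. Each step of the construction replaces $C$ by the union, over the values of the new message coefficient, of concatenations of shifted copies. The weight of a concatenation then reduces to the inductive hypothesis together with the balanced distribution of the appended constant vectors $\zero,\one,\dots$.

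\textbf{The kernel contains $\Phi(\cH_2)$.} Every element of $\cH_2=\cH^{t_1,t_2,t_3}_2$ has the form $\mathbf h=(\mathbf h_1\mid 2\mathbf h_2\mid 4\mathbf h_3)$, that is, $2^{s-1}\lambda$ in each part. By Lemma \ref{lem:gray-shift},
$$\Phi(\mathbf h+\uu)=\Phi(\mathbf h)+\Phi(\uu)\qquad\text{for all }\uu\in\cH.$$
Hence $\Phi(\mathbf h)+H=H$, which gives $\Phi(\cH_2)\subseteq\K(H)$.

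\textbf{Dimension and basis of $\Phi(\cH_2)$.} From \eqref{eq:additive-message}, the group $\cH_2$ consists of the messages with $x_i\in\{0,4\}$, $y_j\in\{0,2\}$ and $\epsilon_k$ free. So $\cH_2$ has $2^{t_1+t_2+t_3}$ elements and is generated by the vectors $\tfrac{o_i}{2}\uu_i$. Since $\Phi$ is additive on $\cH_2$ and injective, these images are linearly independent. This gives (iii) and the value of $\kernel$ in (ii), provided the reverse inclusion holds.

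\textbf{The reverse inclusion (main obstacle).} We must show that if $\uu\in\cH$ has order at least $4$, then $\Phi(\uu)\notin\K(H)$ whenever $(t_1,t_2)\neq(1,0)$. For this I would use the known carry criterion for Carlet's map: $\Phi(\uu)+\Phi(\vv)=\Phi(\uu+\vv+\mathbf e(\uu,\vv))$, where $\mathbf e$ collects the carry terms. These are $2\uu_0\vv_0$ in the $\Z_4$ part and the analogous $2(u_0v_0)+4(\cdots)$ expressions in the $\Z_8$ part.

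The sum $\Phi(\uu)+\Phi(\vv)$ lies in $H$ only if the carry vector lies in $\cH$. Since it is a vector of order at most $2$, it lies in $\cH$ only if it lies in $\cH_2$, whose coordinate patterns are the linear functions of the columns. I would exhibit an explicit $\vv$, namely a generator row of order $8$ or $4$ different from the component of $\uu$. The resulting carry vector is a product (quadratic) function of the column entries, for instance $\bigl(c_{i}c_{j}\bigr)$ read over the columns. By the column description this function is not affine on the $\Z_2$ part, so it is not in $\cH_2$.

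This needs at least two independent coordinates of order $\geq4$ among the message variables, which is exactly the condition $(t_1,t_2)\neq(1,0)$. Splitting into the cases $t_1\geq2$ and $t_2\geq1$, with $\uu$'s highest-order component either of order $8$ or of order $4$, is where the real work lies.

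\textbf{Linearity, item (i).} When $(t_1,t_2)=(1,0)$ every generator pair behaves linearly: the only order-$8$ row sits in $\Z_8$ columns $(1,\ast)$ and $\Z_4$ columns, whose Gray images are affine. A direct check that $H^{1,0,t_3}$ is the first-order Reed--Muller code then gives linearity. For all other $(t_1,t_2)$, the strict inclusion $\K(H)\subsetneq H$ established above gives nonlinearity.
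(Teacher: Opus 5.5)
The paper does not prove this theorem; it is quoted from \cite{Z2Z4Z8Construction,Z2Z4Z8Linearity}, so your argument has to stand on its own. Several pieces of it do:
\begin{itemize}
\item $\Phi(\cH_2)\subseteq\K(H)$ follows from Lemma~\ref{lem:gray-shift}.
\item $\cH_2$ is the set of messages with $x_i\in\{0,4\}$ and $y_j\in\{0,2\}$, because the message map is a group isomorphism.
\item $\Phi$ is additive and injective on $\cH_2$, which gives (iii) and $\kernel(H)\geq t_1+t_2+t_3$.
\item Reducing the Hadamard property to weights is legitimate, since $d_H(\phi_s(u),\phi_s(v))=\wt_H(\phi_s(u-v))$ for $s\leq3$. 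The weight count itself, however, is only asserted, not carried out.
\item $H^{1,0,t_3}$ is indeed a first-order Reed--Muller code.
\end{itemize}
The gap is the reverse inclusion $\K(H)\subseteq\Phi(\cH_2)$, which is the real content of (ii).

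Two of the claims you make there are false. First, the carry vector need not have order at most $2$. Carlet's $\phi_3$ is linear in the digits, so $\phi_3(u)+\phi_3(v)=\phi_3(u+v+6u_0v_0+4u_1v_1)$. Hence $\mathbf e$ has the entry $6$ wherever $u_0v_0=1$ in the $\Z_8$ part, e.g.\ for $\uu=\ww_1$, $\vv=\ww_2$ at a column with $q_1,q_2$ odd. So ``$\mathbf e\in\cH$ iff $\mathbf e\in\cH_2$'' is unjustified. Second, $\mathbf e$ vanishes identically on the $\Z_2$ part, because $\phi_1$ is the identity, so ``not affine on the $\Z_2$ part'' detects nothing. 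What the $\Z_2$ part actually provides is a normal form for the candidates. An element of $\cH$ vanishing there has all order-two coefficients zero and all $x_i,y_j$ even. Its $\Z_4$ part is then $2L(\lambda,\lambda')$ with $L$ a \emph{linear} form in the parity vector of the column. If its $\Z_4$ part vanishes too, it is $\sum_i4c_i\ww_i$, whose $\Z_8$ part $4\sum_ic_i\lambda_i$ depends only on parities. The obstruction has to be found by comparing $\mathbf e$ with these patterns on the $\Z_4$ and $\Z_8$ parts, which also takes care of the order-$4$ carries.

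Moreover, your choice of $\vv$ and the ``quadratic'' mechanism do not cover all cases.
\begin{itemize}
\item \emph{$\uu$ has an odd coefficient.} Its bottom digit on the $\Z_4$ part is a nonzero linear form $L$. Taking $\vv=\ww_k$ (resp.\ $\vv_k$) gives the carry $2L\lambda_k$ (resp.\ $2L\lambda'_k$) there, which is linear when $L$ is that single variable. So $k$ must be chosen to avoid this, which is possible because $t_1+t_2\geq2$. For $\uu=\ww_1+\ww_2$ there is no row ``different from the component of $\uu$'', yet $\vv=\ww_1$ works. One then uses that the parity vectors of the $\Z_4$ part are all the nonzero vectors of $\Z_2^{t_1+t_2}$.
\item \emph{All coefficients of $\uu$ are even}, as for $\uu=2\ww_1$. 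Then $u_0=0$, the $\Z_4$ carry vanishes, and only $4u_1v_1$ remains.
\begin{itemize}
\item For $t_1\geq2$ and $\vv=\ww_k$ this is $4\ell(\lambda)\mu_k$, with $\ell=\sum_{i\in S'}\lambda_i$ and $S'=\{i:x_i\equiv2\bmod4\}$. It depends on the middle digit $\mu_k$ of $q_k$ and not only on parities. Since $\mu_k$ is free exactly when $k\neq\min\{i:q_i\text{ odd}\}$, you must pick $k$ and a column with $\ell(\lambda)=1$ accordingly.
\item For $t_1=1$ and $\vv=\vv_j$ it is $(\zero\mid\zero\mid4\lambda'_j)$, an \emph{affine} pattern. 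It lies outside $\cH$ only because the candidates $4c_1\ww_1$ have constant $\Z_8$ part.
\end{itemize}
\end{itemize}
This case analysis is exactly what you defer as ``the real work'', and the mechanism you announce for it is not the one that operates. The nonlinearity claim in (i) rests on it as well, although there a single instance suffices, such as $\uu=\ww_1$ with $\vv=\ww_2$ or $\vv=\vv_1$.
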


%%%%%%%%%%%%%%%%%%%%%%%%%%%%%%%%%%%%%%%%%%%%%%%%%%%%%%%%%%%%%%%%%%%%%%%%%%%%%
\subsection{Message digits, coordinate digits and the rank}
\label{subsec:digits-rank}
%%%%%%%%%%%%%%%%%%%%%%%%%%%%%%%%%%%%%%%%%%%%%%%%%%%%%%%%%%%%%%%%%%%%%%%%%%%%%

Whenever we compute with the family, we work with $H^{t_1,t_2,1}$ and pass to
a general $t_3$ only at the end, by means of the duplication
\eqref{eq:construction-order2}.  We write $\ww_0=(\one\mid\two\mid\four)$ for
the distinguished row of order $2$, $\ww_1,\dots,\ww_{t_1}$ for the rows of
order $8$ and $\vv_1,\dots,\vv_{t_2}$ for the rows of order $4$ of
$A^{t_1,t_2,1}$, so every additive codeword is written uniquely as
\begin{equation}\label{eq:general-message}
\epsilon\ww_0+\sum_{i=1}^{t_1}x_i\ww_i+\sum_{j=1}^{t_2}y_j\vv_j,
\qquad \epsilon\in\Z_2,\quad x_i\in\Z_8,\quad y_j\in\Z_4,
\end{equation}
with the digits of $x_i$ and $y_j$ as in \eqref{eq:message-digits}.  Every
Boolean function has a unique algebraic normal form, so distinct square-free
monomials in independent binary variables are linearly independent; this fact
justifies every dimension count below.  For a list $\zz=(z_1,\dots,z_n)$ of
binary variables, we write $\sym_j(\zz)=\sum_{i_1<\cdots<i_j}z_{i_1}\cdots
z_{i_j}$ for its $j$th elementary symmetric Boolean polynomial.  For subsets $S\subseteq\{1,\dots,t_1\}$ and
$R\subseteq\{1,\dots,t_2\}$, we abbreviate
\begin{equation}\label{eq:symmetric-notation}
\alpha_S=\sum_{i\in S}a_i,\quad \beta_S=\sum_{i\in S}b_i,\quad
\gamma_S=\sum_{i\in S}c_i,\quad \delta_R=\sum_{j\in R}d_j,\quad
\eta_R=\sum_{j\in R}e_j,
\end{equation}
and we write $\sym_2(a_S)$, $\sym_2(b_S)$, $\sym_2(d_R)$ and $\sym_4(a_S)$ for
the elementary symmetric polynomials of the corresponding sublists; for
instance, $\sym_2(a_S)=\sum_{\{i,k\}\subseteq S}a_ia_k$ and
$\sym_4(a_S)=\sum_{\{i,j,k,l\}\subseteq S}a_ia_ja_ka_l$.  

The next three statements are taken from \cite{Z2Z4Z8Rank} and rewritten in
the notation \eqref{eq:symmetric-notation}.  The first one is the rule
governing the carries of a sum of bits, and the other two describe the binary
digits carried by an additive coordinate in the $\Z_4$ and in the $\Z_8$ part.

\begin{lemma}[{\cite[Lemma 3.6]{Z2Z4Z8Rank}}]\label{lem:carries}
Let $z_1,\dots,z_n\in\Z_2$ and let $w=z_1+\cdots+z_n$ be their ordinary
integer sum.  Then, for every $r\geq0$, the $r$th binary digit of $w$ equals
$\sym_{2^r}(z_1,\dots,z_n)$.  In particular, the bottom digit, the carry to
the next digit and the direct carry to the third digit are $\sym_1$, $\sym_2$
and $\sym_4$, respectively.
\end{lemma}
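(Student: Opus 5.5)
The statement is Lemma \ref{lem:carries}. It concerns the ordinary integer sum $w$ of bits $z_1,\dots,z_n$. I would prove it through the classical identity $\binom{w}{k}\equiv\sum_{|I|=k}\prod_{i\in I}z_i$. The right-hand side of this identity is exactly $\sym_k(z_1,\dots,z_n)$ read as an integer before reduction.

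First, I would observe that $\prod_{i\in I}z_i$ equals $1$ precisely when every $z_i$ with $i\in I$ is $1$. Hence, as integers, $\sum_{|I|=k}\prod_{i\in I}z_i$ counts the $k$-subsets of the set of indices where $z_i=1$. That set has $w$ elements, so the count is $\binom{w}{k}$. Reducing modulo $2$ gives $\sym_k(z_1,\dots,z_n)=\binom{w}{k}\bmod 2$ in $\Z_2$.

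Second, I would invoke Lucas' theorem with $p=2$. Write $w=\sum_j w_j2^j$; the binary expansion of $2^r$ has a single $1$, in position $r$. Lucas' theorem then gives
\[
\binom{w}{2^r}\equiv\binom{w_r}{1}\prod_{j\neq r}\binom{w_j}{0}=w_r \pmod 2 .
\]
If one prefers to avoid quoting Lucas, the same congruence follows by comparing coefficients of $X^{2^r}$ in $(1+X)^w=\prod_j(1+X)^{w_j2^j}\equiv\prod_j(1+X^{2^j})^{w_j}\pmod 2$. On the right, the monomial $X^{2^r}$ arises only from choosing $X^{2^r}$ in the factor with $j=r$, by uniqueness of binary expansions.

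Combining the two steps gives $w_r=\sym_{2^r}(z_1,\dots,z_n)$. The cases $r=0,1,2$ produce the named instances $\sym_1$, $\sym_2$ and $\sym_4$. There is no real obstacle here; the only point needing care is the Lucas/Frobenius step, which I would state explicitly.
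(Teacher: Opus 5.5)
Your proof is correct and complete. The counting identity $\sum_{|I|=k}\prod_{i\in I}z_i=\binom{w}{k}$ holds exactly over the integers. Lucas' theorem then gives $\binom{w}{2^r}\equiv w_r \pmod 2$. Your Frobenius variant also works: it correctly covers $w_r=0$ and $2^r>w$, where both sides vanish.

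The paper itself does not prove this lemma. It quotes it from \cite[Lemma 3.6]{Z2Z4Z8Rank}, adding only the remark after Lemma~\ref{lem:z8-digits} that the $\sym$ terms are the carries, so there is no in-paper argument to compare your route against.

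Two small remarks on what your route gives:
\begin{itemize}
\item Your argument is pointwise in the bits, so it yields an identity of Boolean functions in the $z_i$. That is exactly the form the paper uses, with the $z_i$ being message digits such as the $a_i$ or $b_i$.
\item The same Lucas computation with a general $k=\sum_j k_j2^j$ gives $\sym_k(z_1,\dots,z_n)=\prod_{j:\,k_j=1}\sym_{2^j}(z_1,\dots,z_n)$, for instance $\sym_3=\sym_1\sym_2$. A carry-by-carry induction on $n$ would have to prove the instance $\sym_{2^r-1}=\prod_{j<r}\sym_{2^j}$ separately. The reason is that adding one more bit $z$ changes $w_r$ by $z\prod_{j<r}w_j$, whereas $\sym_{2^r}$ changes by $z\,\sym_{2^r-1}$.
\end{itemize}
Your approach avoids that detour and treats all $r$ uniformly.
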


\begin{lemma}[{\cite[proof of Lemma 3.4]{Z2Z4Z8Rank}}]\label{lem:z4-digits}
Consider the additive coordinate of $A^{t_1,t_2,1}$ in the $\Z_4$ part whose
column is $(2,p_1,\dots,p_{t_1},s_1,\dots,s_{t_2})^T$. Write the binary
expansions $p_i=\lambda_i+2\mu_i$ and $s_j=\lambda'_j+2\mu'_j$ in $\Z_4$, and
put $S=\{i:\lambda_i=1\}$, $T=\{i:\mu_i=1\}$, $R=\{j:\lambda'_j=1\}$ and
$R'=\{j:\mu'_j=1\}$.  Then, its bottom and top binary digits are
\begin{equation}\label{eq:z4-digits}
P=\alpha_S+\delta_R,
\qquad
h=\epsilon+\beta_S+\alpha_T+\eta_R+\delta_{R'}
  +\sym_2(a_S)+\alpha_S\delta_R+\sym_2(d_R),
\end{equation}
and, by Lemma \ref{lem:gray-digits}, the two binary coordinates it produces
span the same space of Boolean functions as $P$ and $h$.  Moreover, as the
additive coordinate runs over the $\Z_4$ part, the quadruple $(S,T,R,R')$ runs
exactly over those quadruples for which $(S,R)\not=(\emptyset,\emptyset)$ and
which satisfy one normalization constraint: $\min S\notin T$ when
$S\not=\emptyset$, and $\min R\notin R'$ when $S=\emptyset$.
\end{lemma}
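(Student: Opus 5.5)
We prove Lemma \ref{lem:z4-digits} by expanding the ring element $u\in\Z_4$ carried by the given additive coordinate as a function of the message, and reading off its two binary digits with the carry rule of Lemma \ref{lem:carries}. By \eqref{eq:general-message} and Proposition \ref{prop:column-description}, the value at the coordinate with column $(2,p_1,\dots,p_{t_1},s_1,\dots,s_{t_2})^T$ is
$$
u=2\epsilon+\sum_{i=1}^{t_1}x_ip_i+\sum_{j=1}^{t_2}y_js_j \pmod 4 .
$$
Here $x_i\in\Z_8$ acts on the $\Z_4$ coordinate through its reduction mod $4$, namely $a_i+2b_i$. Substituting $x_i\equiv a_i+2b_i$, $p_i=\lambda_i+2\mu_i$, $y_j=d_j+2e_j$ and $s_j=\lambda'_j+2\mu'_j$, and discarding multiples of $4$, each product becomes $x_ip_i\equiv\lambda_ia_i+2(\lambda_ib_i+\mu_ia_i)$. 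Likewise $y_js_j\equiv\lambda'_jd_j+2(\lambda'_je_j+\mu'_jd_j)$. Hence
$$
u\equiv\Bigl(\sum_{i\in S}a_i+\sum_{j\in R}d_j\Bigr)+2\bigl(\epsilon+\beta_S+\alpha_T+\eta_R+\delta_{R'}\bigr)\pmod 4,
$$
where the first bracket is an ordinary integer sum of bits.

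By Lemma \ref{lem:carries}, the integer sum of the bits $\{a_i\}_{i\in S}\cup\{d_j\}_{j\in R}$ has bottom digit $\alpha_S+\delta_R=P$. Its second digit is $\sym_2$ of the joint list. Splitting that list into its $a$-part and $d$-part gives
$$
\sym_2=\sym_2(a_S)+\alpha_S\delta_R+\sym_2(d_R).
$$
The second-level term $2(\cdots)$ contributes only to the top digit and produces no further carry mod $4$. So the top digit of $u$ is that $\sym_2$ plus the bracket, which is exactly $h$ in \eqref{eq:z4-digits}. The statement about the span of the two Gray coordinates is then an immediate application of Lemma \ref{lem:gray-digits} with $\Omega=\cM$ and $s=2$.

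For the range of $(S,T,R,R')$, note that the map $(p,s)\mapsto(S,T,R,R')$ is a bijection between $\Z_4^{t_1+t_2}$ and quadruples of subsets satisfying $T\subseteq\{1,\dots,t_1\}$, $R'\subseteq\{1,\dots,t_2\}$, $S\subseteq\{1,\dots,t_1\}$ and $R\subseteq\{1,\dots,t_2\}$. By Proposition \ref{prop:column-description}(ii), the truncated columns are exactly the normalized primitive vectors, each occurring once. Primitivity means that some entry is odd, that is, $(S,R)\neq(\emptyset,\emptyset)$. Normalization means that the first odd entry equals $1$, not $3$, so its top bit is $0$. If $S\neq\emptyset$, the first odd entry is $p_{\min S}$, giving $\min S\notin T$. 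Otherwise it is $s_{\min R}$, giving $\min R\notin R'$.

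The only delicate point is the carry bookkeeping. One must check that reducing $x_i$ modulo $4$ is legitimate, and that no hidden carry arises from the $2(\cdots)$ part. The latter holds because twice anything affects only the top digit modulo $4$.
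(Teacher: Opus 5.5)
Your proof is correct and takes the same route the paper indicates: the paper cites \cite{Z2Z4Z8Rank} for this lemma and remarks that it follows from adding the column entries weighted by the message coefficients and reading off the binary digits with Lemma \ref{lem:carries}. Your reduction $x_i\equiv a_i+2b_i \pmod 4$, the splitting of $\sym_2$ of the joint list into $\sym_2(a_S)+\alpha_S\delta_R+\sym_2(d_R)$, and the derivation of the range of $(S,T,R,R')$ from Proposition \ref{prop:column-description}(ii) are all sound.
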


\begin{lemma}[{\cite[Lemmas 3.5 and 3.7]{Z2Z4Z8Rank}}]\label{lem:z8-digits}
Consider the additive coordinate of $A^{t_1,t_2,1}$ in the $\Z_8$ part whose
column is
\begin{equation}\label{eq:z8-column}
(4,q_1,\dots,q_{t_1},2r_1,\dots,2r_{t_2})^T,
\end{equation}
write the binary expansions
\begin{equation}\label{eq:column-digits}
q_i=\lambda_i+2\mu_i+4\nu_i\ \text{ in }\Z_8,
\qquad
r_j=\lambda'_j+2\mu'_j\ \text{ in }\Z_4,
\end{equation}
and put $S=\{i:\lambda_i=1\}$, $T=\{i:\mu_i=1\}$, $U=\{i:\nu_i=1\}$,
$R=\{j:\lambda'_j=1\}$ and $R'=\{j:\mu'_j=1\}$.  Then, $S\not=\emptyset$ and,
writing
\begin{equation}\label{eq:Q0F0}
\begin{aligned}
Q_0&=\beta_S+\delta_R+\sym_2(a_S),\\
F_0&=\epsilon+\gamma_S+\eta_R+\sym_2(b_S)+\sym_2(d_R)+\sym_4(a_S)
    +\beta_S\delta_R+\bigl(\beta_S+\delta_R\bigr)\sym_2(a_S),
\end{aligned}
\end{equation}
the bottom, the middle and the top binary digit of that coordinate are
\begin{align}
P&=\alpha_S,\label{eq:bottom-digit}\\
Q&=Q_0+\alpha_T,\label{eq:middle-digit}\\
F&=F_0+\alpha_U+\delta_{R'}+\beta_T+\sym_2(a_T)+\alpha_T\,Q_0,
  \label{eq:top-digit}
\end{align}
and, by Lemma \ref{lem:gray-digits}, the four binary coordinates it produces
span the same space of Boolean functions as $P$, $Q$ and $F$.  Moreover,
writing $i_0=\min S$, as the additive coordinate runs over the $\Z_8$ part the
set $S$ runs over all the nonempty subsets of $\{1,\dots,t_1\}$, the pair
$(T,U)$ runs over all the pairs of subsets of
$\{1,\dots,t_1\}\setminus\{i_0\}$, and $(R,R')$ runs over all the pairs of
subsets of $\{1,\dots,t_2\}$.
\end{lemma}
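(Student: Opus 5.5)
The range statement and the three digit formulas are independent, so I would prove them separately. For the range, I would read it off Proposition~\ref{prop:column-description}(iii). The truncated column is $(q_1,\dots,q_{t_1},2r_1,\dots,2r_{t_2})$ with $(q_1,\dots,q_{t_1})$ normalized primitive in $\Z_8^{t_1}$ and $(r_1,\dots,r_{t_2})\in\Z_4^{t_2}$ arbitrary.
\begin{itemize}
\item Primitivity says some $q_i$ is odd, that is, $S\neq\emptyset$.
\item Normalization fixes $q_{i_0}=1$ for $i_0=\min S$, so $i_0\notin T\cup U$.
\item Every $q_i$ with $i<i_0$ is even, and every $q_i$ with $i>i_0$ is arbitrary. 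So once $S$ is fixed, $(T,U)$ is an arbitrary pair of subsets of $\{1,\dots,t_1\}\setminus\{i_0\}$.
\item Since $(r_j)$ is arbitrary, $(R,R')$ is an arbitrary pair of subsets of $\{1,\dots,t_2\}$.
\end{itemize}
The claim that the four binary coordinates span the same space as $P,Q,F$ is Lemma~\ref{lem:gray-digits} with $s=3$.

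For the digits, I would expand the value at this coordinate of the codeword \eqref{eq:general-message}, which is $u=4\epsilon+\sum_i x_iq_i+\sum_j 2y_jr_j$ in $\Z_8$. Using \eqref{eq:message-digits} and \eqref{eq:column-digits}, and reducing modulo $8$,
$$x_iq_i\equiv a_i\lambda_i+2(b_i\lambda_i+a_i\mu_i)+4(c_i\lambda_i+b_i\mu_i+a_i\nu_i),\qquad 2y_jr_j\equiv 2d_j\lambda'_j+4(e_j\lambda'_j+d_j\mu'_j).$$
Hence $u$ is congruent modulo $8$ to the integer
$$L_1+2L_2+4L_4,$$
where $L_1,L_2,L_4$ are the integer sums of the following lists of bits:
\begin{itemize}
\item level $1$: the $a_i$ with $i\in S$;
\item level $2$: the $b_i$ with $i\in S$, the $a_i$ with $i\in T$, and the $d_j$ with $j\in R$;
\item level $4$: $\epsilon$, the $c_i$ with $i\in S$, the $b_i$ with $i\in T$, the $a_i$ with $i\in U$, the $e_j$ with $j\in R$, and the $d_j$ with $j\in R'$.
\end{itemize}
Only the three lowest binary digits of this integer matter, and I would extract them with Lemma~\ref{lem:carries}.

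\emph{Level $1$.} Its bottom digit is $P=\alpha_S$. It carries $\sym_2(a_S)$ to weight $2$ and $\sym_4(a_S)$ directly to weight $4$.

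\emph{Level $2$.} The bits sitting at weight $2$ are now the level-$2$ list together with the single extra bit $\sym_2(a_S)$. Their parity is $\beta_S+\alpha_T+\delta_R+\sym_2(a_S)=Q_0+\alpha_T$, which gives \eqref{eq:middle-digit}. Their carry to weight $4$ is $\sym_2$ of that augmented list. I would expand it by the rule
$$\sym_2(X\cup Y)=\sym_2(X)+\sym_2(Y)+\sym_1(X)\sym_1(Y)$$
for disjoint lists $X,Y$. This gives
$$\sym_2(b_S)+\sym_2(a_T)+\sym_2(d_R)+\beta_S\alpha_T+\beta_S\delta_R+\alpha_T\delta_R+(\beta_S+\alpha_T+\delta_R)\sym_2(a_S).$$
Here the lists are treated as lists of bit values, so an index $i\in S\cap T$ contributes $a_i$ once to each list. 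The identity $a_i^2=a_i$ then causes no difficulty.

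\emph{Level $4$.} The top digit is the parity of the level-$4$ list, plus the carry from level $2$, plus $\sym_4(a_S)$. Collecting terms, I would check that this equals $F_0+\alpha_U+\delta_{R'}+\beta_T+\sym_2(a_T)+\alpha_TQ_0$. The key observation is that the three cross terms involving $\alpha_T$, namely $\alpha_T\beta_S$, $\alpha_T\delta_R$ and $\alpha_T\sym_2(a_S)$, combine into exactly $\alpha_TQ_0$.

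The main obstacle is the bookkeeping of carries. First, the carry from weight $1$ must be fed back as an extra bit in the weight-$2$ list, and not added to the digit afterwards. Second, one must justify that reduction modulo $8$ only discards weights $\geq 8$, so the three low digits of the integer $L_1+2L_2+4L_4$ are the digits of $u$. Once that is set up, the remaining work is a term-by-term comparison of the expansion with \eqref{eq:Q0F0} and \eqref{eq:top-digit}.
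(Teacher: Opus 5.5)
Your proposal is correct and follows the route the paper indicates for this imported lemma. You read the range of $(S,T,U,R,R')$ off item (iii) of Proposition \ref{prop:column-description}, write the coordinate value as $L_1+2L_2+4L_4$, and extract its digits with Lemma \ref{lem:carries}, feeding $\sym_2(a_S)$ into the weight-$2$ list and $\sym_4(a_S)$ directly into weight $4$; the closing term-by-term comparison, including the regrouping of $\alpha_T\beta_S+\alpha_T\delta_R+\alpha_T\sym_2(a_S)$ into $\alpha_TQ_0$, checks out.
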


Both lemmas are obtained by adding the entries of the column with the
coefficients of the message and determining the binary digits of the result
using Lemma \ref{lem:carries}. The symmetric polynomials $\sym_2$ and $\sym_4$
appearing in \eqref{eq:z4-digits} and in \eqref{eq:Q0F0} are exactly the
carries that this produces.  In both of them, $P$ denotes the bottom digit of
the coordinate under consideration.

The functions $Q_0$ and $F_0$ are the values of $Q$ and $F$ for
$T=U=R'=\emptyset$, so \eqref{eq:middle-digit} and \eqref{eq:top-digit}
display exactly how the two higher digits depend on the free digits of the
additive coordinate.  This is the form in which we will use Lemma
\ref{lem:z8-digits}, together with Lemma \ref{lem:mobius}.

The additive coordinates in the $\Z_2$ part need no separate lemma.  By item
(i) of Proposition \ref{prop:column-description}, such a coordinate has column
$(1,\lambda,\lambda')^T$ with $\lambda\in\Z_2^{t_1}$ and
$\lambda'\in\Z_2^{t_2}$ arbitrary, so, in the notation of Lemma
\ref{lem:z4-digits}, its single binary coordinate function is
$\epsilon+\alpha_S+\delta_R$.  Letting $S$ and $R$ vary, the $\Z_2$ part
therefore contributes exactly the space
\begin{equation}\label{eq:z2-part-span}
\cV_2=\Span\{\epsilon,a_1,\dots,a_{t_1},d_1,\dots,d_{t_2}\}.
\end{equation}

\begin{example}
Let us read the three lemmas on the matrix $A^{1,1,1}$ of \eqref{eq:A111},
where $t_1=t_2=1$. We write $x_1=a_1+2b_1+4c_1\in\Z_8$ and
$y_1=d_1+2e_1\in\Z_4$ for the two message coefficients, and $\epsilon\in\Z_2$
for the coefficient of $\ww_0$.  Consider the additive coordinate of the
$\Z_4$ part whose column is $(2,1,2)^T$.  Here, $p_1=1$ and $s_1=2$, so
$\lambda_1=1$, $\mu_1=0$, $\lambda'_1=0$ and $\mu'_1=1$; hence $S=\{1\}$,
$T=\emptyset$, $R=\emptyset$ and $R'=\{1\}$.  Substituting in
\eqref{eq:z4-digits}, the two digit functions are
$$
P=a_1
\qquad\text{and}\qquad
h=\epsilon+b_1+d_1.
$$
Consider next the additive coordinate of the $\Z_8$ part whose column is
$(4,1,2)^T$, that is, $q_1=1$ and $r_1=1$.  Then, $\lambda_1=1$,
$\mu_1=\nu_1=0$ and $\lambda'_1=1$, $\mu'_1=0$, so $S=\{1\}$,
$T=U=R'=\emptyset$ and $R=\{1\}$.  By \eqref{eq:Q0F0}, we get
$Q_0=b_1+d_1$ and $F_0=\epsilon+c_1+e_1+b_1d_1$, and \eqref{eq:bottom-digit},
\eqref{eq:middle-digit} and \eqref{eq:top-digit} give
$$
P=a_1,\qquad Q=b_1+d_1,\qquad F=\epsilon+c_1+e_1+b_1d_1.
$$
The coordinate of the $\Z_2$ part with column $(1,1,0)^T$ has the single
coordinate function $\epsilon+a_1$.
\end{example}
We finally recall the rank formula, the behaviour of the rank under the
duplication \eqref{eq:construction-order2}, which is a Plotkin extension, and
the exact list of the pairs that the rank and the dimension of the kernel fail
to separate.

\begin{theorem}[{\cite[Theorem 3.15]{Z2Z4Z8Rank}}]\label{thm:rank-main}
Let $t_1\geq1$, $t_2\geq0$ and $t_3\geq1$ be integers.  Then,
\begin{equation}\label{eq:rank-binomial}
\rank(H^{t_1,t_2,t_3})=t_3-1+4t_1+4\binom{t_1}{2}+2\binom{t_1}{3}
+\binom{t_1}{4}+t_2\binom{t_1+2}{2}+\binom{t_2+1}{2}.
\end{equation}
In particular, $\rank(H^{t_1,t_2,t_3+1})=1+\rank(H^{t_1,t_2,t_3})$.
\end{theorem}

\begin{proposition}[{\cite[Theorem 4.10]{Z2Z4Z8Rank}}]\label{prop:all-collisions}
Let $H^{t_1,t_2,t_3}$ and $H^{t'_1,t'_2,t'_3}$ be two nonlinear members of the
family with distinct types, having the same length $2^t$, the same rank and
the same dimension of the kernel.  Then, after interchanging them if
necessary, either 
$$
(t_1,t_2,t_3)=(1,2,t-6)\ \text{ and }\ (t'_1,t'_2,t'_3)=(2,0,t-5),
\qquad t\geq7, 
$$
or
$$
(t_1,t_2,t_3)=(2,2,t-9)\ \text{ and }\ (t'_1,t'_2,t'_3)=(3,0,t-8).
\qquad t\geq10.
$$
Conversely, each of these two pairs does have equal length,
equal rank and equal dimension of the kernel.
\end{proposition}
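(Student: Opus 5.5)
This statement is quoted from \cite[Theorem 4.10]{Z2Z4Z8Rank}. In this paper it is only a recalled tool. Still, I would prove it by a direct Diophantine analysis of the three invariants. Fix the length $2^t$, so two types $(t_1,t_2,t_3)$ and $(t'_1,t'_2,t'_3)$ satisfy $3t_1+2t_2+t_3=3t'_1+2t'_2+t'_3=t+1$ by \eqref{eq:length-relation}. Equality of the kernel dimensions gives $t_1+t_2+t_3=t'_1+t'_2+t'_3$ by item (ii) of Theorem \ref{thm:recalled-linearity-kernel}. Subtracting yields $2t_1+t_2=2t'_1+t'_2$. Hence $t_2=t'_2+2(t'_1-t_1)$ and $t_3=t'_3-(t'_1-t_1)$. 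Interchanging the two codes if necessary, I would write $t'_1=t_1+m$ with $m\geq1$; the case $m=0$ forces equal types. Then $t_2=t'_2+2m$ and $t_3=t'_3-m$.

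Next I would substitute into the rank formula \eqref{eq:rank-binomial}. Denote by $g(t_1)=4t_1+4\binom{t_1}{2}+2\binom{t_1}{3}+\binom{t_1}{4}$ the part depending only on $t_1$. The equation $\rank(H^{t_1,t_2,t_3})=\rank(H^{t'_1,t'_2,t'_3})$ becomes
$$
-m+g(t_1)+(t'_2+2m)\binom{t_1+2}{2}+\binom{t'_2+2m+1}{2}
= g(t_1+m)+t'_2\binom{t_1+m+2}{2}+\binom{t'_2+1}{2}.
$$
The plan is to view this as $\Delta(t_1,m,t'_2)=0$, where $\Delta$ is the right side minus the left side, and to show $\Delta>0$ except in the two listed cases. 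The coefficient of $t'_2$ in $\Delta$ is $\binom{t_1+m+2}{2}-\binom{t_1+2}{2}-2m$, which is $\geq 0$ for $m\geq1$, $t_1\geq1$. The $\binom{\cdot}{2}$ terms in $t'_2$ contribute $-2mt'_2-\binom{2m+1}{2}$ after expansion. Collecting terms, $\Delta$ is linear in $t'_2$, with slope $\binom{t_1+m+2}{2}-\binom{t_1+2}{2}-4m$. Its constant term is $g(t_1+m)-g(t_1)+m-2m\binom{t_1+2}{2}-\binom{2m+1}{2}$.

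The key step, and the main obstacle, is a careful sign analysis. I would show that $g(t_1+m)-g(t_1)$ grows roughly like $m\,t_1^3/6$ plus higher powers of $m$. This term dominates $2m\binom{t_1+2}{2}+\binom{2m+1}{2}$ for all but finitely many small $(t_1,m)$. The slope is also positive unless $t_1,m$ are small. So only a finite list of $(t_1,m)$ needs inspection, say $t_1\leq3$, $m\leq2$. For each entry one solves the linear equation in $t'_2$, requiring $t'_2\geq0$, $t'_3-m\geq1$, and nonlinearity of both codes by item (i) of Theorem \ref{thm:recalled-linearity-kernel}.

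I expect the survivors to be $(t_1,m,t'_2)=(1,1,0)$ and $(2,1,0)$, giving the types $(1,2,t-6),(2,0,t-5)$ and $(2,2,t-9),(3,0,t-8)$. The bounds on $t$ come from $t_3\geq1$. For the converse, I would plug these two pairs into \eqref{eq:rank-binomial} and check the equality directly. I would also confirm that the length and kernel relations hold by construction.
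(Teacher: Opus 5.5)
The paper does not prove this statement (it is quoted from \cite[Theorem 4.10]{Z2Z4Z8Rank}), so your reduction has to stand on its own. It breaks at the step you call the key one.

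You first say, correctly, that the coefficient of $t'_2$ in $\Delta$ is $\binom{t_1+m+2}{2}-\binom{t_1+2}{2}-2m$. The $-2m$ there is exactly the term $-2mt'_2$ coming from $\binom{t'_2+1}{2}-\binom{t'_2+2m+1}{2}$, and nothing else in $\Delta$ involves $t'_2$. You then subtract it a second time and get the slope $\binom{t_1+m+2}{2}-\binom{t_1+2}{2}-4m=\tfrac{m(2t_1+m-5)}{2}$. This equals $-1$ at $(t_1,m)=(1,1)$ and $0$ at $(t_1,m)=(2,1)$.

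The constant term also vanishes at $(2,1)$. So your plan, carried out literally, yields $\Delta\equiv0$ there, that is, a collision between $H^{2,t'_2+2,\cdot}$ and $H^{3,t'_2,\cdot}$ for every $t'_2\geq0$. This is false: $H^{2,3,1}$ and $H^{3,1,2}$ both have length $2^{12}$ and kernel dimension $6$, but ranks $36$ and $38$. At $(1,1)$ your slope would give $\Delta=-t'_2<0$, contradicting your claim that $\Delta>0$ outside the listed cases. Your list of survivors is the right one, but it does not follow from the formulas you wrote.

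Independently of this slip, the sign analysis is only heuristic. Phrases like ``grows roughly like $mt_1^3/6$'' and ``dominates for all but finitely many $(t_1,m)$'' do not justify a cutoff such as $t_1\leq3$, $m\leq2$, because you need control in $t_1$ and $m$ simultaneously.

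With the correct slope, no case inspection is needed. One has $\sigma=\binom{t_1+m+2}{2}-\binom{t_1+2}{2}-2m=\tfrac{m(2t_1+m-1)}{2}\geq1$. For the constant term, Pascal's rule gives $g(n+1)-g(n)=(n+1)(n+2)+2+\binom{n}{3}$. Summing over $n=t_1,\dots,t_1+m-1$ yields
$$
c=g(t_1+m)-g(t_1)+m-2m\binom{t_1+2}{2}-\binom{2m+1}{2}
=m(m-1)\Bigl(t_1+\frac{m-2}{3}\Bigr)+\sum_{n=t_1}^{t_1+m-1}\binom{n}{3}.
$$
Hence $c=\binom{t_1}{3}$ when $m=1$, and $c\geq m(m-1)t_1\geq2$ when $m\geq2$. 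Therefore $\Delta=c+\sigma t'_2$ vanishes if and only if $m=1$, $t_1\in\{1,2\}$ and $t'_2=0$, which gives exactly the two families.

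The rest of your argument is fine: the reduction to $t_2=t'_2+2m$, $t_3=t'_3-m$; the range of $t$ forced by $t_3\geq1$; the nonlinearity check on the survivors; and the converse, where the common rank is $t+6$ for the first pair and $t+17$ for the second.
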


%%%%%%%%%%%%%%%%%%%%%%%%%%%%%%%%%%%%%%%%%%%%%%%%%%%%%%%%%%%%%%%%%%%%%%%%%%%%%
\section{The kernel-block rank profile}\label{sec:profile}
%%%%%%%%%%%%%%%%%%%%%%%%%%%%%%%%%%%%%%%%%%%%%%%%%%%%%%%%%%%%%%%%%%%%%%%%%%%%%

This section introduces the invariant announced in Section \ref{sec:intro} and
computes it for the whole family.  The idea is simple.  The kernel of a code
containing the zero word is a linear subcode, so it can be used to compare
coordinate positions: two binary coordinates are declared equivalent when
every kernel word takes the same value in both of them.  The resulting
partition of the coordinate set is intrinsic, and its classes will be called
kernel blocks.  On each block we puncture the binary span and record its
dimension, and the multiset of these local dimensions is the kernel-block rank
profile.

%%%%%%%%%%%%%%%%%%%%%%%%%%%%%%%%%%%%%%%%%%%%%%%%%%%%%%%%%%%%%%%%%%%%%%%%%%%%%
\subsection{Definition and invariance}\label{subsec:profile-definition}
%%%%%%%%%%%%%%%%%%%%%%%%%%%%%%%%%%%%%%%%%%%%%%%%%%%%%%%%%%%%%%%%%%%%%%%%%%%%%

\begin{definition}\label{def:kernel-block-profile}
Let $C\subseteq\Z_2^N$ be a binary code with $\zero\in C$ and put $K=\K(C)$.
Define a relation on the set $\{1,\dots,N\}$ of binary coordinates by
\begin{equation}\label{eq:kernel-relation}
i\sim_K j\quad\Longleftrightarrow\quad
x_i=x_j\ \text{ for every }\xx=(x_1,\dots,x_N)\in K.
\end{equation}
This is an equivalence relation; its classes are called the \emph{kernel
blocks} of $C$, and the corresponding partition of $\{1,\dots,N\}$ is denoted
by $\cB(K)$.  For a block $B\in\cB(K)$, we call
\begin{equation}\label{eq:local-rank}
\rho_C(B)=\dim\langle C\rangle|_B
\end{equation}
the \emph{local rank} of $C$ on $B$, which by \eqref{eq:span-puncture} is also
the dimension of $\langle C|_B\rangle$.
The multiset
\begin{equation}\label{eq:profile}
\cR(C)=\{\!\{\rho_C(B):\ B\in\cB(K)\}\!\}
\end{equation}
is called the \emph{kernel-block rank profile} of $C$.
\end{definition}

The relation \eqref{eq:kernel-relation} is reflexive, symmetric and transitive, so $\cB(K)$ is indeed a partition of the whole
coordinate set into nonempty and pairwise disjoint classes.  The word
``equivalence'' does not imply that the classes have the same cardinality; for
a general code they need not.  We emphasise that \eqref{eq:profile} is a
multiset and not a set: the blocks carry no preferred order, but the number of
blocks having each local rank is part of the invariant.  We write
$\{\!\{\rho^{[M]}\}\!\}$ to indicate that the value $\rho$ occurs with
multiplicity $M$.  Finally, since $K$ is a linear space, condition
\eqref{eq:kernel-relation} may be checked on any basis of $K$, and we will
always do so.  
\begin{example}
Let $C$ be the binary linear code of length $5$ generated by $11000$ and
$00111$, so $\K(C)=C=\langle C\rangle$.  In every codeword, the first two
coordinates take the same value and the last three coordinates take the same
value, while the generator $11000$ separates the two groups.  Hence
$\cB(K)=\{\{1,2\},\{3,4,5\}\}$.  The two blocks have different sizes,
and on either block the punctured code is a one-dimensional repetition code.
Therefore, $\cR(C)=\{\!\{1^{[2]}\}\!\}$.  Replacing this multiset by the ordinary set
$\{1\}$ would lose the information that there are two kernel blocks.
\end{example}

\begin{lemma}\label{lem:profile-invariant}
Let $C,D\subseteq\Z_2^N$ be equivalent binary codes with $\zero\in C$ and
$\zero\in D$.  Then, $\cR(C)=\cR(D)$.  More precisely, if $D=\zz+\pi(C)$ for a
coordinate permutation $\pi$ and a vector $\zz\in\Z_2^N$, then $\pi$ maps
$\cB(\K(C))$ bijectively onto $\cB(\K(D))$ and $\rho_D(\pi(B))=\rho_C(B)$ for
every $B\in\cB(\K(C))$.
\end{lemma}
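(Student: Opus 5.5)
The plan is to derive everything from Lemma \ref{lem:rank-kernel-invariance}, which gives the two identities $\K(D)=\pi(\K(C))$ and $\langle D\rangle=\pi(\langle C\rangle)$. Throughout, we use the convention that a coordinate permutation acts on vectors by $\pi(\xx)_{\pi(i)}=x_i$, so that the entry of $\xx$ at position $i$ sits at position $\pi(i)$ in $\pi(\xx)$. With this convention, puncturing commutes with $\pi$ in the sense that $\pi(\xx)|_{\pi(B)}$ is $\xx|_B$ with its coordinates reordered.

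\emph{Step 1: $\pi$ carries kernel blocks to kernel blocks.} Take $i,j\in\{1,\dots,N\}$. We have $\pi(i)\sim_{\K(D)}\pi(j)$ if and only if $y_{\pi(i)}=y_{\pi(j)}$ for every $\yy\in\K(D)$. Since $\K(D)=\pi(\K(C))$, every such $\yy$ is $\pi(\xx)$ for some $\xx\in\K(C)$. The condition therefore becomes $x_i=x_j$ for every $\xx\in\K(C)$, that is, $i\sim_{\K(C)}j$. So the bijection $\pi$ of $\{1,\dots,N\}$ preserves and reflects the two equivalence relations. Hence it sends each class of $\sim_{\K(C)}$ onto a class of $\sim_{\K(D)}$, and this map $B\mapsto\pi(B)$ is a bijection $\cB(\K(C))\to\cB(\K(D))$, with inverse induced by $\pi^{-1}$.

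\emph{Step 2: local ranks agree.} Fix $B\in\cB(\K(C))$. By the first identity of Lemma \ref{lem:rank-kernel-invariance},
$$\langle D\rangle|_{\pi(B)}=\pi(\langle C\rangle)|_{\pi(B)}.$$
By the remark on puncturing, the right-hand side is the image of $\langle C\rangle|_B$ under the coordinate reordering induced by $\pi$ restricted to $B$. That reordering is a linear isomorphism $\Z_2^{B}\to\Z_2^{\pi(B)}$, so $\rho_D(\pi(B))=\rho_C(B)$.

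\emph{Step 3: the profiles coincide.} Since $B\mapsto\pi(B)$ is a bijection of block sets that preserves the local rank, the multisets $\cR(C)$ and $\cR(D)$ coincide, multiplicities included.

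There is no real obstacle here. The only point requiring care is bookkeeping of how $\pi$ acts on indices versus vectors, so that the puncturing identity in Step 2 and the index correspondence in Step 1 use the same convention. Note also that the translation $\zz$ plays no role: the identities of Lemma \ref{lem:rank-kernel-invariance} have already absorbed it, and this uses the hypothesis $\zero\in C$ and $\zero\in D$.
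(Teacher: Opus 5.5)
Your proposal is correct and follows the paper's proof essentially step for step. Both take $\K(D)=\pi(\K(C))$ and $\langle D\rangle=\pi(\langle C\rangle)$ from Lemma \ref{lem:rank-kernel-invariance}, use $y_{\pi(i)}=x_i$ to show that $i\sim_{\K(C)}j$ if and only if $\pi(i)\sim_{\K(D)}\pi(j)$, and observe that $\langle D\rangle|_{\pi(B)}$ is a coordinate permutation of $\langle C\rangle|_B$.
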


\begin{proof}
By Lemma \ref{lem:rank-kernel-invariance}, we have $\K(D)=\pi(\K(C))$ and
$\langle D\rangle=\pi(\langle C\rangle)$.  A vector $\yy=\pi(\xx)$ satisfies
$y_{\pi(i)}=x_i$ for every $i$, so
$$
x_i=x_j\ \text{ for every }\xx\in\K(C)
\quad\Longleftrightarrow\quad
y_{\pi(i)}=y_{\pi(j)}\ \text{ for every }\yy\in\K(D);
$$
that is, $i\sim_{\K(C)}j$ if and only if $\pi(i)\sim_{\K(D)}\pi(j)$.  Hence,
$\pi$ carries the blocks of $C$ bijectively onto the blocks of $D$. Moreover, $\langle D\rangle|_{\pi(B)}=\pi(\langle C\rangle)|_{\pi(B)}$ is obtained from $\langle C\rangle|_B$ by a permutation of coordinates, so the two spaces have the same dimension and $\rho_D(\pi(B))=\rho_C(B)$.  The two multisets \eqref{eq:profile} therefore coincide.
\end{proof}

In the linear case, the profile carries nothing.  Indeed, if $C$ is linear,
then $\K(C)=C$, so all the codewords are constant on every block and the
punctured span there is contained in $\{\zero,\one\}$.  For a linear Hadamard code of length $2^t$, all the blocks are singletons, since the columns of a generator matrix are all distinct, and the all-one word belongs to the code; hence $\cR(C)=\{\!\{1^{[2^t]}\}\!\}$. Since linearity is itself an equivalence invariant, the linear member $H^{1,0,t-2}$ of the family never has to be compared with the nonlinear ones, and from now on we always assume that
$H^{t_1,t_2,t_3}$ is nonlinear, that is, that $(t_1,t_2)\not=(1,0)$ by item
(i) of Theorem \ref{thm:recalled-linearity-kernel}.

%%%%%%%%%%%%%%%%%%%%%%%%%%%%%%%%%%%%%%%%%%%%%%%%%%%%%%%%%%%%%%%%%%%%%%%%%%%%%
\subsection{The kernel labels and the blocks}\label{subsec:labels}
%%%%%%%%%%%%%%%%%%%%%%%%%%%%%%%%%%%%%%%%%%%%%%%%%%%%%%%%%%%%%%%%%%%%%%%%%%%%%

Let $H^{t_1,t_2,1}$ be nonlinear and keep the notation of Subsection
\ref{subsec:digits-rank}, in which $\ww_0=(\one\mid\two\mid\four)$ is the
distinguished row of order $2$, the rows $\ww_1,\dots,\ww_{t_1}$ have order
$8$ and the rows $\vv_1,\dots,\vv_{t_2}$ have order $4$.  By item (iii) of
Theorem \ref{thm:recalled-linearity-kernel}, a basis of $\K(H^{t_1,t_2,1})$ is
obtained by halving each row, that is,
\begin{equation}\label{eq:kernel-basis}
\Phi(\ww_0),\qquad \Phi(4\ww_i)\ (1\leq i\leq t_1),\qquad
\Phi(2\vv_j)\ (1\leq j\leq t_2).
\end{equation}
Since $\phi_2(2)=(1,1)$ and $\phi_3(4)=(1,1,1,1)$, the first of these vectors
is
$$
\Phi(\ww_0)=\Phi(\one\mid\two\mid\four)=\one,
$$
which takes the value $1$ in every binary coordinate and therefore does not
separate any two of them.  The remaining $t_1+t_2$ basis vectors do the whole
work, and we attach to every binary coordinate $\ell$ the vector
\begin{equation}\label{eq:label}
\lab(\ell)=\bigl(
(\Phi(4\ww_1))_\ell,\dots,(\Phi(4\ww_{t_1}))_\ell,
(\Phi(2\vv_1))_\ell,\dots,(\Phi(2\vv_{t_2}))_\ell
\bigr)\in\Z_2^{t_1}\times\Z_2^{t_2},
\end{equation}
which we call the \emph{kernel label} of $\ell$.  By
\eqref{eq:kernel-relation}, two binary coordinates lie in the same kernel
block exactly when they have the same label.  We therefore write
\begin{equation}\label{eq:block-def}
B_{u,v}=\bigl\{\ell:\ \lab(\ell)=(u,v)\bigr\},
\qquad (u,v)\in\Z_2^{t_1}\times\Z_2^{t_2},
\end{equation}
and the nonempty sets $B_{u,v}$ are precisely the kernel blocks of
$H^{t_1,t_2,1}$.  A different basis of the kernel would replace the labels by
their images under an invertible linear map, without changing which
coordinates have equal labels; the labels are only a convenient way of
describing the intrinsic partition $\cB(K)$.

\begin{lemma}\label{lem:kernel-labels}
Let $H^{t_1,t_2,1}$ be nonlinear and write $\lab(\ell)=(u,v)$ with
$u\in\Z_2^{t_1}$ and $v\in\Z_2^{t_2}$.  Then, the labels are the following.
\begin{enumerate}[label=\textup{(\roman*)}]
\item If $\ell$ is one of the four binary coordinates produced by the additive
coordinate in the $\Z_8$ part whose column is
$(4,q_1,\dots,q_{t_1},2r_1,\dots,2r_{t_2})^T$, then
\begin{equation}\label{eq:z8-label}
\lab(\ell)=(q_1\bmod 2,\dots,q_{t_1}\bmod 2,
            r_1\bmod 2,\dots,r_{t_2}\bmod 2),
\end{equation}
the same for all four of them.  In particular, $u\not=\zero$.
\item If $\ell$ is one of the two binary coordinates produced by the additive
coordinate in the $\Z_4$ part whose column is
$(2,p_1,\dots,p_{t_1},s_1,\dots,s_{t_2})^T$, then
\begin{equation}\label{eq:z4-label}
\lab(\ell)=\bigl(\zero,\ s_1\bmod 2,\dots,s_{t_2}\bmod 2\bigr),
\end{equation}
the same for both of them.  In particular, $u=\zero$, whatever the parities of
$p_1,\dots,p_{t_1}$ are.
\item If $\ell$ is the binary coordinate produced by an additive coordinate in
the $\Z_2$ part, then $\lab(\ell)=(\zero,\zero)$.
\end{enumerate}
\end{lemma}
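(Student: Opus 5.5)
The plan is a direct computation: evaluate each kernel basis vector \eqref{eq:kernel-basis} at every additive coordinate and read off its Gray image. The entry of $4\ww_i$ or $2\vv_j$ at an additive coordinate is $4$ or $2$ times the corresponding column entry of the row, reduced modulo $2$, $4$ or $8$. We then apply $\phi_1$, $\phi_2$ or $\phi_3$. The two facts needed are $\phi_2(2)=(1,1)$ and $\phi_3(4)=(1,1,1,1)$, both immediate from \eqref{eq:Carlet}, while $\phi_s(0)=\zero$. Consequently, every kernel basis vector is constant on the binary coordinates produced by a single additive coordinate. This already shows that the label is the same for all binary coordinates coming from one additive coordinate.

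Each part is then handled separately, in the column notation of Proposition \ref{prop:column-description}.

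\textbf{$\Z_8$ part.} Take the column $(4,q_1,\dots,q_{t_1},2r_1,\dots,2r_{t_2})^T$. The entry of $4\ww_i$ is $4q_i\in\Z_8$, which equals $4$ if $q_i$ is odd and $0$ otherwise. Its Gray image is therefore $(q_i\bmod 2)\one$. The entry of $2\vv_j$ is $2\cdot 2r_j=4r_j\in\Z_8$, which equals $4(r_j\bmod 2)$, giving the value $r_j\bmod 2$. This yields \eqref{eq:z8-label}. Moreover, $(q_1,\dots,q_{t_1})$ is normalized primitive, so some $q_i$ is odd, and hence $u\neq\zero$.

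\textbf{$\Z_4$ part.} Take the column $(2,p_1,\dots,p_{t_1},s_1,\dots,s_{t_2})^T$. The entry of $4\ww_i$ is $4p_i=0$ in $\Z_4$, whatever the parity of $p_i$, so $u=\zero$. The entry of $2\vv_j$ is $2s_j\in\Z_4$, which equals $2(s_j\bmod 2)$, with Gray image $(s_j\bmod 2)(1,1)$. This gives \eqref{eq:z4-label}.

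\textbf{$\Z_2$ part.} Both $4\ww_i$ and $2\vv_j$ vanish in $\Z_2$, so the label is $(\zero,\zero)$.

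Two points need a little care, though neither is a genuine obstacle. First, the rows $\ww_i$ and $\vv_j$ of $A^{t_1,t_2,1}$ are exactly the rows whose truncated columns are described in Proposition \ref{prop:column-description}, with the first row $\ww_0$ deleted. Second, in the $\Z_8$ part the entries of $\vv_j$ are the even elements $2r_j$, so multiplying by $2$ rather than by $4$ is the correct halving: $\vv_j$ has order $4$, and $o_j/2=2$ in Theorem \ref{thm:recalled-linearity-kernel}(iii). It is also worth noting that $\Phi(\ww_0)=\one$ is omitted from the label, as explained before the lemma, so the label determines the block.
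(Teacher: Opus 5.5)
Your proposal is correct and follows essentially the same route as the paper's proof. Both evaluate the kernel generators $\Phi(4\ww_i)$ and $\Phi(2\vv_j)$ one additive coordinate at a time, using $\phi_3(4)=(1,1,1,1)$, $\phi_2(2)=(1,1)$ and $4p_i=0$ in $\Z_4$, and both invoke normalized primitivity of $(q_1,\dots,q_{t_1})$ to get $u\neq\zero$.
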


\begin{proof}
By \eqref{eq:label}, computing a label amounts to evaluating the kernel
generators $\Phi(4\ww_i)$ and $\Phi(2\vv_j)$ at $\ell$, and $\Phi$ acts on
each additive coordinate separately.  We treat the three parts in turn.

At an additive coordinate of the $\Z_8$ part, the row $\ww_i$ has the entry
$q_i\in\Z_8$, so the kernel generator $4\ww_i$ has the entry $4q_i$ modulo
$8$, which is $0$ when $q_i$ is even and $4$ when $q_i$
is odd.  Since $\phi_3(0)=(0,0,0,0)$ and $\phi_3(4)=(1,1,1,1)$, all four
binary coordinates produced there receive the same value $q_i\bmod2$ from
$\Phi(4\ww_i)$.  Similarly, the row $\vv_j$ has the entry $2r_j$, so the
kernel generator $2\vv_j$ has the entry $4r_j$ modulo $8$, which by the same
computation is $0$ when $r_j$ is even and $4$ when $r_j$ is odd; the four
binary coordinates therefore receive the value $r_j\bmod2$.  This is exactly
\eqref{eq:z8-label}.  Finally, by item (iii) of Proposition
\ref{prop:column-description}, the vector $(q_1,\dots,q_{t_1})$ is normalized
primitive, so at least one $q_i$ is odd and $u\not=\zero$.

We pass to the $\Z_4$ part, where the row $\ww_i$ has an entry $p_i\in\Z_4$,
so the kernel generator $4\ww_i$ has the entry $4p_i=0$; this is why the first
$t_1$ entries of the label vanish in \eqref{eq:z4-label}.  The row $\vv_j$ has
the entry $s_j\in\Z_4$, so $2\vv_j$ has the entry $2s_j$ modulo $4$, which is
$0$ when $s_j$ is even and $2$ when $s_j$ is odd.  Since $\phi_2(2)=(1,1)$,
both binary coordinates receive the value $s_j\bmod2$.

Finally, at a coordinate of the $\Z_2$ part both $4\ww_i$ and $2\vv_j$
vanish, so the label is $(\zero,\zero)$.
\end{proof}
\begin{remark}\label{rem:labels-not-parities}
Item (ii) is easy to misread: a label is not the sequence of parities of the
entries of the column, but the sequence of values taken there by the binary
kernel generators.  In the $\Z_8$ part, the two sequences happen to agree, and
$u$ is the parity vector of $(q_1,\dots,q_{t_1})$; in the $\Z_4$ part they do
not, since $4=0$ in $\Z_4$ forces $u=\zero$ whatever the parities of
$p_1,\dots,p_{t_1}$ may be.
\end{remark}

\begin{lemma}\label{lem:block-count}
Let $H^{t_1,t_2,1}$ be nonlinear.  Then, every $B_{u,v}$ is nonempty, the
kernel partition has exactly $2^{t_1+t_2}$ blocks, and every block has exactly
$2^{2t_1+t_2}$ binary coordinates.  Moreover:
\begin{enumerate}[label=\textup{(\roman*)}]
\item if $u\not=\zero$, then $B_{u,v}$ comes from the $\Z_8$ part only;
\item if $u=\zero$ and $v\not=\zero$, then $B_{\zero,v}$ comes from the $\Z_4$
part only;
\item $B_{\zero,\zero}$ consists of the whole $\Z_2$ part together with the
binary coordinates produced by the additive coordinates in the $\Z_4$ part
whose last $t_2$ entries are all even.
\end{enumerate}
\end{lemma}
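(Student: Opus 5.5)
The plan is to count, for each label $(u,v)$, the binary coordinates carrying it, using Lemma \ref{lem:kernel-labels} together with the explicit column description of Proposition \ref{prop:column-description} for $A^{t_1,t_2,1}$. Items (i)--(iii) of the statement are immediate from Lemma \ref{lem:kernel-labels}. The $\Z_8$ part always gives $u\not=\zero$, whereas the $\Z_4$ and $\Z_2$ parts give $u=\zero$. Within $u=\zero$, the $\Z_2$ part gives $v=\zero$, and the $\Z_4$ part gives $v$ equal to the parity vector of $(s_1,\dots,s_{t_2})$. So it remains to show that each $B_{u,v}$ has exactly $2^{2t_1+t_2}$ elements. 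Nonemptiness then follows, and so does the count $2^{t_1+t_2}$ of blocks, since every label in $\Z_2^{t_1}\times\Z_2^{t_2}$ occurs.

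\emph{Case $u\not=\zero$.} By item (iii) of Proposition \ref{prop:column-description}, the additive coordinates of the $\Z_8$ part are the vectors $(q,2r)$ with $q$ normalized primitive in $\Z_8^{t_1}$ and $r\in\Z_4^{t_2}$. We count those with $q\bmod2=u$ and $r\bmod2=v$. For $r$ there are $2^{t_2}$ choices, namely the free top digits. For $q$, the parity vector $u$ fixes the bottom digits, and $i_0=\min\{i:u_i=1\}$ is forced. Normalization fixes $q_{i_0}=1$, and each other entry has $4$ choices with prescribed parity, giving $4^{t_1-1}$ choices. Each such additive coordinate yields $4$ binary coordinates, so
$$|B_{u,v}|=4\cdot4^{t_1-1}\cdot2^{t_2}=2^{2t_1+t_2}.$$

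\emph{Case $u=\zero$, $v\not=\zero$.} By item (ii) of Proposition \ref{prop:column-description}, the relevant coordinates are the normalized primitive $(p,s)\in\Z_4^{t_1+t_2}$ with $s\bmod2=v$. Since $v\not=\zero$, the vector is automatically primitive. Its first odd entry lies among the $p_i$ if some $p_i$ is odd, and otherwise it is the $s_j$ with $j=\min\{j:v_j=1\}$. We split according to the parity vector $\pi\in\Z_2^{t_1}$ of $p$. Whatever the first odd position is, normalization fixes that single entry to $1$, and every other entry has $2$ choices once its parity is prescribed. This gives $2^{t_1+t_2-1}$ vectors for each of the $2^{t_1}$ choices of $\pi$, hence $2^{2t_1+t_2-1}$ additive coordinates. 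Doubling for the Gray map gives $2^{2t_1+t_2}$ binary coordinates.

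\emph{Case $(u,v)=(\zero,\zero)$.} The $\Z_2$ part contributes $2^{t_1+t_2}$ coordinates. The $\Z_4$ part contributes the normalized primitive vectors with all $s_j$ even, which forces $p$ to have an odd entry. Counting as above over the $2^{t_1}-1$ nonzero parity vectors $\pi$, each contributes $2^{t_1+t_2-1}$ vectors. Doubling, we get $(2^{t_1}-1)2^{t_1+t_2}$ binary coordinates, and adding the $\Z_2$ contribution gives a total of $2^{2t_1+t_2}$.

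As a consistency check, $2^{t_1+t_2}\cdot2^{2t_1+t_2}=8^{t_1}4^{t_2}$, which is the length given by Proposition \ref{prop:column-description} for $t_3=1$. The main obstacle is the normalization bookkeeping in the $\Z_4$ part. One has to confirm that fixing the parity vector fixes the first odd position, so that exactly one entry loses its factor $2$. It also has to be checked that this holds uniformly whether that position lies among the $p_i$ or among the $s_j$.
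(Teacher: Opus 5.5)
Your proposal is correct and matches the paper's proof essentially step by step. Items (i)--(iii) are read off from Lemma \ref{lem:kernel-labels}, and the three label shapes are counted exactly as in the paper: you fix the parity vector of the column, so that normalization kills exactly one higher digit, giving $4^{t_1-1}2^{t_2}$ columns in the $\Z_8$ case and $2^{t_1+t_2-1}$ columns per admissible parity choice in the $\Z_4$ case.
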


\begin{proof}
Items (i), (ii) and (iii) are immediate from Lemma \ref{lem:kernel-labels}: a
binary coordinate produced in the $\Z_8$ part has $u\not=\zero$, one produced
in the $\Z_4$ part has $u=\zero$, and one produced in the $\Z_2$ part has
$(u,v)=(\zero,\zero)$.  It remains to count the coordinates carrying each
label, and we do so for each of the three shapes of label.  Throughout, we use
the description of the additive coordinates given in Proposition
\ref{prop:column-description}, and we recall from Subsection
\ref{subsec:family} that a column is normalized primitive when its first odd
entry equals $1$.

Consider first a label $(u,v)$ with $u\not=\zero$, and let $i_0$ be
the first position at which $u$ is nonzero.  By item (i), the binary
coordinates of $B_{u,v}$ are those produced by the additive coordinates of the
$\Z_8$ part whose column \eqref{eq:z8-column} satisfies $q_i\equiv u_i$ and
$r_j\equiv v_j$ modulo $2$; so we have to count those columns.  Such a column
is primitive, because $u\not=\zero$ forces at least one $q_i$ to be odd, and
its first odd entry is $q_{i_0}$ by the choice of $i_0$; normalization
therefore imposes $q_{i_0}=1$, which kills the two higher digits of $q_{i_0}$
and nothing else.  Each of the other $t_1-1$ entries $q_i\in\Z_8$ has
prescribed parity and two free higher digits, which gives $4$ choices for it,
and each $r_j\in\Z_4$ has prescribed parity and one free higher digit, which
gives $2$ choices.  There are thus $4^{t_1-1}2^{t_2}$ such additive
coordinates, each producing four binary ones, so
$$
|B_{u,v}|=4\cdot4^{t_1-1}2^{t_2}=4^{t_1}2^{t_2}=2^{2t_1+t_2}.
$$

Consider next a label with $u=\zero$ and $v\not=\zero$.  By item (ii), the
binary coordinates of $B_{\zero,v}$ are those produced by the additive
coordinates of the $\Z_4$ part whose last $t_2$ entries satisfy
$s_j\equiv v_j$ modulo $2$. The parities of the entries $p_i$ are
unrestricted, by item (ii) of Lemma \ref{lem:kernel-labels}.  Let
$\lambda\in\Z_2^{t_1}$ collect those unrestricted parities.  Once $\lambda$ is
fixed, the parity vector of the column, which is the concatenation
$(\lambda,v)$, is completely determined, and it is nonzero because
$v\not=\zero$; hence the column is primitive.  Normalization fixes the entry
at the first nonzero position of $(\lambda,v)$ to be $1$, that is, it kills
exactly one of the $t_1+t_2$ higher digits and leaves the other $t_1+t_2-1$
free.  For each of the $2^{t_1}$ choices of $\lambda$, there are therefore
$2^{t_1+t_2-1}$ such columns, and each of them produces two binary
coordinates, so
$$
|B_{\zero,v}|=2\cdot2^{t_1}\cdot2^{t_1+t_2-1}=2^{2t_1+t_2}.
$$

There remains the label $(\zero,\zero)$, for which, by item (iii), two
contributions have to be added.  The $\Z_2$ part contributes all of its
coordinates, and by item (i) of
Proposition \ref{prop:column-description}, there are $2^{t_1+t_2}$ of them,
each producing one binary coordinate.  The $\Z_4$ part contributes the columns
all of whose entries $s_j$ are even. For such a column, the parity vector is
$(\lambda,\zero)$, which has to be nonzero for the column to be primitive, so
now $\lambda\not=\zero$ and only $2^{t_1}-1$ choices of $\lambda$ are
admissible.  Counting exactly as above gives $(2^{t_1}-1)2^{t_1+t_2-1}$ such
columns, hence $(2^{t_1}-1)2^{t_1+t_2}$ binary coordinates.  Adding the two
contributions,
$$
|B_{\zero,\zero}|=2^{t_1+t_2}+(2^{t_1}-1)2^{t_1+t_2}
=2^{t_1}\cdot2^{t_1+t_2}=2^{2t_1+t_2}.
$$

Every one of the $2^{t_1+t_2}$ labels therefore occurs, so all the sets
$B_{u,v}$ are nonempty and the kernel partition has exactly $2^{t_1+t_2}$
blocks, each of $2^{2t_1+t_2}$ binary coordinates.  Multiplying the two counts
gives $2^{3t_1+2t_2}=2^t$, the length of the code.
\end{proof}

\begin{example}\label{ex:H111-blocks}
For $A^{1,1,1}$ as in \eqref{eq:A111}, we have $t_1=t_2=1$, so a label is a
pair $(u,v)$ of bits.  The four additive coordinates in the $\Z_8$ part are
$(4,1,2r)^T$ with
$r\in\Z_4$. By \eqref{eq:z8-label}, their label is $(1\bmod 2,\ r\bmod2)$, so
the two even values $r=0,2$ give the label $(\one,\zero)$ and the two odd
values $r=1,3$ give $(\one,\one)$; both blocks therefore have $8$ elements.
Among the six additive coordinates in the $\Z_4$ part, whose truncated columns are the six normalized primitive vectors of $\Z_4^2$, the two whose last entry is even are
$(2,1,0)^T$ and $(2,1,2)^T$; by \eqref{eq:z4-label} they carry the label
$(\zero,\zero)$, while the four remaining ones, namely $(2,1,1)^T$,
$(2,1,3)^T$, $(2,0,1)^T$ and $(2,2,1)^T$, carry $(\zero,\one)$.  Hence,
$B_{\zero,\one}$ has $8$ elements, and $B_{\zero,\zero}$ consists of the four
binary coordinates of the $\Z_2$ part together with the four arising from the
two $\Z_4$ coordinates just listed, again $8$ in total.  Thus, the four
blocks have $4\cdot8=32$ elements altogether, which is the length of the code.
\end{example}

%%%%%%%%%%%%%%%%%%%%%%%%%%%%%%%%%%%%%%%%%%%%%%%%%%%%%%%%%%%%%%%%%%%%%%%%%%%%%
\subsection{The local ranks}\label{subsec:local-ranks}
%%%%%%%%%%%%%%%%%%%%%%%%%%%%%%%%%%%%%%%%%%%%%%%%%%%%%%%%%%%%%%%%%%%%%%%%%%%%%

We now compute $\rho_{H^{t_1,t_2,1}}(B)$ for every block $B$.  Recall from
Lemma \ref{lem:evaluation-principle}, applied with $\Omega=\cM$, that the
punctured rank on a set $B$ of binary coordinates equals the dimension of the
span of the coordinate functions attached to the positions of $B$.  By Lemma
\ref{lem:kernel-labels}, an additive coordinate has a well defined label, so
the computation consists in fixing a label, letting all the additive
coordinates carrying it vary, and determining the resulting span of Boolean
functions.  We treat the three shapes of label separately, in the order in
which they were listed in Lemma \ref{lem:block-count}.

\begin{lemma}\label{lem:local-rank-00}
Let $H^{t_1,t_2,1}$ be nonlinear.  Then,
\begin{equation}\label{eq:rho00}
\rho_{H^{t_1,t_2,1}}(B_{\zero,\zero})=t_2+\binom{t_1+2}{2}.
\end{equation}
\end{lemma}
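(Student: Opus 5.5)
The plan is to apply Lemma \ref{lem:evaluation-principle} with $\Omega=\cM$. This turns $\rho_{H^{t_1,t_2,1}}(B_{\zero,\zero})$ into the dimension of the span of the binary coordinate functions attached to the positions of $B_{\zero,\zero}$. By item (iii) of Lemma \ref{lem:block-count}, these positions come from two sources: the whole $\Z_2$ part, and the additive coordinates of the $\Z_4$ part whose last $t_2$ entries are even. By Lemma \ref{lem:gray-digits}, each such $\Z_4$ coordinate may be replaced by its two digit functions $P$ and $h$ of Lemma \ref{lem:z4-digits}. The $\Z_2$ part contributes exactly $\cV_2=\Span\{\epsilon,a_1,\dots,a_{t_1},d_1,\dots,d_{t_2}\}$ by \eqref{eq:z2-part-span}. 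So the target is
$$
\cV_2+\Span\{P,h\ \text{over the } \Z_4 \text{ coordinates with all } s_j \text{ even}\}.
$$

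Next I will specialise \eqref{eq:z4-digits} to these coordinates. Having all $s_j$ even means $R=\emptyset$, so $\delta_R=\eta_R=\sym_2(d_R)=0$. The constraint $(S,R)\neq(\emptyset,\emptyset)$ then forces $S\neq\emptyset$, and the normalization constraint reads $\min S\notin T$. Thus the coordinates are indexed by a nonempty $S$, an arbitrary $R'\subseteq\{1,\dots,t_2\}$ and an arbitrary $T\subseteq\{1,\dots,t_1\}\setminus\{\min S\}$, with
$$
P=\alpha_S,\qquad h=\epsilon+\beta_S+\sym_2(a_S)+\alpha_T+\delta_{R'}.
$$
For fixed $S$, the family $h$ is linear in the free indicator bits of $T$ and $R'$. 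Lemma \ref{lem:mobius} therefore shows that these $h$ span the same space as $\epsilon+\beta_S+\sym_2(a_S)$ together with the $a_i$ for $i\neq\min S$ and all the $d_j$. All of these except the first already lie in $\cV_2$, and so does every $P=\alpha_S$. Modulo $\cV_2$, the only new functions are therefore $\beta_S+\sym_2(a_S)$ for nonempty $S$.

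It then remains to identify $W=\Span\{\beta_S+\sym_2(a_S):\emptyset\neq S\subseteq\{1,\dots,t_1\}\}$. Singletons give every $b_i$, and pairs then give every $a_ia_k$ with $i<k$. Conversely, $\beta_S+\sym_2(a_S)$ is always a sum of $b_i$'s and $a_ia_k$'s, so $W=\Span\{b_i,\ a_ia_k\}$. The total span is thus spanned by the distinct square-free monomials $1$ (from $\epsilon$, since $\epsilon$ is itself a variable, I count $\epsilon$), $a_i$, $d_j$, $b_i$ and $a_ia_k$. These are linearly independent by uniqueness of the algebraic normal form, which gives dimension
$$
1+t_1+t_2+t_1+\binom{t_1}{2}=t_2+\binom{t_1+2}{2}.
$$

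The main point needing care is the bookkeeping in the Möbius step. The constraint $\min S\notin T$ removes one free bit depending on $S$, but the missing $a_{\min S}$ is recovered from $\cV_2$, so this is harmless. Also, as $S$ ranges over all nonempty subsets, no further monomial (for example from $|S|\ge3$) appears. The nonlinearity hypothesis is not actually needed in this computation, beyond guaranteeing that Lemma \ref{lem:block-count} applies.
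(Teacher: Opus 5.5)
Your proposal is correct and follows essentially the same route as the paper: work modulo $\cV_2$, observe that only $\beta_S+\sym_2(a_S)$ survives from the top digits, extract the $b_i$ from singletons and the $a_ia_k$ from pairs, and count distinct square-free monomials. Two small remarks: the appeal to Lemma \ref{lem:mobius} for the dependence on $T$ and $R'$ is harmless but unnecessary, since $\alpha_T$ and $\delta_{R'}$ already lie in $\cV_2$; and the monomial you list as ``$1$'' should simply be $\epsilon$, because the constant function is not in the span (every coordinate function vanishes at the zero message).
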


\begin{proof}
By item (iii) of Lemma \ref{lem:block-count}, the block $B_{\zero,\zero}$
consists of the whole $\Z_2$ part together with the binary coordinates
produced by those additive coordinates of the $\Z_4$ part whose last $t_2$
entries are even.  In the notation of Lemma \ref{lem:z4-digits}, where
$R=\{j:s_j\text{ is odd}\}$, the latter are exactly the coordinates with
$R=\emptyset$; and for them $S\not=\emptyset$, because that lemma also records
the constraint $(S,R)\not=(\emptyset,\emptyset)$.  We compute the two
contributions separately and then add them.

The $\Z_2$ part comes first.  By \eqref{eq:z2-part-span}, its coordinate
functions span exactly
$$
\cV_2=\Span\{\epsilon,a_1,\dots,a_{t_1},d_1,\dots,d_{t_2}\},
$$
a space of dimension $1+t_1+t_2$.

For the coordinates coming from the $\Z_4$ part, substituting $R=\emptyset$
into \eqref{eq:z4-digits}, the bottom and the top digit functions of such a
coordinate are
\begin{equation}\label{eq:B00-digits}
P=\alpha_S,
\qquad
h=\epsilon+\beta_S+\alpha_T+\delta_{R'}+\sym_2(a_S),
\end{equation}
where $S\not=\emptyset$, $R'$ is an arbitrary subset of
$\{1,\dots,t_2\}$, and $T$ is an arbitrary subset of
$\{1,\dots,t_1\}\setminus\{\min S\}$, by the normalization constraint recorded
in Lemma \ref{lem:z4-digits}.

Here, $P$ lies in $\cV_2$, and so do the terms $\epsilon$, $\alpha_T$ and
$\delta_{R'}$ of $h$; hence, modulo $\cV_2$,
$$h\equiv\beta_S+\sym_2(a_S)\pmod{\cV_2}.$$
The punctured span on $B_{\zero,\zero}$ is therefore
$\cV_2+\Span\{\beta_S+\sym_2(a_S):\ \emptyset\not=S\subseteq\{1,\dots,t_1\}\}$,
and it remains to identify the second summand.  Taking $S=\{i\}$ gives
$\beta_{\{i\}}+\sym_2(a_{\{i\}})=b_i$.  Taking $S=\{i,k\}$ with $i<k$ gives
$b_i+b_k+a_ia_k$, and subtracting the two functions $b_i$ and $b_k$ just
obtained leaves $a_ia_k$.  Conversely, for an arbitrary nonempty $S$, we have
$\beta_S+\sym_2(a_S)=\sum_{i\in S}b_i+\sum_{\{i,k\}\subseteq S}a_ia_k$, which
lies in the span of the functions $b_i$ and $a_ia_k$.  The second summand is
therefore $\Span\{b_i,\ a_ia_k\}$.

Adding the two contributions, the punctured span on $B_{\zero,\zero}$ is
spanned by
\begin{equation}\label{eq:B00-basis}
\epsilon,\quad a_i\ (1\leq i\leq t_1),\quad d_j\ (1\leq j\leq t_2),\quad
b_i\ (1\leq i\leq t_1),\quad a_ia_k\ (1\leq i<k\leq t_1).
\end{equation}
These functions are pairwise distinct square-free monomials in the independent
binary variables of \eqref{eq:message-digits}, hence linearly independent, as
observed in Subsection \ref{subsec:digits-rank}.  Their number is
$$
1+t_1+t_2+t_1+\binom{t_1}{2}=t_2+\binom{t_1+2}{2},
$$
which is \eqref{eq:rho00}.
\end{proof}

The blocks with $u=\zero$ and $v\not=\zero$ have the same local rank, but
reach it in a different way.  They meet no coordinate of the $\Z_2$ part, so
$\cV_2$ is not available, and $\epsilon$ and the $b_i$ of
\eqref{eq:B00-basis} are replaced by corrected versions of themselves, the
correction depending on $v$.

\begin{lemma}\label{lem:local-rank-0v}
Let $H^{t_1,t_2,1}$ be nonlinear and let $v\in\Z_2^{t_2}$ with
$v\not=\zero$.  Then,
\begin{equation}\label{eq:rho0v}
\rho_{H^{t_1,t_2,1}}(B_{\zero,v})=t_2+\binom{t_1+2}{2}.
\end{equation}
\end{lemma}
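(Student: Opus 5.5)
We compute the span of the coordinate functions of the $\Z_4$ part that carry the label $(\zero,v)$, exactly as in the proof of Lemma \ref{lem:local-rank-00}, except that there is no $\Z_2$ part here. By item (ii) of Lemma \ref{lem:block-count} and Lemma \ref{lem:kernel-labels}, $B_{\zero,v}$ consists of the binary coordinates produced by the $\Z_4$ additive coordinates whose set $R=\{j:s_j\text{ odd}\}$ equals the support of $v$. This $R$ is fixed and nonempty; put $j_0=\min R$. By Lemma \ref{lem:z4-digits}, the remaining data $(S,T,R')$ run over all triples with $S$ arbitrary and one normalization constraint. If $S\neq\emptyset$, then $T\subseteq\{1,\dots,t_1\}\setminus\{\min S\}$ and $R'$ is arbitrary. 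If $S=\emptyset$, then $T$ is arbitrary and $j_0\notin R'$. By Lemma \ref{lem:gray-digits}, the local span is the span of the functions $P=\alpha_S+\delta_R$ and
$$
h=\epsilon+\beta_S+\alpha_T+\eta_R+\delta_{R'}+\sym_2(a_S)+\alpha_S\delta_R+\sym_2(d_R)
$$
over these triples.

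The first step is to read off the generators, using Lemma \ref{lem:mobius} with the indicator bits of $T$ and $R'$ as free parameters for each fixed $S$.

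From $P$ with $S=\emptyset$ and with $S=\{i\}$ we obtain $\delta_R$ and every $a_i$.

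From $h$ with $S=\emptyset$, $T$ and $R'$ free subject to $j_0\notin R'$, we obtain the following:
\begin{itemize}
\item the constant part $c_v=\epsilon+\eta_R+\sym_2(d_R)$;
\item the functions $a_i$, for all $i$, from the bits of $T$;
\item the functions $d_j$ for $j\neq j_0$, from the bits of $R'$.
\end{itemize}
Combined with $\delta_R$, the last item yields $d_{j_0}$ as well, so all $d_j$ lie in the span.

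For $S\neq\emptyset$, the terms $\alpha_T$ and $\delta_{R'}$ contribute nothing new. Modulo what we already have, $h$ reduces to $\beta_S+\sym_2(a_S)+\alpha_S\delta_R$. Taking $S=\{i\}$ gives $b'_i=b_i+a_i\delta_R$. Taking $S=\{i,k\}$ gives $b'_i+b'_k+a_ia_k$, hence $a_ia_k$. Conversely, each general $\beta_S+\sym_2(a_S)+\alpha_S\delta_R=\sum_{i\in S}b'_i+\sum_{\{i,k\}\subseteq S}a_ia_k$ lies in the span of these.

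The local span is therefore spanned by
$$
c_v,\quad a_i\ (1\le i\le t_1),\quad d_j\ (1\le j\le t_2),\quad b_i+a_i\delta_R\ (1\le i\le t_1),\quad a_ia_k\ (1\le i<k\le t_1).
$$
These are exactly the corrected versions of the functions listed in \eqref{eq:B00-basis}.

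The second step, which is the only point needing care, is linear independence, since the generators are no longer monomials. We argue via the algebraic normal form, with a triangularity argument. The variable $\epsilon$ occurs only in $c_v$, and $b_i$ occurs only in $b_i+a_i\delta_R$. The remaining generators $a_i$, $d_j$, $a_ia_k$ are distinct monomials. Consequently, in any vanishing linear combination the coefficient of $c_v$ must be zero, since otherwise the monomial $\epsilon$ survives. Likewise each coefficient of $b_i+a_i\delta_R$ must be zero, since otherwise $b_i$ survives. What is left is a combination of distinct monomials, which must be trivial. Counting the generators gives $1+t_1+t_2+t_1+\binom{t_1}{2}=t_2+\binom{t_1+2}{2}$, which is \eqref{eq:rho0v}.

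The main obstacle is bookkeeping the normalization constraint when $S=\emptyset$. It removes $d_{j_0}$ from the functions produced by $R'$, and we must check that $P$ restores it through $\delta_R$; this works precisely because $R\neq\emptyset$. We also need $S=\emptyset$ to be admissible here, which it is since $(S,R)\neq(\emptyset,\emptyset)$.
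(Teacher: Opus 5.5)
Your proof is correct and follows the paper's argument essentially step by step. You use the same split into $S=\emptyset$, where Lemma~\ref{lem:mobius} gives $h_0$, every $a_i$ and the $d_j$ with $j\neq j_0$, and $P=\delta_R$ restores $d_{j_0}$, and $S\neq\emptyset$, which gives $b_i+a_i\delta_R$ and $a_ia_k$; the independence argument through the variables $\epsilon$ and $b_i$ is also the paper's, and extracting the $a_i$ from the bottom digits with $S=\{i\}$ as well is redundant but harmless.
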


\begin{proof}
Put $R=\{j:v_j=1\}$, which is nonempty, and let $j_0=\min R$.  By item (ii)
of Lemma \ref{lem:block-count}, every binary coordinate of $B_{\zero,v}$ comes
from the $\Z_4$ part.  In the notation of Lemma \ref{lem:z4-digits}, it is
produced by an additive coordinate whose set of odd positions among the last
$t_2$ entries is exactly $R$, the set $S$ of odd positions among the first
$t_1$ entries being arbitrary.  By
\eqref{eq:z4-digits}, and abbreviating by $h_0$ the part of the top digit that
does not depend on $S$, $T$ and $R'$, the two digit functions are
\begin{equation}\label{eq:B0v-digits}
P=\alpha_S+\delta_R,\qquad
h=h_0+\beta_S+\alpha_T+\delta_{R'}+\sym_2(a_S)+\alpha_S\delta_R,
\qquad
h_0=\epsilon+\eta_R+\sym_2(d_R),
\end{equation}
and the normalization constraint of Lemma \ref{lem:z4-digits} reads: $\min
S\notin T$ when $S\not=\emptyset$, and $j_0\notin R'$ when $S=\emptyset$.  We
first exploit the coordinates with $S=\emptyset$ and then those with
$S\not=\emptyset$.

For the coordinates with $S=\emptyset$, formula \eqref{eq:B0v-digits} becomes
$$
P=\delta_R,\qquad h=h_0+\alpha_T+\delta_{R'},
$$
where $T$ is now an arbitrary subset of $\{1,\dots,t_1\}$ and $R'$ an
arbitrary subset of $\{1,\dots,t_2\}\setminus\{j_0\}$.  The free digits of the
column are the $\mu_i$ and the $\mu'_j$ with $j\not=j_0$, so the family of top
digits has the shape \eqref{eq:parameter-polynomial} with constant term $h_0$.
By Lemma \ref{lem:mobius}, these top digits contribute $h_0$, every $a_i$ and
every $d_j$ with $j\not=j_0$.  The bottom digit supplies the one function
still missing: since $j_0\in R$, we have
$P=\delta_R=d_{j_0}+\sum_{j\in R,\,j\not=j_0}d_j$, and all the summands other
than $d_{j_0}$ have just been obtained, so $d_{j_0}$ lies in the span as
well.

We turn to the coordinates with $S\not=\emptyset$.  Take such a coordinate and
add $h_0$ to its top digit. By \eqref{eq:B0v-digits}, this leaves
$\beta_S+\alpha_T+\delta_{R'}+\sym_2(a_S)+\alpha_S\delta_R$.  Reducing modulo
the functions $a_i$ and $d_j$ already obtained removes $\alpha_T$ and
$\delta_{R'}$, and we are left with
$$
\beta_S+\sym_2(a_S)+\alpha_S\delta_R
=\sum_{i\in S}\bigl(b_i+a_i\delta_R\bigr)+\sym_2(a_S)
=\sum_{i\in S}g_i+\sym_2(a_S),
$$
where $g_i=b_i+a_i\delta_R$.

Exactly as in the proof of Lemma \ref{lem:local-rank-00}, taking $S=\{i\}$
gives $g_i$, taking $S=\{i,k\}$ with $i<k$ gives $a_ia_k$, and every function
$\sum_{i\in S}g_i+\sym_2(a_S)$ lies in $\Span\{g_i,\ a_ia_k\}$.  The bottom
digit contributes nothing new, both of its summands being already
available.

Collecting what the two kinds of coordinate contribute, the punctured span on
$B_{\zero,v}$ is spanned by
\begin{equation}\label{eq:B0v-basis}
a_i\ (1\leq i\leq t_1),\quad d_j\ (1\leq j\leq t_2),\quad h_0,\quad
g_i\ (1\leq i\leq t_1),\quad a_ia_k\ (1\leq i<k\leq t_1),
\end{equation}
and these functions are linearly independent.  Indeed, suppose that a
$\Z_2$-linear combination of them vanishes.  The variable $\epsilon$ occurs in
$h_0=\epsilon+\eta_R+\sym_2(d_R)$ and in none of the other listed functions,
so the coefficient of $h_0$ must be zero.  Next, for each $i$ the variable
$b_i$ occurs in $g_i=b_i+a_i\delta_R$ and in no other listed function, so all
the coefficients of the $g_i$ must be zero as well.  What remains is a
vanishing combination of the functions $a_i$, $d_j$ and $a_ia_k$, which are
pairwise distinct square-free monomials, so all its coefficients vanish too.
The number of the listed functions is therefore the dimension we are after,
namely
$$
t_1+t_2+1+t_1+\binom{t_1}{2}=t_2+\binom{t_1+2}{2},
$$
by the same arithmetic as in the proof of Lemma \ref{lem:local-rank-00}.  This
is \eqref{eq:rho0v}.
\end{proof}

\begin{lemma}\label{lem:local-rank-u}
Let $H^{t_1,t_2,1}$ be nonlinear and let $u\in\Z_2^{t_1}$ with $u\not=\zero$.
Then, for every $v\in\Z_2^{t_2}$,
\begin{equation}\label{eq:rho1}
\rho_{H^{t_1,t_2,1}}(B_{u,v})=t_2+2+\binom{t_1+1}{2}.
\end{equation}
\end{lemma}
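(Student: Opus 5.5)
The plan is to argue as in Lemmas \ref{lem:local-rank-00} and \ref{lem:local-rank-0v}. First we identify exactly which additive coordinates produce $B_{u,v}$. Then Lemma \ref{lem:mobius} turns the family of digit functions into a finite list of coefficient functions, and finally we check that this list is linearly independent.

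By item (i) of Lemma \ref{lem:block-count} and \eqref{eq:z8-label}, the block $B_{u,v}$ comes only from the $\Z_8$ part. It consists of the four binary coordinates of each column \eqref{eq:z8-column} whose bottom digits are prescribed. In the notation of Lemma \ref{lem:z8-digits} this means $S=\{i:u_i=1\}\neq\emptyset$ and $R=\{j:v_j=1\}$ are fixed, while $(T,U)$ runs over all pairs of subsets of $\{1,\dots,t_1\}\setminus\{i_0\}$, with $i_0=\min S$, and $R'$ runs over all subsets of $\{1,\dots,t_2\}$. By Lemma \ref{lem:evaluation-principle} and Lemma \ref{lem:gray-digits}, the local rank is therefore the dimension of the span of the functions $P,Q,F$ of \eqref{eq:bottom-digit}--\eqref{eq:top-digit} as $(T,U,R')$ varies. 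Here $Q_0$ and $F_0$ are fixed functions, because $S$ and $R$ are fixed.

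Next, regard $\mu_i,\nu_i$ for $i\neq i_0$ and $\mu'_j$ as free binary parameters, and expand $Q$ and $F$ in the form \eqref{eq:parameter-polynomial}. We have $\alpha_T=\sum\mu_ia_i$, $\beta_T=\sum\mu_ib_i$, $\alpha_TQ_0=\sum\mu_i a_iQ_0$ and $\sym_2(a_T)=\sum_{i<k}\mu_i\mu_ka_ia_k$. So Lemma \ref{lem:mobius} gives the following coefficient functions:
\begin{itemize}
\item from $Q$: $Q_0$ and $a_i$ for $i\neq i_0$;
\item from $F$: $F_0$; $a_i$ (coefficient of $\nu_i$); $d_j$ (coefficient of $\mu'_j$); $g_i=b_i+a_iQ_0$ (coefficient of $\mu_i$, $i\neq i_0$); and $a_ia_k$ (coefficient of $\mu_i\mu_k$, $i<k$, $i,k\neq i_0$).
\end{itemize}
The bottom digit $P=\alpha_S$ then supplies $a_{i0}$ modulo the $a_i$ already obtained. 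The span is thus spanned by
$$a_1,\dots,a_{t_1},\quad d_1,\dots,d_{t_2},\quad Q_0,\quad F_0,\quad g_i\ (i\neq i_0),\quad a_ia_k\ (i<k,\ i,k\neq i_0).$$
These number $t_1+t_2+2+(t_1-1)+\binom{t_1-1}{2}=t_2+2+\binom{t_1+1}{2}$, which is \eqref{eq:rho1}, provided they are independent.

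The main step needing care is this independence, since $Q_0$, $F_0$ and the $g_i$ are not monomials. I would argue by isolating variables, as in Lemma \ref{lem:local-rank-0v}, and reading algebraic normal forms. Suppose a $\Z_2$-linear combination of the listed functions vanishes.
\begin{itemize}
\item The variable $\epsilon$ occurs only in $F_0$, so its coefficient is zero.
\item For $i\neq i_0$, the linear monomial $b_i$ occurs only in $g_i$. It does not occur linearly in $Q_0$, because $i\notin S$ when $i\neq i_0$ is outside $S$; when $i\in S$, the function $Q_0$ does contain $b_i$, but $Q_0$ also contains $b_{i_0}$, which occurs linearly in no other remaining function. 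Hence I would first kill $Q_0$ using the linear monomial $b_{i_0}$. This works because $a_iQ_0$ contributes only monomials of degree at least $2$, so $b_{i_0}$ appears linearly nowhere except $Q_0$.
\item Then each $g_i$ is killed by its linear monomial $b_i$.
\end{itemize}
What remains is a combination of the distinct square-free monomials $a_i$, $d_j$ and $a_ia_k$, whose coefficients must all vanish. The only delicate point to verify is that no cancellation of these linear monomials comes from $\delta_R$ in $Q_0$ or from $a_iQ_0$, which holds by comparing degrees in the algebraic normal form.
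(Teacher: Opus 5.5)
Your proposal is correct and follows the paper's proof essentially step by step: you identify $B_{u,v}$ as the $\Z_8$ columns with $S$ and $R$ fixed, apply Lemma \ref{lem:mobius} with the free parameters $\mu_i,\nu_i$ ($i\neq i_0$) and $\mu'_j$, obtain the same spanning list, and prove independence by isolating variables and comparing linear monomials. The only cosmetic differences are these. You isolate $F_0$ through $\epsilon$ rather than through $\gamma_S$, which works equally well. You also kill $Q_0$ via $b_{i_0}$ before the $g_i$, instead of comparing all linear $b$-terms at once; the opening sentence of your second bullet wrongly says $b_i$ occurs only in $g_i$, but the ordering you then adopt handles this correctly.
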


\begin{proof}
By item (i) of Lemma \ref{lem:block-count}, every binary coordinate of
$B_{u,v}$ comes from the $\Z_8$ part.  Put $S=\{i:u_i=1\}$, which is nonempty
because $u\not=\zero$, let $i_0=\min S$, and put $R=\{j:v_j=1\}$.  By
\eqref{eq:z8-label}, fixing the label $(u,v)$ fixes the parities of all the
entries of the column, so, in the notation \eqref{eq:column-digits}, the sets
$S$ and $R$ are the same for all the additive coordinates of the block.
Normalization forces $q_{i_0}=1$, that is,
$\mu_{i_0}=\nu_{i_0}=0$, whereas the digits $\mu_i$ and $\nu_i$ with
$i\not=i_0$ and all the digits $\mu'_j$ remain free; hence $T$ and $U$ range
over all the subsets of $\{1,\dots,t_1\}\setminus\{i_0\}$, and $R'$ over all
the subsets of $\{1,\dots,t_2\}$.  By Lemma \ref{lem:z8-digits}, the punctured
span is
generated by the three digit functions \eqref{eq:bottom-digit},
\eqref{eq:middle-digit} and \eqref{eq:top-digit} as $T$, $U$ and $R'$ vary.

We now determine the span from these three digits, applying Lemma
\ref{lem:mobius} with the free parameters $\mu_i$ and $\nu_i$ for
$i\not=i_0$, together with $\mu'_j$ for $1\leq j\leq t_2$.  The bottom digit
$P=\alpha_S$ does not involve those parameters at all, so it contributes the
single function $\alpha_S$.  The middle digit is
$Q=Q_0+\alpha_T=Q_0+\sum_{i\not=i_0}\mu_ia_i$, so it contributes $Q_0$
together with every $a_i$ with $i\not=i_0$.  For the top digit, we
substitute
$$
\alpha_U=\sum_{i\not=i_0}\nu_ia_i,\quad
\delta_{R'}=\sum_{j=1}^{t_2}\mu'_jd_j,\quad
\beta_T=\sum_{i\not=i_0}\mu_ib_i,\quad
\sym_2(a_T)=\sum_{\substack{i<k\\ i,k\not=i_0}}\mu_i\mu_k\,a_ia_k
$$
into \eqref{eq:top-digit} and obtain
$$
F=F_0+\sum_{i\not=i_0}\nu_ia_i+\sum_{j=1}^{t_2}\mu'_jd_j
   +\sum_{i\not=i_0}\mu_i\bigl(b_i+a_iQ_0\bigr)
   +\sum_{\substack{i<k\\ i,k\not=i_0}}\mu_i\mu_k\,a_ia_k,
$$
where the fourth sum collects the two terms $\beta_T$ and $\alpha_TQ_0$ of
\eqref{eq:top-digit}.  The top digits therefore contribute $F_0$, the
functions $a_i$ with $i\not=i_0$, all the $d_j$, the functions
$\Psi_i=b_i+a_iQ_0$ with $i\not=i_0$, and the products $a_ia_k$ with $i<k$ and
$i,k\not=i_0$.  Finally, the missing function $a_{i_0}$ is supplied by the
bottom digit: since $i_0\in S$,
$$
\alpha_S=a_{i_0}+\sum_{i\in S,\ i\not=i_0}a_i,
$$
and every summand other than $a_{i_0}$ has already been obtained.

Summarising, the punctured span on $B_{u,v}$ is spanned by
\begin{equation}\label{eq:Buv-basis}
a_i\ (1\leq i\leq t_1),\quad d_j\ (1\leq j\leq t_2),\quad Q_0,\quad F_0,
\quad \Psi_i\ (i\not=i_0),\quad a_ia_k\ (i,k\not=i_0,\ i<k),
\end{equation}
and we check that these $t_1+t_2+2+(t_1-1)+\binom{t_1-1}{2}$ functions are
linearly independent.  Suppose that a $\Z_2$-linear combination of them
vanishes.  Only $F_0$ involves the variables $c_i$, and it does so through the
term $\gamma_S=\sum_{i\in S}c_i$ of \eqref{eq:Q0F0}, which is a nonzero
function because $S\not=\emptyset$; hence the coefficient of $F_0$ must be
zero.  Among the remaining functions, collect now the terms that are one of
the monomials $b_1,\dots,b_{t_1}$: there are none for $a_i$, $d_j$ and
$a_ia_k$, they add up to $\beta_S$ for $Q_0$, and they reduce to $b_i$ for
$\Psi_i=b_i+a_iQ_0$ with $i\not=i_0$, since all the other terms of $\Psi_i$
have degree at least two.  Since $i_0\in S$, the function
$\beta_S=\sum_{i\in S}b_i$ involves $b_{i_0}$, which appears in no $\Psi_i$.
Therefore, the $t_1$ functions $\beta_S$ and $b_i$ with $i\not=i_0$ are
linearly independent, and the coefficients of $Q_0$ and of the $\Psi_i$ must
vanish as well.  What remains is a vanishing combination of the functions
$a_i$, $d_j$ and $a_ia_k$, which are pairwise distinct square-free monomials,
so all the coefficients are zero.  The punctured span therefore has dimension
\begin{equation*}
\begin{aligned}
t_1+t_2+2+(t_1-1)+\binom{t_1-1}{2}
&=t_2+\frac{4t_1+2+(t_1-1)(t_1-2)}{2}\\
&=t_2+\frac{t_1^2+t_1+4}{2}
=t_2+2+\binom{t_1+1}{2},
\end{aligned}
\end{equation*}
which is \eqref{eq:rho1}.
\end{proof}

\begin{example}\label{ex:local-H201}
For $H^{2,0,1}$, we have $t=6$ by \eqref{eq:length-relation}, so the length is
$2^6=64$, and Lemma \ref{lem:block-count} gives $4$ blocks of size $16$.
There are no rows of order $4$ here, so a label is just a vector
$u\in\Z_2^2$, the message is $\epsilon\ww_0+x_1\ww_1+x_2\ww_2$ with
$x_i=a_i+2b_i+4c_i$, and all the functions $\delta_R$ and $\eta_R$ vanish.

For the block $B_{\zero}$, formula \eqref{eq:B00-basis} gives the six
functions $\epsilon$, $a_1$, $a_2$, $b_1$, $b_2$ and $a_1a_2$, so
$\rho_0=6$, in agreement with $t_2+\binom{t_1+2}{2}=\binom{4}{2}=6$.

For a block $B_u$ with $u\not=\zero$, formula \eqref{eq:Buv-basis} gives $a_1$
and $a_2$, then $Q_0$ and $F_0$, then the single function $\Psi_i$ with
$i\not=i_0$. There is no product $a_ia_k$ left, since only one index differs
from $i_0$.  This is $5=t_2+2+\binom{t_1+1}{2}=2+\binom{3}{2}$ functions, as
predicted.  Writing them out from \eqref{eq:Q0F0} for the three admissible
labels, we obtain
$$
\begin{aligned}
u&=(1,0):&\ S&=\{1\},\ i_0=1,&\ &Q_0=b_1,\ F_0=\epsilon+c_1,\
 \Psi_2=b_2+a_2b_1,\\
u&=(0,1):&\ S&=\{2\},\ i_0=2,&\ &Q_0=b_2,\ F_0=\epsilon+c_2,\
 \Psi_1=b_1+a_1b_2,\\
u&=(1,1):&\ S&=\{1,2\},\ i_0=1,&\ &Q_0=b_1+b_2+a_1a_2,\\
& & & & &F_0=\epsilon+c_1+c_2+b_1b_2+(b_1+b_2)a_1a_2,
\end{aligned}
$$
and in the last case $\Psi_2=b_2+a_2Q_0=b_2+a_2b_1+a_2b_2+a_1a_2$, where we
used $a_2\cdot a_1a_2=a_1a_2$.  Hence,
$\cR(H^{2,0,1})=\{\!\{6^{[1]},5^{[3]}\}\!\}$.
\end{example}

%%%%%%%%%%%%%%%%%%%%%%%%%%%%%%%%%%%%%%%%%%%%%%%%%%%%%%%%%%%%%%%%%%%%%%%%%%%%%
\subsection{The profile of $H^{t_1,t_2,t_3}$}\label{subsec:profile-theorem}
%%%%%%%%%%%%%%%%%%%%%%%%%%%%%%%%%%%%%%%%%%%%%%%%%%%%%%%%%%%%%%%%%%%%%%%%%%%%%

It remains to add the rows of order $2$, that is, to pass from $t_3=1$ to a
general $t_3$.  The next lemma describes the effect of the duplication
\eqref{eq:construction-order2} on the whole block structure.  It is the exact
analogue, for the profile, of the fact that a duplication raises the rank by
one, and we state it for an arbitrary binary code, since nothing in it is
special to the present family.

\begin{lemma}\label{lem:plotkin-blocks}
Let $C\subseteq\Z_2^N$ be a binary code with $\zero\in C$ and $\one\in\K(C)$,
and let
\begin{equation}\label{eq:plotkin-extension}
C^{+}=\{(\xx,\xx):\xx\in C\}\cup\{(\xx,\xx+\one):\xx\in C\}
\end{equation}
be its Plotkin extension.  Then,
\begin{equation}\label{eq:plotkin-kernel-span}
\begin{aligned}
\K(C^{+})&=\{(\xx,\xx)+\varepsilon(\zero,\one):\xx\in\K(C),\ \varepsilon\in\Z_2\},\\
\langle C^{+}\rangle&=\{(\zz,\zz)+\varepsilon(\zero,\one):
\zz\in\langle C\rangle,\ \varepsilon\in\Z_2\}.
\end{aligned}   
\end{equation}
Every kernel block $B$ of $C$ gives rise to exactly two kernel blocks $B^{0}$
and $B^{1}$ of $C^{+}$, one inside each copy, and
\begin{equation}\label{eq:plotkin-local-rank}
|B^{0}|=|B^{1}|=|B|,\qquad
\rho_{C^{+}}(B^{0})=\rho_{C^{+}}(B^{1})=\rho_{C}(B).
\end{equation}
In particular, $\cR(C^{+})$ is obtained from $\cR(C)$ by doubling all the
multiplicities, and $\one\in\K(C^{+})$.
\end{lemma}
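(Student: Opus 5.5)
First I will compute the kernel of $C^{+}$. Write a candidate kernel element as $(\yy_1,\yy_2)$ and split into two cases. If $(\yy_1,\yy_2)$ is in $\K(C^{+})$, it maps $(\zero,\zero)$, which lies in $C^{+}$ because $\zero\in C$, to a codeword. Hence $(\yy_1,\yy_2)\in C^{+}$, so $\yy_2=\yy_1+\varepsilon\one$ with $\yy_1\in C$. Next, translating a word $(\xx,\xx)$ gives $(\xx+\yy_1,\xx+\yy_1+\varepsilon\one)$, and this must lie in $C^{+}$ for every $\xx\in C$. The first half forces $\xx+\yy_1\in C$ for all $\xx\in C$, that is, $\yy_1\in\K(C)$. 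For the reverse inclusion:
\begin{itemize}
\item $(\xx,\xx)$ with $\xx\in\K(C)$ clearly preserves $C^{+}$;
\item $(\zero,\one)$ preserves $C^{+}$, because it just swaps the two halves of the union in \eqref{eq:plotkin-extension}.
\end{itemize}
This gives the first identity of \eqref{eq:plotkin-kernel-span}.

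For the span, I will argue the same way. Since $\one\in\K(C)\subseteq C$, we have $(\zero,\one)=(\zero,\zero)+(\zero,\one)\in C^{+}$. Then $\langle C^{+}\rangle$ contains $(\zz,\zz)$ for every $\zz\in\langle C\rangle$ (span of the $(\xx,\xx)$) together with $(\zero,\one)$. Conversely, every generator of $C^{+}$ has this form. The conclusion $\one\in\K(C^{+})$ follows by taking $\xx=\one$ and $\varepsilon=0$.

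The blocks are the key step. Coordinates of $C^{+}$ are pairs $(c,i)$ with copy $c\in\{0,1\}$ and $i\in\{1,\dots,N\}$. The kernel word $(\zero,\one)$ takes value $c$ at $(c,i)$, so coordinates in different copies are never equivalent. Within one copy, the words $(\xx,\xx)$ with $\xx\in\K(C)$ act exactly as $\K(C)$ does on $\{1,\dots,N\}$, while $(\zero,\one)$ is constant on each copy. Hence $(c,i)\sim(c,j)$ if and only if $i\sim_{\K(C)}j$. Each block $B$ therefore yields $B^{0}=\{0\}\times B$ and $B^{1}=\{1\}\times B$, each of size $|B|$.

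For the local ranks, restrict the spanning description:
\begin{itemize}
\item on $B^{0}$, $\langle C^{+}\rangle|_{B^0}=\{\zz|_B\}=\langle C\rangle|_B$;
\item on $B^{1}$, $\langle C^{+}\rangle|_{B^1}=\{\zz|_B+\varepsilon\one\}$, which equals $\langle C\rangle|_B$ because $\one\in\langle C\rangle$.
\end{itemize}
This gives \eqref{eq:plotkin-local-rank} and the doubling of multiplicities. The only delicate point is the kernel computation, specifically the direction showing that every kernel element has the stated form. That direction uses $\zero\in C$ to put kernel elements inside $C^{+}$, and the rest is routine.
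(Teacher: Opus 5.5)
Your proof is correct and follows the paper's argument essentially step by step: the kernel via $\zero\in C^{+}$ and translation of the diagonal words, the span as the linear space generated by the diagonal copy of $\langle C\rangle$ and $(\zero,\one)$, the separation of the two copies by the kernel word $(\zero,\one)$, and the absorption of the extra all-one vector on $B^{1}$ by $\one\in\langle C\rangle$. The only blemish is your justification of $(\zero,\one)\in C^{+}$, which holds simply because $(\zero,\zero+\one)\in C^{+}$ for $\zero\in C$ (the paper writes it as $(\xx,\xx)+(\xx,\xx+\one)$) and has nothing to do with $\one\in\K(C)$.
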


\begin{proof}
Write $\Delta=\{(\xx,\xx):\xx\in C\}$ for the diagonal copy of $C$; then,
\eqref{eq:plotkin-extension} reads
$C^{+}=\Delta\cup\bigl(\Delta+(\zero,\one)\bigr)$.  We determine in turn the
span, the kernel, and then the blocks together with the local ranks.

The span of $\Delta$ is $\{(\zz,\zz):\zz\in\langle C\rangle\}$.  Moreover,
$(\zero,\one)$ is the sum of the two elements $(\xx,\xx)$ and $(\xx,\xx+\one)$
of $C^{+}$, hence it lies in $\langle C^{+}\rangle$.  Therefore,
$\langle C^{+}\rangle$ contains the right-hand side of the second identity in
\eqref{eq:plotkin-kernel-span}; and it is contained in it, because that
right-hand side is a linear space containing $C^{+}$.  This proves the second
identity.

For the kernel, observe first that every element of $C^{+}$ has its first half
in $C$, by \eqref{eq:plotkin-extension}.  Let $(\zz_1,\zz_2)\in\K(C^{+})$.
Since $\zero\in C^{+}$, the vector $(\zz_1,\zz_2)$ itself lies in $C^{+}$, so
$\zz_1\in C$.  Adding $(\zz_1,\zz_2)$ to $(\xx,\xx)$ for an arbitrary
$\xx\in C$ gives an element of $C^{+}$ whose first half is $\zz_1+\xx$, so
$\zz_1+C\subseteq C$ and, $C$ being finite, $\zz_1\in\K(C)$.  This proves that
$\K(C^{+})$ is
contained in the right-hand side of the first identity in
\eqref{eq:plotkin-kernel-span}.

Conversely, let $\xx\in\K(C)$ and let $\yy\in C$ be arbitrary.  Adding
$(\xx,\xx)$ to $(\yy,\yy)$ gives $(\xx+\yy,\xx+\yy)$, and adding it to
$(\yy,\yy+\one)$ gives $(\xx+\yy,\xx+\yy+\one)$.  In both cases
$\xx+\yy\in C$, so the result lies in $C^{+}$, and, $C^{+}$ being finite,
$(\xx,\xx)\in\K(C^{+})$.  Similarly, adding $(\zero,\one)$ to $(\yy,\yy)$
gives $(\yy,\yy+\one)\in C^{+}$, and adding it to $(\yy,\yy+\one)$ gives
$(\yy,\yy)\in C^{+}$; hence $(\zero,\one)\in\K(C^{+})$ as well.  This proves
the first identity in \eqref{eq:plotkin-kernel-span}.  In particular, taking
$\xx=\one$, which lies in $\K(C)$ by hypothesis, shows that
$\one=(\one,\one)\in\K(C^{+})$.

We can now describe the blocks.  The kernel word $(\zero,\one)$ takes the
value $0$ at
every coordinate of the first copy and the value $1$ at every coordinate of
the second one, so no coordinate of the first copy can be equivalent to a
coordinate of the second one. Within one copy, a kernel word
$(\xx,\xx)+\varepsilon(\zero,\one)$ takes at
the position $i$ the value $x_i$ or $x_i+\varepsilon$, according to the copy.
The contribution of $\varepsilon$ is the same at all the positions of that
copy, so two of them are equivalent for $C^{+}$ exactly when they are
equivalent for $C$.
Therefore, every block $B$ of $C$ gives exactly two blocks
of $C^{+}$, namely its copy $B^{0}$ in the first half and its copy $B^{1}$ in
the second half, and these have the same size as $B$.

It remains to compute the local ranks.  Restrict the description of
$\langle C^{+}\rangle$ given in \eqref{eq:plotkin-kernel-span} to $B^{0}$.
The diagonal part
$(\zz,\zz)$ restricts to $\zz|_B$, so it gives $\langle C\rangle|_B$, whereas
the extra vector $(\zero,\one)$ restricts to $\zero$ and adds nothing; hence
$\rho_{C^{+}}(B^{0})=\rho_C(B)$.  Restricting to $B^{1}$, the diagonal part
again gives $\langle C\rangle|_B$, and the extra vector now restricts to the
all-one vector of length $|B|$.  That vector is already present, because
$\one\in\K(C)\subseteq\langle C\rangle$.
Hence, $\rho_{C^{+}}(B^{1})=\rho_C(B)$ as well, which is
\eqref{eq:plotkin-local-rank}.
\end{proof}

\begin{theorem}\label{thm:profile}
Let $t_1\geq1$, $t_2\geq0$ and $t_3\geq1$, and assume that $H^{t_1,t_2,t_3}$
is nonlinear, that is, $(t_1,t_2)\not=(1,0)$.  Put
\begin{equation}\label{eq:rho-def}
\rho_0=t_2+\binom{t_1+2}{2},
\qquad
\rho_1=t_2+2+\binom{t_1+1}{2}.
\end{equation}
Then, the kernel partition of $H^{t_1,t_2,t_3}$ has exactly
$2^{t_1+t_2+t_3-1}$ blocks, all of size $2^{2t_1+t_2}$, and
\begin{equation}\label{eq:profile-formula}
\cR\bigl(H^{t_1,t_2,t_3}\bigr)
=\bigl\{\!\bigl\{\ \rho_0^{\,[2^{t_2+t_3-1}]},\
\rho_1^{\,[(2^{t_1}-1)2^{t_2+t_3-1}]}\ \bigr\}\!\bigr\}.
\end{equation}
Moreover, $\rho_0-\rho_1=t_1-1$, so the profile is constant if and only if
$t_1=1$, in which case
\begin{equation}\label{eq:profile-t1-one}
\cR\bigl(H^{1,t_2,t_3}\bigr)
=\bigl\{\!\bigl\{(t_2+3)^{\,[2^{t_2+t_3}]}\bigr\}\!\bigr\}.
\end{equation}
\end{theorem}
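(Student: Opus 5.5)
The plan is to first settle the case $t_3=1$ and then reach a general $t_3$ by induction through the duplication \eqref{eq:construction-order2}.

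\emph{Case $t_3=1$.} Lemma \ref{lem:block-count} gives exactly $2^{t_1+t_2}$ kernel blocks $B_{u,v}$, each of size $2^{2t_1+t_2}$. Their local ranks split by label type.
\begin{itemize}
\item The $2^{t_2}$ blocks with $u=\zero$ (namely $B_{\zero,\zero}$ and the $B_{\zero,v}$ with $v\neq\zero$) have local rank $\rho_0$, by Lemmas \ref{lem:local-rank-00} and \ref{lem:local-rank-0v}.
\item The $(2^{t_1}-1)2^{t_2}$ blocks with $u\neq\zero$ have local rank $\rho_1$, by Lemma \ref{lem:local-rank-u}.
\end{itemize}
This is \eqref{eq:profile-formula} for $t_3=1$.

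\emph{Induction on $t_3$.} I will show that $H^{t_1,t_2,\ell}$ is exactly the Plotkin extension \eqref{eq:plotkin-extension} of $H^{t_1,t_2,\ell-1}$, up to a fixed coordinate permutation. In \eqref{eq:construction-order2}, every additive coordinate is duplicated, and the new row of order two is $(\zero,\one\mid\zero,\two\mid\zero,\four)$.

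Take a codeword $\uu$ of $\cH^{t_1,t_2,\ell-1}$. Its two images in $\cH^{t_1,t_2,\ell}$ are $(\uu,\uu)$ and $(\uu,\uu+(\one\mid\two\mid\four))$, after the coordinates of each part are regrouped. Lemma \ref{lem:gray-shift} gives
\[
\Phi\bigl(\uu+(\one\mid\two\mid\four)\bigr)=\Phi(\uu)+\one,
\]
since $\phi_2(2)=(1,1)$ and $\phi_3(4)=(1,1,1,1)$. So after the permutation that groups the first copies of all parts together, $H^{t_1,t_2,\ell}$ equals $\bigl(H^{t_1,t_2,\ell-1}\bigr)^{+}$.

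A coordinate permutation changes nothing, by Lemma \ref{lem:profile-invariant}. The hypotheses of Lemma \ref{lem:plotkin-blocks} hold:
\begin{itemize}
\item $\zero$ lies in the code;
\item $\one=\Phi(\ww_0)\in\K(H^{t_1,t_2,1})$, by \eqref{eq:kernel-basis};
\item Lemma \ref{lem:plotkin-blocks} itself returns $\one\in\K(C^{+})$, so the hypothesis propagates through the induction.
\end{itemize}
Each step therefore doubles the number of blocks, keeps the block sizes, and doubles every multiplicity in the profile. After $t_3-1$ steps this gives $2^{t_1+t_2+t_3-1}$ blocks of size $2^{2t_1+t_2}$, with the multiplicities in \eqref{eq:profile-formula}. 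Nonlinearity does not depend on $t_3$, so Lemmas \ref{lem:local-rank-00}--\ref{lem:local-rank-u} apply at the base.

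\emph{The final claims.} Since $\binom{t_1+2}{2}-\binom{t_1+1}{2}=t_1+1$, we get $\rho_0-\rho_1=t_1+1-2=t_1-1$. Both multiplicities are positive because $t_1\geq1$. Hence the profile is constant exactly when $t_1=1$. In that case $\rho_0=\rho_1=t_2+3$, and the total multiplicity is $2^{t_2+t_3-1}\cdot 2^{t_1}=2^{t_2+t_3}$, which is \eqref{eq:profile-t1-one}.

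The main obstacle I expect is the identification of the duplication step with the Plotkin extension. The coordinate ordering must be handled carefully, since \eqref{eq:construction-order2} interleaves the copies of the three parts. The additivity of the Gray map under adding $\two$ and $\four$ must also be checked, and Lemma \ref{lem:gray-shift} supplies that additivity directly.
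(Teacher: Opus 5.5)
Your proposal is correct and follows the paper's proof step for step. You settle $t_3=1$ from Lemma~\ref{lem:block-count} and Lemmas~\ref{lem:local-rank-00}, \ref{lem:local-rank-0v}, \ref{lem:local-rank-u}, then induct on $t_3$ by identifying the duplication \eqref{eq:construction-order2}, via Lemma~\ref{lem:gray-shift} and a regrouping permutation, with the Plotkin extension of Lemma~\ref{lem:plotkin-blocks}, propagating $\one\in\K(C)$ from \eqref{eq:kernel-basis}, and your closing arithmetic for $\rho_0-\rho_1$ and the constant case $t_1=1$ matches the paper's.
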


\begin{proof}
We treat first the case $t_3=1$.  By Lemma \ref{lem:block-count}, there are
$2^{t_1+t_2}$ blocks, all of size $2^{2t_1+t_2}$, one for each label
$(u,v)\in\Z_2^{t_1}\times\Z_2^{t_2}$.  Among those labels, exactly $2^{t_2}$
have $u=\zero$, namely one for each choice of $v$, and the remaining
$(2^{t_1}-1)2^{t_2}$ have $u\not=\zero$.  By Lemmas \ref{lem:local-rank-00}
and \ref{lem:local-rank-0v}, every block with $u=\zero$ has local rank
$\rho_0$, whichever value $v$ takes, and by Lemma \ref{lem:local-rank-u}, every
block with $u\not=\zero$ has local rank $\rho_1$.  This is exactly
\eqref{eq:profile-formula} for $t_3=1$, since $2^{t_2+t_3-1}=2^{t_2}$ there.

We now increase $t_3$, arguing by induction.  Put $C=H^{t_1,t_2,t_3}$ with
$t_3\geq1$ and assume the statement for $C$.  By
\eqref{eq:construction-order2}, an additive codeword of
$\cH^{t_1,t_2,t_3+1}$ is $(\uu\mid\uu)+\varepsilon\ww$, where
$\uu\in\cH^{t_1,t_2,t_3}$, $\varepsilon\in\Z_2$ and $\ww$ is the new row of
order $2$, which vanishes on the first copy of each of the three parts and
equals $\one$, $\two$ and $\four$ on the second copies.  Every entry of $\ww$
has the form $2^{s-1}\lambda$ inside its own part, so Lemma
\ref{lem:gray-shift} gives
$\Phi\bigl((\uu\mid\uu)+\varepsilon\ww\bigr)
=\Phi(\uu\mid\uu)+\varepsilon\Phi(\ww)$.  Moreover, $\phi_2(2)=(1,1)$ and
$\phi_3(4)=(1,1,1,1)$, so $\Phi(\ww)$ vanishes on the binary coordinates
coming from the first copies and equals $1$ on those coming from the second
ones.  Since \eqref{eq:construction-order2} interleaves the two copies part
by part, we let $\pi$ list the three second copies after the three first
ones, and then $\pi(H^{t_1,t_2,t_3+1})=C^{+}$.  The hypothesis
$\one\in\K(C)$ of Lemma \ref{lem:plotkin-blocks} holds for $t_3=1$ by
\eqref{eq:kernel-basis}, and for $t_3>1$ by the last assertion of that lemma
at the previous step.  Each step therefore doubles the number of blocks and
leaves their sizes and their local ranks unchanged.  Passing from $t_3=1$ to a
general $t_3$ requires $t_3-1$ steps, which multiplies both multiplicities of
\eqref{eq:profile-formula} by $2^{t_3-1}$ and leaves $\rho_0$, $\rho_1$ and
the block size unchanged; the profile so computed is the profile of
$H^{t_1,t_2,t_3}$ by Lemma \ref{lem:profile-invariant}.  The total number of
blocks is $2^{t_1+t_2}2^{t_3-1}=2^{t_1+t_2+t_3-1}$, which is
$2^{\kernel(H^{t_1,t_2,t_3})-1}$ by item (ii) of Theorem
\ref{thm:recalled-linearity-kernel}.

It remains to consider the difference between the two values. From \eqref{eq:rho-def},
$$
\rho_0-\rho_1=\binom{t_1+2}{2}-\binom{t_1+1}{2}-2
=\frac{(t_1+2)(t_1+1)-(t_1+1)t_1}{2}-2=(t_1+1)-2=t_1-1.
$$
Both multiplicities in \eqref{eq:profile-formula} are nonzero, since
$t_2+t_3-1\geq0$ and $2^{t_1}-1\geq1$, so both values are actually attained
and the profile is constant if and only if $\rho_0=\rho_1$.  Hence, the
profile is constant exactly when $t_1=1$.  In that case
$\rho_0=t_2+\binom{3}{2}=t_2+3$, and the two multiplicities in
\eqref{eq:profile-formula} add up to
$2^{t_2+t_3-1}+(2^1-1)2^{t_2+t_3-1}=2^{t_2+t_3}$, which gives
\eqref{eq:profile-t1-one}.
\end{proof}

\begin{example}\label{ex:profile-small}
For $H^{1,1,1}$, we have $\rho_0=\rho_1=1+3=4$, so the profile is constant, and
by Example \ref{ex:H111-blocks}, there are four blocks of size $8$; hence
$\cR(H^{1,1,1})=\{\!\{4^{[4]}\}\!\}$, the multiset recording that there are
four blocks even though they all carry the same local rank.  For $H^{2,0,1}$,
we computed in Example \ref{ex:local-H201} that $\rho_0=6$ and $\rho_1=5$,
with one block of local rank $6$ and three of local rank $5$, so
$\cR(H^{2,0,1})=\{\!\{6^{[1]},5^{[3]}\}\!\}$.  Observe how much more this says
than the pair $(r,k)=(12,3)$ given by Theorem \ref{thm:rank-main} and item
(ii) of Theorem \ref{thm:recalled-linearity-kernel}.  Being nonconstant, the
profile already distinguishes $H^{2,0,1}$ from every member of the family with
$t_1=1$.  Passing to $t_3=2$, Lemma \ref{lem:plotkin-blocks} doubles the
multiplicities without changing the values, so
$\cR(H^{2,0,2})=\{\!\{6^{[2]},5^{[6]}\}\!\}$, in accordance with
\eqref{eq:profile-formula}.
\end{example}

%%%%%%%%%%%%%%%%%%%%%%%%%%%%%%%%%%%%%%%%%%%%%%%%%%%%%%%%%%%%%%%%%%%%%%%%%%%%%
\section{The complete classification inside the family}
\label{sec:internal-classification}
%%%%%%%%%%%%%%%%%%%%%%%%%%%%%%%%%%%%%%%%%%%%%%%%%%%%%%%%%%%%%%%%%%%%%%%%%%%%%

We can now classify the family $H^{t_1,t_2,t_3}$ completely.  Let us first see
why the two invariants already available are not enough.  By item (ii) of
Theorem \ref{thm:recalled-linearity-kernel}, the dimension of the kernel of a
nonlinear member is $t_1+t_2+t_3$, so two codes of the same length whose types
have different sums are nonequivalent; but the sum does not determine the
type, and, by Proposition \ref{prop:all-collisions}, the rank does not
separate the remaining cases either.  In \cite{Z2Z4Z8Linearity}, the
classification was therefore completed only for $3\leq t\leq11$, and the pairs
not separated by either invariant were distinguished using {\sc Magma} \cite{Magma}.

\begin{example}\label{ex:no-kernel-no-rank}
By Theorem \ref{thm:recalled-linearity-kernel}, the admissible types of length
$2^9$ are the triples $(t_1,t_2,t_3)$ with $t_1\geq1$, $t_2\geq0$, $t_3\geq1$
and $3t_1+2t_2+t_3=10$, and the only linear one is $H^{1,0,7}$.  The nonlinear
$\Z_2\Z_4\Z_8$-linear Hadamard codes of length $2^9$ are therefore
$$H^{1,1,5},\, H^{1,2,3},\, H^{1,3,1},\, H^{2,0,4},\, H^{2,1,2},\, H^{3,0,1}.$$  Their kernel dimensions $t_1+t_2+t_3$ are $7$, $6$, $5$, $6$,
$5$ and $4$, and by Theorem \ref{thm:rank-main} their ranks are $12$, $15$,
$19$, $15$, $20$ and $26$.  Neither invariant alone classifies these six
codes, since each of the two lists has repetitions; and neither does the pair
$(r,k)$, because $H^{1,2,3}$ and $H^{2,0,4}$ both have $(r,k)=(15,6)$.  In
\cite{Z2Z4Z8Linearity}, those two had to be separated by a computer
equivalence test.
\end{example}

The kernel-block rank profile removes this obstruction at once: we show that
it recovers $t_1$, and hence, together with the length and the dimension of
the kernel, the whole type.

%%%%%%%%%%%%%%%%%%%%%%%%%%%%%%%%%%%%%%%%%%%%%%%%%%%%%%%%%%%%%%%%%%%%%%%%%%%%%
\subsection{Recovering the parameters}\label{subsec:recovering}
%%%%%%%%%%%%%%%%%%%%%%%%%%%%%%%%%%%%%%%%%%%%%%%%%%%%%%%%%%%%%%%%%%%%%%%%%%%%%

\begin{theorem}\label{thm:complete-internal-classification}
Let $t\geq3$ and let $(t_1,t_2,t_3)$ and $(t'_1,t'_2,t'_3)$ be two types with
$t_1,t'_1\geq1$, $t_2,t'_2\geq0$, $t_3,t'_3\geq1$ and
$3t_1+2t_2+t_3=3t'_1+2t'_2+t'_3=t+1$.  Then, $H^{t_1,t_2,t_3}$ and
$H^{t'_1,t'_2,t'_3}$ are equivalent if and only if
$(t_1,t_2,t_3)=(t'_1,t'_2,t'_3)$.  In other words, distinct admissible types
give nonequivalent codes, at every length.
\end{theorem}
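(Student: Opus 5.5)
The ``if'' direction is trivial, so the work is to show that equivalence forces equality of types. The plan is to show that the type can be read off from invariants: linearity, the dimension of the kernel, and the kernel-block rank profile. All three are preserved by equivalence, by Lemma \ref{lem:rank-kernel-invariance} and Lemma \ref{lem:profile-invariant}. The length $2^t$ is common to both codes by hypothesis.

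First dispose of the linear case. By item (i) of Theorem \ref{thm:recalled-linearity-kernel}, $H^{t_1,t_2,t_3}$ is linear exactly when $(t_1,t_2)=(1,0)$, and then $t_3=t-2$ is forced by $3t_1+2t_2+t_3=t+1$. Linearity is an equivalence invariant, since it is the equality $\rank=\kernel$. So if one code is linear, the other is linear too and has the same type $(1,0,t-2)$. We may therefore assume both codes are nonlinear.

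Next, recover $t_1$ from the profile via Theorem \ref{thm:profile}. The profile is constant if and only if $t_1=1$, so equivalent codes either both have $t_1=1$ or both have $t_1\geq2$.

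In the second case the profile takes exactly two distinct values $\rho_0>\rho_1$, both attained, and $\rho_0-\rho_1=t_1-1$ by \eqref{eq:rho-def}. Since the profiles coincide, their maximum and minimum coincide, so $t_1-1=t'_1-1$. This comparison of extremes is the only point needing care: a multiset determines its set of values, so the difference between the largest and smallest value is intrinsic.

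Then use the kernel dimension from item (ii) of Theorem \ref{thm:recalled-linearity-kernel}. It gives $t_1+t_2+t_3=t'_1+t'_2+t'_3$. Combined with $3t_1+2t_2+t_3=3t'_1+2t'_2+t'_3$ and $t_1=t'_1$, subtracting yields $t_2=t'_2$, and then $t_3=t'_3$.

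I do not expect a genuine obstacle, since all the substance is in Theorem \ref{thm:profile}. The step to state precisely is the extraction of $t_1$ as the spread $\max\cR-\min\cR$ of the multiset. That spread is $0$ when $t_1=1$ and $t_1-1$ otherwise, so in both cases it equals $t_1-1$.

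Alternatively, one could avoid even the case split. In both cases $\rho_0=t_2+\binom{t_1+2}{2}$ is the maximum of the profile, so the maximum, the kernel dimension and the length give three equations in $(t_1,t_2,t_3)$. Using the spread, though, is the cleanest route.
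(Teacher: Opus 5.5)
Your proof is correct and follows essentially the same route as the paper. You first dispose of the unique linear member $H^{1,0,t-2}$, then recover $t_1$ from the profile (constant when $t_1=1$, spread $\rho_0-\rho_1=t_1-1$ otherwise), and finally obtain $t_2$ and $t_3$ from the kernel dimension $t_1+t_2+t_3$ together with the length relation; the order in which you invoke the kernel dimension and the profile is immaterial.
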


\begin{proof}
The condition is sufficient, since equal types denote the same code.
We prove that it is necessary, so assume from now on that the two codes are
equivalent, and let us recover the three parameters one after the other.

We may assume that both codes are nonlinear.  Indeed, by item (i) of Theorem
\ref{thm:recalled-linearity-kernel}, exactly one member of the family of
length $2^t$ is linear, namely $H^{1,0,t-2}$.  Linearity is preserved by
equivalence, so if one of our two codes is linear then so is the other, and
both types equal $(1,0,t-2)$, which is the desired conclusion.

The dimension of the kernel comes next.  Since the two codes are equivalent,
their kernels have the same dimension, say $k$, by Lemma
\ref{lem:rank-kernel-invariance}.  By item (ii) of Theorem
\ref{thm:recalled-linearity-kernel} applied to each of them,
\begin{equation}\label{eq:k-sum}
k=t_1+t_2+t_3=t'_1+t'_2+t'_3.
\end{equation}

We recover $t_1$ from the profile.  By Lemma
\ref{lem:profile-invariant}, equivalent codes have the same kernel-block rank
profile, so $\cR(H^{t_1,t_2,t_3})=\cR(H^{t'_1,t'_2,t'_3})$.  Suppose first
that these profiles are constant.  By the last part of Theorem
\ref{thm:profile},
a profile is constant exactly when the first parameter equals $1$, so
$t_1=1=t'_1$.  Suppose now that they are not constant.  Then, again by Theorem
\ref{thm:profile}, the profile of the first code takes exactly the two values
$\rho_0$ and $\rho_1$ of \eqref{eq:rho-def}, with $\rho_0-\rho_1=t_1-1>0$,
and that of the second one takes exactly the two values $\rho'_0$ and
$\rho'_1$ obtained from \eqref{eq:rho-def} with $(t'_1,t'_2)$ in place of
$(t_1,t_2)$, with $\rho'_0-\rho'_1=t'_1-1>0$.  The two multisets coincide, so
they have the same largest and the same smallest element, that is,
$\rho_0=\rho'_0$ and $\rho_1=\rho'_1$; hence
$$
t_1-1=\rho_0-\rho_1=\rho'_0-\rho'_1=t'_1-1.
$$
In both cases $t_1=t'_1$.

Finally, we recover $t_2$ and $t_3$.  Subtracting \eqref{eq:k-sum} from
the length relation \eqref{eq:length-relation}, that is, from
$t+1=3t_1+2t_2+t_3$, we obtain $t+1-k=2t_1+t_2$, and therefore
\begin{equation*}
t_2=t+1-k-2t_1,\qquad t_3=k-t_1-t_2.
\end{equation*}
Since $t$, $k$ and $t_1$ coincide for the two codes, so do $t_2$ and $t_3$.
\end{proof}

For a nonlinear member with $t_1\geq2$, the profile alone already determines
the type.  Indeed, the two values of the profile give $t_1$ as in Step 3
above, the multiplicity of $\rho_0$ is $2^{k-t_1-1}$ by
\eqref{eq:profile-formula} and \eqref{eq:k-sum}, which recovers $k$, and then
$\rho_0=t_2+\binom{t_1+2}{2}$ recovers $t_2$ and \eqref{eq:k-sum} recovers
$t_3$.

%%%%%%%%%%%%%%%%%%%%%%%%%%%%%%%%%%%%%%%%%%%%%%%%%%%%%%%%%%%%%%%%%%%%%%%%%%%%%
\subsection{The number of classes and the two collisions}\label{subsec:counting}
%%%%%%%%%%%%%%%%%%%%%%%%%%%%%%%%%%%%%%%%%%%%%%%%%%%%%%%%%%%%%%%%%%%%%%%%%%%%%

Counting the equivalence classes is now the same thing as counting the
admissible types.

\begin{corollary}\label{cor:number-classes}
For $t\geq3$, the number of pairwise nonequivalent $\Z_2\Z_4\Z_8$-linear
Hadamard codes $H^{t_1,t_2,t_3}$ of length $2^t$ is
\begin{equation}\label{eq:number-classes}
\cA_t=\sum_{j=1}^{\lfloor t/3\rfloor}
\left(\left\lfloor\frac{t-3j}{2}\right\rfloor+1\right)
=\left\lfloor\frac{t^2+6}{12}\right\rfloor.
\end{equation}
\end{corollary}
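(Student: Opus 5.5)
By Theorem \ref{thm:complete-internal-classification}, distinct admissible types of length $2^t$ give nonequivalent codes, and equal types give the same code. So the number of classes equals the number of triples $(t_1,t_2,t_3)$ with $t_1\geq1$, $t_2\geq0$, $t_3\geq1$ and $3t_1+2t_2+t_3=t+1$. This count includes the single linear member $H^{1,0,t-2}$, which forms its own class.

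The first step counts the triples directly. Fix $t_1=j$. Then $t_3=t+1-3j-2t_2\geq1$ is equivalent to $2t_2\leq t-3j$. Hence $t_2$ ranges over $0,\dots,\lfloor(t-3j)/2\rfloor$, giving $\lfloor(t-3j)/2\rfloor+1$ choices. This range is nonempty exactly when $j\leq\lfloor t/3\rfloor$. Since $t\geq3$, at least $j=1$ occurs. Summing over $j$ gives the first expression in \eqref{eq:number-classes}.

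The second step is the closed form $\lfloor(t^2+6)/12\rfloor$, which I expect to be the only real work. The natural route is to check the recurrence $\cA_{t+6}=\cA_t+t+6$, which both sides satisfy. For the sum, replacing $t$ by $t+6$ adds the two new indices $j=\lfloor t/3\rfloor+1,\lfloor t/3\rfloor+2$. Shifting $j\mapsto j+2$ maps old terms to new terms, since $t+6-3(j+2)=t-3j$, so the old sum reappears as the terms with $j\geq3$. The extra terms are those with $j=1,2$, namely $\lfloor(t+3)/2\rfloor+1$ and $\lfloor t/2\rfloor+1$, which add up to $t+3+1+\dots$; one checks by parity that they total $t+3$. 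I would recheck this parity computation carefully.

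For the floor expression, $((t+6)^2+6)/12=(t^2+6)/12+t+3$, so its increment is also $t+3$. The recurrence to verify is therefore $\cA_{t+6}=\cA_t+t+3$. It remains to check six base cases $t=3,\dots,8$, for which both sides give $1,1,2,3,4,5$. For instance, $t=3$ gives $(1)$ with $\lfloor15/12\rfloor=1$, and $t=8$ gives $3+2=5$ with $\lfloor70/12\rfloor=5$. Induction in steps of six then finishes the proof. Alternatively, one could note that the count is the number of partitions of $t$ into parts $2$ and $3$ with at least one $3$, and recover the classical formula from that.
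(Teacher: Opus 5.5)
Your proof is correct, and its first half is the paper's argument. Theorem~\ref{thm:complete-internal-classification} puts the classes in bijection with the admissible types, and for each $t_1=j$ you count $t_2\in\{0,\dots,\lfloor (t-3j)/2\rfloor\}$; the paper asserts this count without writing it out.

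The difference is in the closed form. The paper does not prove $\sum_j(\lfloor (t-3j)/2\rfloor+1)=\lfloor (t^2+6)/12\rfloor$; it quotes it from \cite[Proposition 4.12]{Z2Z4Z8Rank}. You prove it directly:
\begin{itemize}
\item Shifting $j\mapsto j+2$ recovers $\cA_t$ inside $\cA_{t+6}$.
\item The two new terms satisfy $\lfloor (t+3)/2\rfloor+\lfloor t/2\rfloor+2=t+3$ in both parities.
\item The identity $((t+6)^2+6)/12=(t^2+6)/12+t+3$ holds with $t+3$ an integer, so the floor expression has the same increment.
\item The six base cases $t=3,\dots,8$ all check out.
\end{itemize}
This makes the corollary self-contained at the cost of a few lines. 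The paper's citation is shorter but leaves the identity to an external reference.

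Two slips should be fixed.
\begin{itemize}
\item Your first statement of the recurrence, with increment $t+6$, is wrong. You correct it to $t+3$ shortly afterwards, and the fragment ``$t+3+1+\dots$'' should be removed.
\item The alternative at the end is misstated. The admissible types correspond to partitions of $t$ into parts $1$, $2$ and $3$ with at least one part equal to $3$; the parts $1$ come from $t_3-1$. Partitions into parts $2$ and $3$ only are too few: for $t=6$ that gives only $3+3$, while $\cA_6=3$.
\end{itemize}
With the corrected description, removing one part $3$ leaves a partition of $t-3$ into parts at most $3$. The number of these is the nearest integer to $t^2/12$. This equals $\lfloor (t^2+6)/12\rfloor$ because $t^2\bmod 12\in\{0,1,4,9\}$, so the alternative also works.
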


\begin{proof}
By Theorem \ref{thm:complete-internal-classification}, two codes of the family
of length $2^t$ are equivalent if and only if their types coincide, so the
equivalence classes are in bijection with the admissible types, whose number
is $\cA_t$; the two expressions in \eqref{eq:number-classes} are proved to be
equal in \cite[Proposition 4.12]{Z2Z4Z8Rank}.
\end{proof}

For $3\leq t\leq15$, formula \eqref{eq:number-classes} gives
$$
\cA_t=1,\ 1,\ 2,\ 3,\ 4,\ 5,\ 7,\ 8,\ 10,\ 12,\ 14,\ 16,\ 19,
$$
in agreement with the numbers of nonequivalent codes obtained in
\cite{Z2Z4Z8Linearity} with the help of {\sc Magma} for $3\leq t\leq11$.  The
point of Corollary \ref{cor:number-classes} is that the same count is now
proved for every $t$, and that it counts equivalence classes and not merely
admissible types.  By contrast, the number of distinct pairs $(r,k)$ is
$\cA_t-1$ for $7\leq t\leq9$ and $\cA_t-2$ for $t\geq10$, by
\cite[Corollary 4.13]{Z2Z4Z8Rank}, and that difference is exactly what the
profile repairs.

Theorem \ref{thm:complete-internal-classification} makes no reference to the
rank, so it is worth recording separately what the profile does on the two
families of pairs that the rank and the dimension of the kernel cannot
separate.

\begin{corollary}\label{cor:collision-profiles}
For every $t\geq7$ in the first line below, and for every $t\geq10$ in the
second one,
$$
\begin{aligned}
\cR\bigl(H^{1,2,t-6}\bigr)&=\{\!\{5^{[2^{t-4}]}\}\!\},&
\cR\bigl(H^{2,0,t-5}\bigr)&=\{\!\{6^{[2^{t-6}]},5^{[3\cdot2^{t-6}]}\}\!\},\\
\cR\bigl(H^{2,2,t-9}\bigr)&=\{\!\{8^{[2^{t-8}]},7^{[3\cdot2^{t-8}]}\}\!\},&
\cR\bigl(H^{3,0,t-8}\bigr)&=\{\!\{10^{[2^{t-9}]},8^{[7\cdot2^{t-9}]}\}\!\}.
\end{aligned}
$$
In particular, the two members of each of the two collision families of
Proposition \ref{prop:all-collisions} have different profiles, and hence they
are nonequivalent, for every $t\geq7$ and every $t\geq10$, respectively.
\end{corollary}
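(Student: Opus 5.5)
The plan is to specialise Theorem \ref{thm:profile} to the four types occurring in Proposition \ref{prop:all-collisions}. Every one of them is nonlinear: $(t_1,t_2)$ is $(1,2)$, $(2,0)$, $(2,2)$ or $(3,0)$, never $(1,0)$. For each type I first check that the third parameter is at least $1$ in the stated range. This gives $t-6\geq1$ and $t-5\geq2$ for $t\geq7$, and $t-9\geq1$ and $t-8\geq2$ for $t\geq10$. I also check that $3t_1+2t_2+t_3=t+1$ holds in each case, so that all four codes have length $2^t$. Formula \eqref{eq:profile-formula} then applies directly.

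Next I compute $\rho_0=t_2+\binom{t_1+2}{2}$ and $\rho_1=t_2+2+\binom{t_1+1}{2}$ from \eqref{eq:rho-def}, together with the exponent $t_2+t_3-1$ in the multiplicities.
\begin{itemize}
\item For $(1,2,t-6)$, we have $t_1=1$, so \eqref{eq:profile-t1-one} gives the single value $t_2+3=5$ with multiplicity $2^{t_2+t_3}=2^{t-4}$.
\item For $(2,0,t-5)$, we get $\rho_0=6$ and $\rho_1=2+3=5$, with exponent $t-6$ and coefficient $2^2-1=3$.
\item For $(2,2,t-9)$, we get $\rho_0=2+6=8$ and $\rho_1=2+2+3=7$, with exponent $2+t-9-1=t-8$ and coefficient $3$.
\item For $(3,0,t-8)$, we get $\rho_0=\binom{5}{2}=10$ and $\rho_1=2+\binom{4}{2}=8$, with exponent $t-9$ and coefficient $2^3-1=7$.
\end{itemize}
These are exactly the four displayed multisets.

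For the final assertion, I compare the profiles within each pair. In the first pair, one profile is constant and the other takes two values. In the second pair, the sets of values are $\{7,8\}$ and $\{8,10\}$, which differ. Hence the multisets are distinct. By Lemma \ref{lem:profile-invariant}, equivalent codes have equal profiles, so the two codes in each pair are nonequivalent for every admissible $t$.

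No step is a real obstacle, since everything reduces to the earlier theorem. The only care needed is in the range conditions, making sure $t_3\geq1$ so that Theorem \ref{thm:profile} applies, and in the exponent arithmetic. An alternative for the last step is simply to invoke Theorem \ref{thm:complete-internal-classification}, but the explicit profiles give a direct argument.
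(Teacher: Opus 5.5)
Your proposal is correct and follows essentially the same route as the paper's proof: you specialise Theorem~\ref{thm:profile} via \eqref{eq:rho-def} and \eqref{eq:profile-formula} to the four types, then separate each pair (constant versus nonconstant, and the value sets $\{7,8\}$ versus $\{8,10\}$) and apply Lemma~\ref{lem:profile-invariant}. Your explicit checks that $t_3\geq1$ and $3t_1+2t_2+t_3=t+1$ hold in the stated ranges are a small addition the paper leaves implicit.
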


\begin{proof}
We apply Theorem \ref{thm:profile} to the four types in turn, using
\eqref{eq:rho-def} for the two values and \eqref{eq:profile-formula} for the
two multiplicities.  For $(1,2,t-6)$, the profile is constant by
\eqref{eq:profile-t1-one}, with value $5$ and multiplicity $2^{t-4}$.  For
$(2,0,t-5)$, we get $\rho_0=6$ and $\rho_1=5$, with multiplicities $2^{t-6}$
and $3\cdot2^{t-6}$.  For $(2,2,t-9)$, we get $\rho_0=8$ and $\rho_1=7$, with
multiplicities $2^{t-8}$ and $3\cdot2^{t-8}$.  For $(3,0,t-8)$, we get
$\rho_0=10$ and $\rho_1=8$, with multiplicities $2^{t-9}$ and
$7\cdot2^{t-9}$.

In the first pair, one profile is constant and the other is not, and in the
second pair, the two profiles take different values; thus, in both cases the two
multisets differ, so Lemma \ref{lem:profile-invariant} shows that the two
codes are nonequivalent.
\end{proof}

It is instructive to see why the profile succeeds where the rank fails.  Both
invariants are built from the linear span $\langle C\rangle$, but the rank
reports only its total dimension, whereas the profile reports how that
dimension is distributed over the kernel blocks.  In the pair
$\bigl(H^{1,2,t-6},H^{2,0,t-5}\bigr)$ the two spans have the same dimension
$t+6$ and the two codes have the same number $2^{t-4}$ of kernel blocks, of
the same size $2^{4}$; but the blocks of $H^{1,2,t-6}$ all carry local rank
$5$, while those of $H^{2,0,t-5}$ carry local rank $6$ or $5$ according to the
label.  The same total dimension is distributed in two genuinely different
ways, and the multiset \eqref{eq:profile} is precisely what records the
difference.

Table \ref{tab:invariants} lists the three invariants for all the members of
the family of length $2^t$ with $5\leq t\leq12$.  It is the analogue, for the
present classification, of Tables 2--4 of \cite{Z2Z4Z8Linearity}, with two
differences: the ranks come from Theorem \ref{thm:rank-main} instead of from a
computer computation, and the profile column makes every nonequivalence
visible without any further test.  The codes marked with $\bowtie$ are those
belonging to a pair predicted by Proposition \ref{prop:all-collisions}, and
each such pair had to be separated with {\sc Magma} in
\cite{Z2Z4Z8Linearity}.  At
$t=9$, for instance, the profiles of $H^{1,2,3}$ and $H^{2,0,4}$ are
respectively constant and nonconstant, which answers the question left open
in Example \ref{ex:no-kernel-no-rank}; together with Corollary
\ref{cor:number-classes}, this confirms that there are $\cA_9=7$ classes at
that length.

\begin{table}[ht]
\scriptsize
\centering
\caption{Rank $r$, dimension of the kernel $k$ and kernel-block rank profile
$\cR$ of the $\Z_2\Z_4\Z_8$-linear Hadamard codes $H^{t_1,t_2,t_3}$ of length
$2^t$, for $5\leq t\leq9$ (left) and $10\leq t\leq12$ (right).  The unique
linear code of each length is marked with an asterisk, and the codes belonging
to a pair with equal $(r,k)$ are marked with $\bowtie$.  Items (ii) and (iii)
of Theorem
\ref{thm:recalled-linearity-kernel} do not apply to the linear code
$H^{1,0,t-2}$, for which $r=k=t+1$.}
\label{tab:invariants}
\begin{minipage}[t]{0.46\textwidth}
\vspace{0pt}
\centering
\begin{tabular}{@{}clccl@{}}
\toprule
$t$ & Code & $r$ & $k$ & $\cR$\\
\midrule
\multirow{2}{*}{$5$}
 & $H^{1,0,3}{}^{*}$ & $6$ & $6$ & $\{\!\{1^{[32]}\}\!\}$\\
 & $H^{1,1,1}$ & $8$ & $3$ & $\{\!\{4^{[4]}\}\!\}$\\
\midrule
\multirow{3}{*}{$6$}
 & $H^{1,0,4}{}^{*}$ & $7$ & $7$ & $\{\!\{1^{[64]}\}\!\}$\\
 & $H^{1,1,2}$ & $9$ & $4$ & $\{\!\{4^{[8]}\}\!\}$\\
 & $H^{2,0,1}$ & $12$ & $3$ & $\{\!\{6^{[1]},5^{[3]}\}\!\}$\\
\midrule
\multirow{4}{*}{$7$}
 & $H^{1,0,5}{}^{*}$ & $8$ & $8$ & $\{\!\{1^{[128]}\}\!\}$\\
 & $H^{1,1,3}$ & $10$ & $5$ & $\{\!\{4^{[16]}\}\!\}$\\
 & $H^{1,2,1}{}^{\bowtie}$ & $13$ & $4$ & $\{\!\{5^{[8]}\}\!\}$\\
 & $H^{2,0,2}{}^{\bowtie}$ & $13$ & $4$ & $\{\!\{6^{[2]},5^{[6]}\}\!\}$\\
\midrule
\multirow{5}{*}{$8$}
 & $H^{1,0,6}{}^{*}$ & $9$ & $9$ & $\{\!\{1^{[256]}\}\!\}$\\
 & $H^{1,1,4}$ & $11$ & $6$ & $\{\!\{4^{[32]}\}\!\}$\\
 & $H^{1,2,2}{}^{\bowtie}$ & $14$ & $5$ & $\{\!\{5^{[16]}\}\!\}$\\
 & $H^{2,0,3}{}^{\bowtie}$ & $14$ & $5$ & $\{\!\{6^{[4]},5^{[12]}\}\!\}$\\
 & $H^{2,1,1}$ & $19$ & $4$ & $\{\!\{7^{[2]},6^{[6]}\}\!\}$\\
\midrule
\multirow{7}{*}{$9$}
 & $H^{1,0,7}{}^{*}$ & $10$ & $10$ & $\{\!\{1^{[512]}\}\!\}$\\
 & $H^{1,1,5}$ & $12$ & $7$ & $\{\!\{4^{[64]}\}\!\}$\\
 & $H^{1,2,3}{}^{\bowtie}$ & $15$ & $6$ & $\{\!\{5^{[32]}\}\!\}$\\
 & $H^{1,3,1}$ & $19$ & $5$ & $\{\!\{6^{[16]}\}\!\}$\\
 & $H^{2,0,4}{}^{\bowtie}$ & $15$ & $6$ & $\{\!\{6^{[8]},5^{[24]}\}\!\}$\\
 & $H^{2,1,2}$ & $20$ & $5$ & $\{\!\{7^{[4]},6^{[12]}\}\!\}$\\
 & $H^{3,0,1}$ & $26$ & $4$ & $\{\!\{10^{[1]},8^{[7]}\}\!\}$\\
\bottomrule
\end{tabular}
\end{minipage}
\hfill
\begin{minipage}[t]{0.52\textwidth}
\vspace{0pt}
\centering
\begin{tabular}{@{}clccl@{}}
\toprule
$t$ & Code & $r$ & $k$ & $\cR$\\
\midrule
\multirow{8}{*}{$10$}
 & $H^{1,0,8}{}^{*}$ & $11$ & $11$ & $\{\!\{1^{[1024]}\}\!\}$\\
 & $H^{1,1,6}$ & $13$ & $8$ & $\{\!\{4^{[128]}\}\!\}$\\
 & $H^{1,2,4}{}^{\bowtie}$ & $16$ & $7$ & $\{\!\{5^{[64]}\}\!\}$\\
 & $H^{1,3,2}$ & $20$ & $6$ & $\{\!\{6^{[32]}\}\!\}$\\
 & $H^{2,0,5}{}^{\bowtie}$ & $16$ & $7$ & $\{\!\{6^{[16]},5^{[48]}\}\!\}$\\
 & $H^{2,1,3}$ & $21$ & $6$ & $\{\!\{7^{[8]},6^{[24]}\}\!\}$\\
 & $H^{2,2,1}{}^{\bowtie}$ & $27$ & $5$ & $\{\!\{8^{[4]},7^{[12]}\}\!\}$\\
 & $H^{3,0,2}{}^{\bowtie}$ & $27$ & $5$ & $\{\!\{10^{[2]},8^{[14]}\}\!\}$\\
\midrule
\multirow{10}{*}{$11$}
 & $H^{1,0,9}{}^{*}$ & $12$ & $12$ & $\{\!\{1^{[2048]}\}\!\}$\\
 & $H^{1,1,7}$ & $14$ & $9$ & $\{\!\{4^{[256]}\}\!\}$\\
 & $H^{1,2,5}{}^{\bowtie}$ & $17$ & $8$ & $\{\!\{5^{[128]}\}\!\}$\\
 & $H^{1,3,3}$ & $21$ & $7$ & $\{\!\{6^{[64]}\}\!\}$\\
 & $H^{1,4,1}$ & $26$ & $6$ & $\{\!\{7^{[32]}\}\!\}$\\
 & $H^{2,0,6}{}^{\bowtie}$ & $17$ & $8$ & $\{\!\{6^{[32]},5^{[96]}\}\!\}$\\
 & $H^{2,1,4}$ & $22$ & $7$ & $\{\!\{7^{[16]},6^{[48]}\}\!\}$\\
 & $H^{2,2,2}{}^{\bowtie}$ & $28$ & $6$ & $\{\!\{8^{[8]},7^{[24]}\}\!\}$\\
 & $H^{3,0,3}{}^{\bowtie}$ & $28$ & $6$ & $\{\!\{10^{[4]},8^{[28]}\}\!\}$\\
 & $H^{3,1,1}$ & $37$ & $5$ & $\{\!\{11^{[2]},9^{[14]}\}\!\}$\\
\midrule
\multirow{12}{*}{$12$}
 & $H^{1,0,10}{}^{*}$ & $13$ & $13$ & $\{\!\{1^{[4096]}\}\!\}$\\
 & $H^{1,1,8}$ & $15$ & $10$ & $\{\!\{4^{[512]}\}\!\}$\\
 & $H^{1,2,6}{}^{\bowtie}$ & $18$ & $9$ & $\{\!\{5^{[256]}\}\!\}$\\
 & $H^{1,3,4}$ & $22$ & $8$ & $\{\!\{6^{[128]}\}\!\}$\\
 & $H^{1,4,2}$ & $27$ & $7$ & $\{\!\{7^{[64]}\}\!\}$\\
 & $H^{2,0,7}{}^{\bowtie}$ & $18$ & $9$ & $\{\!\{6^{[64]},5^{[192]}\}\!\}$\\
 & $H^{2,1,5}$ & $23$ & $8$ & $\{\!\{7^{[32]},6^{[96]}\}\!\}$\\
 & $H^{2,2,3}{}^{\bowtie}$ & $29$ & $7$ & $\{\!\{8^{[16]},7^{[48]}\}\!\}$\\
 & $H^{2,3,1}$ & $36$ & $6$ & $\{\!\{9^{[8]},8^{[24]}\}\!\}$\\
 & $H^{3,0,4}{}^{\bowtie}$ & $29$ & $7$ & $\{\!\{10^{[8]},8^{[56]}\}\!\}$\\
 & $H^{3,1,2}$ & $38$ & $6$ & $\{\!\{11^{[4]},9^{[28]}\}\!\}$\\
 & $H^{4,0,1}$ & $49$ & $5$ & $\{\!\{15^{[1]},12^{[15]}\}\!\}$\\
\bottomrule
\end{tabular}
\end{minipage}
\end{table}

%%%%%%%%%%%%%%%%%%%%%%%%%%%%%%%%%%%%%%%%%%%%%%%%%%%%%%%%%%%%%%%%%%%%%%%%%%%%%
\section{Conclusions and further research}\label{sec:conclusions}
%%%%%%%%%%%%%%%%%%%%%%%%%%%%%%%%%%%%%%%%%%%%%%%%%%%%%%%%%%%%%%%%%%%%%%%%%%%%%

We have introduced the kernel-block rank profile, an equivalence invariant
that couples the kernel of a binary code with the linear span: the kernel
partitions the coordinate set into blocks, and the invariant records the
multiset of the dimensions of the span punctured on those blocks.  We have
then computed it, in Theorem \ref{thm:profile}, for the family of
$\Z_2\Z_4\Z_8$-linear Hadamard codes $H^{t_1,t_2,t_3}$ constructed recursively
in \cite{Z2Z4Z8Construction}, whose linearity and kernel were determined in
\cite{Z2Z4Z8Linearity} and whose rank was determined in \cite{Z2Z4Z8Rank}.
The kernel partition has $2^{t_1+t_2+t_3-1}$ blocks, all of the same size
$2^{2t_1+t_2}$, and the profile takes at most two values, whose difference is
exactly $t_1-1$.  It therefore recovers $t_1$; the length and the dimension of
the kernel then recover $t_2$ and $t_3$.  Hence, two codes of the family with
the same length are equivalent if and only if their types coincide, and the
number of equivalence classes of length $2^t$ is $\lfloor(t^2+6)/12\rfloor$
for every $t\geq3$.  In particular, the two infinite families of pairs that
share the length, the rank and the dimension of the kernel are separated
uniformly, and every equivalence test carried out with {\sc Magma} in
\cite{Z2Z4Z8Linearity} is now replaced by a proof valid at every length.

Two comments delimit the scope of these statements.  First, the triple
$(t_1,t_2,t_3)$ does not determine, up to equivalence, every
$\Z_2\Z_4\Z_8$-additive Hadamard code of that abstract type: a generator
matrix of type $(4,6,12;2,0,1)$ whose Gray image has rank $13$, rather than the 
value $12$ of $H^{2,0,1}$, is exhibited in \cite{Z2Z4Z8Construction}.  All the
statements of this paper therefore concern the recursively constructed family,
together with the constructions already proved in \cite{Z2Z4Z8Construction} to
give permutation equivalent codes, and classifying all the
$\Z_2\Z_4\Z_8$-additive Hadamard codes of a given abstract type remains open.

Second, the classification obtained here is internal. It compares the codes
$H^{t_1,t_2,t_3}$ with one another.  A classification statement is, however,
only complete once these codes have also been compared with the Hadamard codes
that were already known, namely the $\Z_4$-linear, the $\Z_2\Z_4$-linear and
the $\Z_{2^s}$-linear ones.  That comparison needs the kernel-block rank
profiles of the comparison families, which are not computed here.  Once they are
known, the nonconstancy established above already separates every code
$H^{t_1,t_2,t_3}$ with $t_1\geq2$ from all of them at once, and only the case
$t_1=1$, where the profile is constant, requires a further argument.

Several natural questions remain.  The first one concerns ranks on unions of
several blocks.  The profile is the first level of a natural hierarchy: for
every $\ell\geq1$, we may consider the multiset of the numbers
$\dim\langle C\rangle|_{B_1\cup\cdots\cup B_\ell}$, taken over all the
$\ell$-subsets of kernel blocks.  For $\ell=1$, this is the kernel-block rank
profile studied here.  Whether the resulting refinement is strictly finer for
$\ell=2$, and how much of the structure of an arbitrary additive code is
recovered by the whole hierarchy, would be interesting to know.

The second question concerns other alphabets.  It would be natural to
generalize the construction and the present classification to
$\Z_2\Z_4\cdots\Z_{2^s}$-linear Hadamard codes with all the $\alpha_i$
different from zero, or even to $\Z_p\Z_{p^2}\cdots\Z_{p^s}$-linear
generalized Hadamard codes with $p$ prime, in the spirit of
\cite{HadamardZps,ZpZp2Construction,ZpZp2Classification}.  Definition
\ref{def:kernel-block-profile} applies verbatim in that generality and Lemma
\ref{lem:profile-invariant} holds without any change, so the invariant is
available. What has to be redone is the computation of the labels and of the
local ranks.  In the same direction, the existence of $\Z_4\Z_8$-additive
Hadamard codes, that is, the case $\alpha_1=0$, $\alpha_2\not=0$,
$\alpha_3\not=0$, is still open, whereas the case $\alpha_1\not=0$,
$\alpha_2=0$, $\alpha_3\not=0$ cannot occur \cite{Z2Z4Z8Linearity}.
%%%%%%%%%%%%%%%%%%%%%%%%%%%%%%%%%%%%%%%%%%%%%%%%%%%%%%%%%%%%%%%%%%%%%%%%%%%%%

\bibliographystyle{elsarticle-num}
\bibliography{manuscript}

@book{Key,
  title={Designs and Their Codes},
  author={Assmus, Edvard F and Key, Jennifer D},
  @number={103},
  year={1994},
  publisher={Cambridge University Press}
}

@article{BGH83,
  title={Algebraic techniques for nonlinear codes},
  author={Bauer, Heiko and Ganter, Bernhard and Hergert, Ferdinand},
  journal={Combinatorica},
  volume={3},
  number={1},
  pages={21--33},
  year={1983},
  publisher={Springer}
}

@article{PRV06,
  title={On the additive ({$\mathbb{Z}_4$}-linear and non-{$\mathbb{Z}_4$}-linear) {H}adamard codes: rank and kernel},
  author={Phelps, Kevin T and Rif{\`a}, Joseph and Villanueva, Merc{\`e}},
  journal={IEEE Transactions on Information Theory},
  volume={52},
  number={1},
  pages={316--319},
  year={2006},
  publisher={IEEE Press Piscataway, NJ, USA}
}

@article{ccsg,
  title={{$\mathbb{Z}_2\mathbb{Z}_4$}-linear codes: generator matrices and duality},
  author={Borges, Joaquim and Fern{\'a}ndez-C{\'o}rdoba, Cristina and Pujol, Jaume and Rif{\`a}, Josep and Villanueva, Merc{\`e}},
  journal={Designs, Codes and Cryptography},
  volume={54},
  number={2},
  pages={167--179},
  year={2010},
  publisher={Springer}
}

@book{BookZ2Z4,
  author    = {Joaquim Borges and
               Cristina Fern{\'{a}}ndez{-}C{\'{o}}rdoba and
               Jaume Pujol and
               Josep Rif{\`{a}} and
               Merc{\`{e}} Villanueva},
  title     = {$\Z_2\Z_4$-Linear Codes},
  publisher = {Springer},
  year      = {2022},
  isbn      = {978-3-031-05440-2},
}

@article{Codes2k,
  title={Every {$\mathbb{Z}_{2^k}$}-code is a binary propelinear code},
  author={Borges, Joaquim and Fern{\'a}ndez-C{\'o}rdoba, Cristina and Rif{\`a}, Josep},
  journal={Electronic Notes in Discrete Mathematics},
  volume={10},
  pages={100--102},
  year={2001},
  publisher={Elsevier}
}

@article{Magma,
  title={Handbook of {M}agma functions},
  author={Bosma, Wieb and Cannon, John J and Fieker, C and Steel, A},
  journal={Edition},
  volume={2.25},
  year={2020},
  url={http://magma.maths.usyd.edu.au/magma/},
  publisher={Unknown Publisher}
}

@article{Carlet,
  title={{$\mathbb{Z}_{2^k}$}-linear codes},
  author={Carlet, Claude},
  journal={IEEE Transactions on Information Theory},
  volume={44},
  number={4},
  pages={1543--1547},
  year={1998}
}

@article{Sole,
  title={The {$\mathbb{Z}_4$}-linearity of {K}erdock, {P}reparata, {G}oethals, and related codes},
  author={Hammons, A Roger and Kumar, P Vijay and Calderbank, A Robert and Sloane, Neil JA and Sol{\'e}, Patrick},
  journal={IEEE Transactions on Information Theory},
  volume={40},
  number={2},
  pages={301--319},
  year={1994},
  publisher={IEEE}
}

@article{Kro:2001:Z4_Had_Perf,
  title={{$\mathbb{Z}_4$}-linear {H}adamard and extended perfect codes},
  author={Krotov, Denis S},
  journal={Electronic Notes in Discrete Mathematics},
  volume={6},
  pages={107--112},
  year={2001},
  publisher={Elsevier}
}

@article{KV2015,
  title={Classification of the {$\mathbb{Z}_2\mathbb{Z}_4$}-linear {H}adamard codes and their automorphism groups},
  author={Krotov, Denis S and Villanueva, Merc{\`e}},
  journal={IEEE Transactions on Information Theory},
  volume={61},
  number={2},
  pages={887--894},
  year={2015},
  publisher={IEEE}
}

@book{WMcwill,
  title={The Theory of Error-correcting Codes},
  author={MacWilliams, Florence Jessie and Sloane, Neil James Alexander},
  year={1977},
  publisher={Elsevier}
}

@article{TwoWeightSole,
  title={On two-weight {$\mathbb{Z}_{2^k}$}-codes},
  author={Shi, Minjia and Sepasdar, Zahra and Alahmadi, Adel and Sol{\'e}, Patrick},
  journal={Designs, Codes and Cryptography},
  volume={86},
  number={6},
  pages={1201--1209},
  year={2018},
  publisher={Springer}
}

@article{Blake,
  title={Codes over integer residue rings},
  author={Blake, Ian F},
  journal={Information and Control},
  volume={29},
  number={4},
  pages={295--300},
  year={1975},
  publisher={Elsevier}
}

@article{HadamardZps,
  title={On the Linearity and Classification of {$\mathbb{Z}_{p^s}$}-Linear Generalized {H}adamard Codes},
  author={Bhunia, Dipak Kumar and Fern{\'a}ndez-C{\'o}rdoba, Cristina and Villanueva, Merc{\`e}},
  journal={Designs, Codes and Cryptography},
  volume={90},
  number={},
  pages={1037--1058},
  year={2022},
  publisher={Springer}
}

@article{ZpsEquivalance,
  title={On the Equivalence of {$\mathbb{Z}_{p^s}$}-Linear Generalized {H}adamard Codes},
  author={Bhunia, Dipak Kumar and Fern{\'a}ndez-C{\'o}rdoba, Cristina and Vela, Carlos and  Villanueva, Merc{\`e}},
  journal={Designs, Codes and Cryptography},
  volume={92},
  number={},
  pages={999--1022},
  year={2024},
  publisher={Springer}
}

@article{ZpZp2Construction,
  title={On the Constructions of  {$\mathbb{Z}_p\mathbb{Z}_{p^2}$}-Linear Generalized {H}adamard Codes},
  author={Bhunia, Dipak Kumar and Fern{\'a}ndez-C{\'o}rdoba, Cristina and Villanueva, Merc{\`e}},
  journal={Finite Fields and Their Applications},
  volume={83},
  number={},
  pages={102093},
  year={2022},
  publisher={Elsevier}
}

@article{ZpZp2Classification,
  title={Linearity and Classification of {$\Z_p\Z_{p^2}$}-Linear Generalized {H}adamard Codes},
  author={Bhunia, Dipak Kumar and Fern{\'a}ndez-C{\'o}rdoba, Cristina and Villanueva, Merc{\`e}},
  journal={Finite Fields and Their Applications},
  volume={86},
  number={},
  pages={102140},
  year={2023},
  publisher={Elsevier}
}

@article{Z2Z4Z8Construction,
  title={On recursive constructions of {$\Z_2\Z_4\Z_8$}-linear {H}adamard codes},
  author={Bhunia, Dipak Kumar and Fern{\'a}ndez-C{\'o}rdoba, Cristina and Villanueva, Merc{\`e}},
  journal={Advances in Mathematics of Communications},
  volume={18},
  number={2},
  pages={455--479},
  year={2024},
  publisher={American Institute of Mathematical Sciences}
}

@article{Z2Z4Z8Linearity,
  title={Linearity and classification of {$\Z_2\Z_4\Z_8$}-linear {H}adamard codes},
  author={Bhunia, Dipak Kumar and Fern{\'a}ndez-C{\'o}rdoba, Cristina and  Villanueva, Merc{\`e}},
  journal={Designs, Codes and Cryptography},
  volume={93},
  number={},
  pages={4567--4594},
  year={2025},
  publisher={Springer}
}

@article{Z2Z4Z8Rank,
  title={Rank and classification of {$\Z_2\Z_4\Z_8$}-linear {H}adamard codes},
  author={Bhunia, Dipak Kumar},
  journal={preprint arXiv:2609.06244},
  volume={},
  number={},
  pages={},
  year={2026},
  publisher={}
}

@article{dougherty,
  title={Codes over {$\mathbb{Z}_{2^k}$}, {G}ray map and self-dual codes},
  author={Dougherty, Steven T and Fern{\'a}ndez-C{\'o}rdoba, Cristina},
  journal={Advances in Mathematics of Communications},
  volume={5},
  number={4},
  pages={571--588},
  year={2011},
  publisher={American Institute of Mathematical Sciences}
}

@article{EquivZ2s,
  title={Equivalences among {$\mathbb{Z}_{2^s}$}-linear {H}adamard codes},
  author={Fern{\'a}ndez-C{\'o}rdoba, Cristina and Vela, Carlos and Villanueva, Merc{\`e}},
  journal={Discrete Mathematics},
  volume={343},
  number={3},
  pages={111721},
  year={2020},
  publisher={Elsevier}
}

@article{fernandez2019mathbb,
  title={On {$\mathbb{Z}_{8}$}-linear {H}adamard codes: rank and classification},
  author={Fern{\'a}ndez-C{\'o}rdoba, Cristina and Vela, Carlos and Villanueva, Merc{\`e}},
  journal={IEEE Transactions on Information Theory},
  volume={66},
  number={2},
  pages={970--982},
  year={2019},
  publisher={IEEE}
}

@article{KernelZ2s,
  title={On {$\mathbb{Z}_{2^s}$}-linear {H}adamard codes: kernel and partial classification},
  author={Fern{\'a}ndez-C{\'o}rdoba, Cristina and Vela, Carlos and Villanueva, Merc{\`e}},
  journal={Designs, Codes and Cryptography},
  volume={87},
  number={2-3},
  pages={417--435},
  year={2019},
  publisher={Springer}
}

@article{Krotov:2007,
  title={On  {$\mathbb{Z}_{2^k}$}-dual binary codes},
  author={Krotov, Denis S},
  journal={IEEE Transactions on Information Theory},
  volume={53},
  number={4},
  pages={1532--1537},
  year={2007},
  publisher={IEEE}
}

@article{Nechaev,
  title={Weighted modules and representations of codes},
  author={Honold, Th. and Nechaev, Aleksandr Aleksandrovich},
  journal={Probl. Inf. Transm.},
  volume={35},
  number={3},
  pages={205--223},
  year={1999},
  publisher={}
}

@article{Shankar,
  title={On {BCH} codes over arbitrary integer rings},
  author={Shankar, Priti},
  journal={IEEE Transactions on Information Theory},
  volume={25},
  number={4},
  pages={480--483},
  year={1979},
  publisher={IEEE}
}

@article{ShiKrotov2019,
  title={On  {$\mathbb{Z}_p\mathbb{Z}_{p^k}$}-additive codes and their duality},
  author={Shi, Minjia and Wu, Rongsheng and Krotov, Denis S},
  journal={IEEE Transactions on Information Theory},
  volume={65},
  number={6},
  pages={3841--3847},
  year={2019},
  publisher={IEEE}
}

@article{ShiTwoHomWeight,
  author    = {Minjia Shi and
               Thomas Honold and
               Patrick Sol{\'{e}} and
               Yunzhen Qiu and
               Rongsheng Wu and
               Zahra Sepasdar},
  title     = {The geometry of two-weight codes over {$\mathbb{Z}_{p^m}$}},
  journal   = {{IEEE} Transactions on Information Theory},
  volume    = {67},
  number    = {12},
  pages     = {7769--7781},
  year      = {2021}
}

@article{H07,
  author    = {Horadam, K. J.},
  title     = {{H}adamard matrices and their applications},
  journal   = {Princeton University Press},
  volume    = {},
  number    = {},
  pages     = {},
  year      = {2007}
}

@article{SBT98,
  author    = {Smith, E. D. J. and Blaikie, R. J. and Taylor D. P.},
  title     = {Performance enhancement of spectral-amplitude coding optical CDMA using pulse-position modulation},
  journal   = {{IEEE} Transactions on Communications},
  volume    = {46},
  number    = {},
  pages     = {1176--1185},
  year      = {1998}
}

@article{HYT04,
  author    = {Huang, J. F. and Yang, C. C. and Tseng, S. P.},
  title     = {Complementary {W}alsh-{H}adamard coded optical {CDMA} coder/decoders structured over arrayed-waveguide grating routers},
  journal   = {Opt. Commun.},
  volume    = {229},
  number    = {},
  pages     = {241--248},
  year      = {2004}
}

@article{Nyb91,
  author    = {Nyberg, K.},
  title     = {Perfect nonlinear {S}-boxes},
  journal   = {{EUROCRYPT-91}},
  volume    = {LNCS 547},
  number    = {},
  pages     = {378--385},
  year      = {1991},
 publisher={Springer}
}

@article{YLL03,
  author    = {Yu, G. J. and Lu, C. S. and Liao, H. Y.},
  title     = {A message-based cocktail watermarking system},
  journal   = {Pattern Recognition},
  volume    = {36},
  number    = {},
  pages     = {957--968},
  year      = {2003}
}

\end{document}